\newtheorem{theorem}{Theorem}
\newtheorem{lemma}[theorem]{Lemma}
\def\shadowbox{\hbox{\rule[-0.0ex]{0.1ex}{1.2ex}%
\hspace{-0.1ex}\rule[-0.0ex]{1.2ex}{0.1ex}%
\hspace{0.0ex}\rule[-0.0ex]{0.1ex}{1.2ex}\hspace{-1.3ex}%
\rule[1.15ex]{1.25ex}{0.1ex}\hspace{-0.0ex}\rule[-0.25ex]{0.3ex}{1.1ex}%
\hspace{-1.2ex}\rule[-0.25ex]{1.1ex}{0.25ex}}}
\def\qed{\ifmmode \hbox{\hfill\shadowbox}
     \else \hphantom{x}\hfill\shadowbox \fi}
\def\QED{\mbox{\phantom{m}}\nolinebreak\hfill$\,\shadowbox$}
\def\va{{\bf a}} \def\vb{{\bf b}}
  \def\vs{{\bf s}} 
 \def\vv{{\bf v}} \def\vw{{\bf w}} \def\vx{{\bf x}} 
\def\vy{{\bf y}} \def\vz{{\bf z}}
\def\vA{{\bf A}} \def\vB{{\bf B}}  \def\vD{{\bf D}}
\def\vI{{\bf I}}   
\def\vM{{\bf M}}   
  \def\vS{{\bf S}} \def\vT{{\bf T}}
\def\vU{{\bf U}} \def\vV{{\bf V}}   
 \def\vZ{{\bf Z}}
\def\vPsi{{\bf \Psi}}
\def\trace{{\operatorname{trace}}}
\def\vUl{{\vU^{(l)}}}
\def\vVk{{\vV^{(k)}}}
\def\eps{\varepsilon}
\def\CC{{\mathbb C}}
\def\NN{{\mathbb N}}
\def\RR{{\mathbb R}}
\def\Prob{{\mathbb P}}
\def\Exp{{\mathbb E}}
\def\Real{\operatorname{Re}}
\def\sgn{\operatorname{sgn}}
\def\supp{\operatorname{supp}}
\def\dopp{{f}}
\def\aR{{\va_R(\beta)}}
\def\aRp{{\va_R(\beta')}}
\def\aT{{\va_T(\beta)}}
\def\aTp{{\va_T(\beta')}}
\def\St{{\vS_{\tau}}}
\def\Stp{{\vS_{\tau'}}}
\def\Std{{\vS_{\tau,\dopp}}}
\def\Stdp{{\vS_{\tau',\dopp'}}}
\def\colA{{\vA_{\tau,\beta}}}
\def\colAp{{\vA_{\tau',\beta'}}}
\def\Dt{{\Delta_\beta}}
\def\Dtau{{\Delta_\tau}}
\def\op{{\text{op}}}
\def\conj#1{{\overline#1}}
\def\nfrom{{-\frac{N_R N_T}{2}}}
\def\nto{{\frac{N_R N_T}{2}-1}}
\def\vmat{{\bf \Psi}}
\def\vAt{\tilde{\vA}}
\def\vxt{{\vz}}
\def\vTs{{\vT}_{N_t}}
\def\vIT{{\vI}_{N_T}}
\def\Kmax{{K_{\max}}}
\def\SNRmin{{\text{SNR}_{\min}}}
\date{}
\title{Analysis of Sparse MIMO Radar}
\author{\begin{tabular}[t]{c@{\extracolsep{8em}}c} 
Thomas Strohmer\thanks{T.S. was supported by the National Science
Foundation under grant DMS-0811169 and by DARPA under grant N66001-11-1-4090.}
  & Benjamin Friedlander \thanks{B.F. was supported by the National Science 
Foundation under grant CCF-0725366.} \\
Dept. Mathematics & Dept. Elec. Eng. \\
UC Davis & UC Santa Cruz \\
Davis CA 95616 & Santa Cruz, CA 95064
\end{tabular}}
\begin{document}
%%%%%%%%%%%%%%%%%%%%%%%%%%%%%%%%%%%%%%%%%%%%%%%%%%%%%%%%%%%%%%%%%%%%%%
%\pagestyle{empty}
\maketitle

\begin{abstract}
We consider a multiple-input-multiple-output radar system and derive a 
theoretical framework for the recoverability of targets in the 
azimuth-range domain and the azimuth-range-Doppler domain
via sparse approximation algorithms. Using tools developed in the area of
compressive sensing, we prove bounds on the number of detectable targets 
and the achievable resolution in the presence of additive noise. 
Our theoretical findings are validated by numerical simulations.

\vspace{.2in}
\noindent {\bf Keywords:} Sparsity, Radar, Compressive Sensing, Random Matrix, MIMO

\end{abstract}

%%%%%%%%%%%%%%%%%%%%%%%%%%%%%%%%%%%%%%%%%%%%%%%%%%%%%%%%%%%%%%%%%%%
\section{Introduction} \label{s:intro}
%%%%%%%%%%%%%%%%%%%%%%%%%%%%%%%%%%%%%%%%%%%%%%%%%%%%%%%%%%%%%%%%%%%

While radar systems have been in use for many decades, radar is far from
being a `solved problem'. Indeed, exciting new developments in radar
pose great challenges both to engineers and mathematicians~\cite{CB09}. 
Two such
developments are the advent of MIMO (multi-input multi-output) radar
\cite{mimobook}, and the application of compressed sensing to radar
signal processing \cite{HS09}.

MIMO radar is characterized by using multiple antennas to simultaneously
transmit diverse, usually orthogonal, waveforms in addition to using
multiple antennas to receive the reflected signals. MIMO radar has the
potential for enhancing spatial resolution and improving interference
and jamming suppression. The ability of MIMO radar to shape the transmit
beam post facto allows for adapting the transmission based on the
received data in a way which is not possible in non-MIMO radar.

A radar system illuminates a given area and attempts to detect and
determine the location of objects of interest in its field of view, and
to estimate their strength (radar reflectivity). The space of interest
may be divided into range-azimuth (distance and direction) cells, or
range-Doppler-azimuth (distance, direction and speed) cells in the case
there is relative motion between the radar and the object. In many cases
the radar scene is sparse in the sense that only a small fraction (often
a {\sl very} small fraction) of the cells is occupied by the objects of
interest. 

Conventional radar processing does not take into account the a-priori
knowledge that the radar scene is sparse. Recent works, such as
\cite{HS09,PEPC2010} developed techniques which attempt to exploit 
this sparsity 
using tools from the area of compressed sensing~\cite{CanTao,Don}. The
exploitation of sparsity has the potential to improve the performance of
radar systems under certain conditions and is therefore of considerable
practical interest.

In this paper we study the issue of sparsity in the specific context of
a MIMO radar system employing multiple antennas at the transmitter the
receiver, where the two arrays are co-located. We note that related work
on the application of compressive sensing techniques to MIMO radar can be
found in \cite{YAP11,YAP11a}. Our emphasis here is on developing
the basic theory needed to apply sparse recovery techniques for the
detection of the locations and reflectivities of targets for MIMO
radar.  

The basic model for the problem we are considering involves a linear
measurement equation $\vy = \vA \vx +\vw$ where $\vy$ is a vector of
measurements collected by the receiver antennas over an observation
interval, $\vA$ is a measurement matrix whose columns correspond to the
signal received from a single unit-strength scatterer at a particular
range-azimuth (or range-azimuth-Doppler) cell, $\vx$ is a vector whose
elements represent the complex amplitudes of the scatterers, and $\vw$
is a noise vector. The measurement equation is assumed to be
under-determined, possibly highly under-determined. The sparsity
of the radar scene is introduced by assuming that only $K$ elements of
the vector $\vx$ are non-zero, where $K$ is much smaller than the
dimension of the vector. The measurement matrix $\vA$ embodies in it the
details of the radar system such as the transmitted waveforms
and the structure of antenna array.

In this paper we study the conditions under which this problem has a
satisfactory solution. This is a fundamental issue of both theoretical
and practical importance. More specifically, the analysis presented in
the following sections addresses the following issues:

\begin{itemize}

\item It is known from the theory of compressed sensing
\cite{CanTao,Don} that
the matrix $\vA$ must satisfy certain conditions in order that the
solution computed via an appropriate convex program will indeed coincide
with the desired sparsest solution (whose computation is in general an
NP-hard problem). In our problem the characteristics of this matrix
depend on the choice of the radar waveforms and the number and positions
of the transmit and receive antennas. We develop the results necessary
for understanding how the selection of the parameters of the radar
system affects the conditions mentioned above.

\item The ability of the algorithm to correctly detect targets depends
on the number of these targets, $K$, and the signal to noise ratio. We
show that as long as the number of the targets is less than a maximal
value $K_{\max}$, and the signal to noise is larger than some minimal
value ${\rm SNR}_{\min}$, the targets can be correctly detected with high
probability by solving an $\ell_1$-regularized least squares problem
known under the name {\em lasso}. Explicit
formulas are presented for $K_{\max}$ and ${\rm SNR}_{\min}$ as a
function of the number of transmit and receive antennas and  the number
of azimuth and range cells.

\end{itemize}

The structure of the paper is as follows. Subsection~\ref{ss:notation}
introduces notation used throughout the paper. In Section~\ref{s:setup}
we describe the problem formulation and the setup. We derive conditions
for the recovery of targets in the Doppler-free case in
Section~\ref{s:nodoppler}, and the case of detecting targets in presence of
Doppler is analyzed in Section~\ref{s:doppler}. Our theoretical results are
supported by numerical simulations, see Section~\ref{s:numerics}. We
conclude in Section~\ref{s:conclusion}. Finally, some auxiliary results are
collected in the appendices.

\subsection{Notation} \label{ss:notation}

Let $\vv \in \CC^n$. As usual, we define $\|\vv\|_1 : = \sum_{k=1}^{n}
|\vv_k|$ and $\|\vv\|_2:= \sqrt{\sum_{k=1}^{n} |\vv_k|^2}$.
For a given matrix $\vA$ we denote its $k$-th column by $\vA_k$ and
the element in the $i$-th row and $k$-th column by $\vA_{[i,k]}$.
The operator norm of $\vA$ is the largest singular value of
$\vA$ and is denoted by $\|\vA\|_{\op}$, the Frobenius norm of $\vA$
is $\|\vA\|_F = \sqrt{\sum_{i,k}|\vA_{[i,k]}|^2}$.
The coherence of $\vA$ is defined as
\begin{equation}
\mu(\vA) := \underset{k \neq l}{\max} \,\,
\frac{|\langle \vA_k, \vA_l \rangle|}{\|\vA_k\|_2 \|\vA_l\|_2}.
\label{eq:coherence}
\end{equation}
For $\vx \in \CC^n$, let
$\vT_\tau$ denote the circulant translation operator, defined by
\begin{equation}
\label{translation}
\vT_\tau \vx(l) = \vx(l-\tau),
\end{equation}
where $l-\tau$ is understood modulo $n$,
and let $\vM_f$ be the modulation operator defined by
\begin{equation}
\label{modulation}
\vM_f \vx(l) = \vx(l) e^{2\pi i lf}.
\end{equation}

%%%%%%%%%%%%%%%%%%%%%%%%%%%%%%%%%%%%%%%%%%%%%%%%%%%%%%%%%%%%%%%%%%%%%%
\section{Problem formulation and signal model}
%%%%%%%%%%%%%%%%%%%%%%%%%%%%%%%%%%%%%%%%%%%%%%%%%%%%%%%%%%%%%%%%%%%%%%
\label{s:setup}

We refer to~\cite{Rihk,CB09} for the mathematical foundations of radar 
and to~\cite{LS09} for an introduction to MIMO radar. However, the reader 
needs only a very basic knowledge of the mathematical concepts underlying 
radar to be able to follow our approach.

We consider a MIMO radar employing $N_T$ antennas at the transmitter and
$N_R$ antennas at the receiver. We assume that the element spacing is
sufficiently small so that the radar return from a given scatterer is
fully correlated across the array. In other words, this is a coherent
propagation scenario. 

To simplify the presentation we assume that the two arrays are
co-located, i.e. this is a mono-static radar. The extension to the
bi-static case is straightforward as long as the coherency assumption
holds for each array. The arrays are characterized by the array
manifolds: $\va_R(\beta)$ for the receive array and $\va_T(\beta)$ for
the transmit array, where $\beta = \sin(\theta)$ is the direction relative 
to the array. We assume that the arrays and all the scatterers are in the same
2-D plane. The extension to the 3-D case is straightforward and all of
the following results hold for that case as well.

For convenience we formulate our theorems and analysis in terms of delay 
$\tau$ instead of range $r$. This is no loss of generality, as delay and 
range are related by $\tau = 2 r/c$, with $c$ denoting the speed of light.

\subsection{The model for the azimuth-delay domain} \label{ss:nodoppler}

The $i$-th transmit antenna repeatedly transmits the signal $s_i(t)$.
Let $\vZ(t; \beta, \tau)$ be the $N_R \times N_t$ noise-free received signal 
matrix from a unit strength target at direction $\beta$ and delay $\tau$, 
where $N_t$ is the number of samples in time. Then
\begin{equation}
\vZ(t; \beta, \tau) = \va_R(\beta) \va_T^T(\beta) \vS^{T}_\tau,
\notag
\end{equation}
where $\St$ is an $N_t \times N_T$ matrix whose columns are the 
circularly delayed signals $s_i(t-\tau)$, sampled at the discrete time
points $t=n \Delta_t, n=1,\dots,N_t$. 
If $\tau =0$, we often write simply $\vS$ instead of $\vS_0$.

Assuming uniformly spaced linear arrays, the array manifolds are given
by
\begin{equation}
\label{transmit_array}
\va_T(\beta) = \left[\begin{array}{c}
1 \\
e^{j 2 \pi d_T \beta} \\
\vdots \\
e^{j 2 \pi d_T \beta (N_T-1)} \\
\end{array}\right]
\end{equation}
and
\begin{equation}
\label{receive_array}
\va_R(\beta) = \left[\begin{array}{c}
1 \\
e^{j 2 \pi d_R \beta} \\
\vdots \\
e^{j 2 \pi d_R \beta (N_R-1)} \\
\end{array}\right]
\end{equation}
where $d_T$ and $d_R$ are the normalized spacings (distance divided by
wavelength) between the elements of the transmit and receive arrays,
respectively. 

The spatial characteristics of a MIMO radar are closely
related to that of a virtual array with $N_T N_R$ antennas, whose array
manifold is $\va(\beta) = \va_T(\beta) \otimes \va_R(\beta)$. It is
known \cite{simo} that the following choices for the spacing of the
transmit and receive array spacing will yield a uniformly spaced virtual
array with half wavelength spacing:
\begin{eqnarray}
d_R = 0.5, d_T = 0.5 N_R; \label{spacing1} \\
d_T = 0.5, d_R = 0.5 N_T. \label{spacing2}
\notag
\end{eqnarray}
Both of these choices lead to a virtual array whose aperture is $0.5
(N_T N_R -1)$ wavelengths. This is the largest virtual aperture free of
grating lobes. The choices~\eqref{spacing1} and~\eqref{spacing2} will
also show up in our theoretical analysis, e.g.\ see Theorem~\ref{th:lasso}.

Next let $\vz(t; \beta, \tau) = {\rm vec}\{\vZ\}(t; \beta, \tau)$ be the noise-free
vectorized received signal. We set up a discrete delay-azimuth grid
$\{(\beta_i,\tau_j)\}, 1\le i\le N_\beta, 1\le j\le N_\tau$, where
$\Delta_\beta$ and $\Delta_\tau$ denote the corresponding discretization
stepsizes. Using vectors $\vz(t; \beta_i,\tau_j)$ for all grid points 
$(\beta_i,\tau_j)$ we construct a complete response matrix $\vA$ whose columns 
are $\vz(t; \beta_i,\tau_j)$ for $1 \le i \le  N_{\beta}$ and 
$1 \le j \le  N_\tau$. In
other words, we have   $N_\tau$ delay values and $N_{\beta}$ azimuth
values, so that $\vA$ is a $N_R N_t \times N_\tau N_{\beta}$ matrix.

Assume that the radar illuminates a scene consisting of $K$ scatterers
located on $K$ points of the $(\beta,\tau_j)$ grid. Let $\vx$ be a
sparse vector whose non-zero elements are the complex amplitudes of the
scatterers in the scene. The zero elements corresponds to grid points
which are not occupied by scatterers. We can then define the radar
signal $\vy$ received from this scene by
\begin{equation}
\label{radarsystem}
\vy = \vA \vx + \vv
\end{equation}
where $\vy$ is a $N_R N_t\times 1$  vector,  $\vx$ is a  $N_\tau N_{\beta}
\times 1 $ sparse vector, $\vv$ is a $N_R N_t\times 1$ complex Gaussian
noise vector, and  $\vA$ is a $N_R N_t\times N_\tau N_{\beta}$ matrix.

\subsection{The model for the azimuth-delay-Doppler domain} \label{ss:doppler}

The discussion so far was for the case of a stationary radar scene and a
fixed radar, in which case there is no Doppler shift. The extension of
this signal model to include the Doppler effect is conceptually
straightforward, but leads to a significant increase in the problem
dimension.

The signal model for the return from a unit strength scatterer at
direction $\beta$, delay $\tau$, and Doppler $f$ (corresponding to its
radial velocity with respect to the radar) is given by
\begin{equation}
\vZ(t; \beta, \tau,f) = \va_R(\beta) \va_T^T(\beta) \vS^{T}_{\tau,f},
\notag
\end{equation}
where $\Std$ is a $N_t \times N_T$ matrix whose columns are the 
circularly
delayed and Doppler shifted signals $s_i(t-\tau)e^{j 2 \pi f t}$.

As before we let $\vz(t; \beta, \tau, f) = {\rm vec}\{\vZ\}(t; \beta, \tau, f)$
be the noise-free vectorized received signal. We extend the discrete
delay-azimuth grid by adding a discretized Doppler component
(with stepsize $\Delta_f$ and corresponding Doppler values $f=k\Delta_f,
k=1,\dots,N_f$) and
obtain a uniform delay-azimuth-Doppler grid $\{(\beta_i,\tau_j,f_k)\}$.
Using vectors $\vz(t; \beta_i,\tau_j,f_k)$ for all discrete 
$(\beta_i,\tau_j,f_{k})$ 
we construct a complete response matrix $\vA$ whose columns are 
$\vz(t; \beta_i,\tau_j,f_k)$ for $1 \le i \le  N_{\beta}$, 
$1 \le j \le  N_\tau$, $1 \le k \le N_f$. 

Assume that the radar illuminates a scene consisting of $K$ scatterers
located on $K$ points of the $(\beta,\tau_j,f_k)$ grid.  Let $\vx$
be a sparse vector whose non-zero elements are the complex amplitudes of the scatterers
in the scene. The zero elements corresponds to grid points which are not
occupied by scatterers. We can then define the radar signal received
from this scene $\vy$ by

\begin{equation}
\label{noisysystem}
\vy = \vA \vx + \vv
\end{equation}
where $\vy$ is a $N_R N_t\times 1$  vector,  $\vx$ is a  $N_\tau N_{\beta}
N_f \times 1 $ sparse vector, $\vv$ is a $N_R N_t\times 1$ complex Gaussian
noise vector, and  $\vA$ is a $N_R N_t\times N_\tau N_{\beta} N_f$ matrix.

\subsection{The target model}

We define the {\em sign function} for a vector $z\in \CC^n$ as
\begin{equation}
\sgn(z_k) = 
\begin{cases}
z_k/|z_k| & \text{if $z_k \neq 0$,} \\
0         & \text{else.}
\end{cases}
\label{sgn}
\end{equation}

We introduce the following {\em generic $K$-sparse target model}:
\begin{itemize}
\item The support $I_K \subset \{1,\dots,N_\tau N_\beta\}$ of the $K$ nonzero
coefficients of $\vx$ is selected uniformly at random.
\item The non-zero coefficients of $\sgn(\vx)$ form a Steinhaus sequence, 
i.e., the phases of the non-zero entries of $\vx$ are random and uniformly 
distributed in $[0,2\pi)$.
\end{itemize}
We do not impose any condition on the amplitudes of the non-zero entries
of $\vx$. We do assume however that the targets are exactly located
at the discretized grid points. This is certainly an idealized assumption,
that is not satisfied in this strict sense in practice, resulting in
a ``gridding error''. We refer the reader to~\cite{HS10,CPS10} for an
initial analysis of the associated perturbation error, and to~\cite{FW12} 
for an interesting numerical approach to deal with this issue.

\subsection{The recovery algorithm -- Debiased Lasso}

A standard approach to find a sparse (and under appropriate
conditions {\em the sparsest}) solution to a noisy system
$\vy = \vA\vx + \vw$ is via
\begin{equation}
\underset{\vx}{\min}\, \frac{1}{2}\|\vA\vx - \vy\|_2^2 + \lambda \|\vx\|_1,
\label{lassoD}
\end{equation}
which is also known as lasso~\cite{Tib96}.
Here $\lambda >0$ is a regularization parameter.

In this paper we adopt the following two-step version of lasso. 
In the first step we compute an estimate $\tilde{I}$ for the support of $\vx$ 
by solving~\eqref{lassoD}. In the second step we estimate the amplitudes 
of $\vx$ by solving the reduced-size least squares problem 
$\min \|\vA_{\tilde{I}} \vx_{\tilde{I}} - \vy\|_2$, where $\vA_{\tilde{I}}$ 
is the submatrix of $\vA$ consisting of the columns corresponding
to the index set $\tilde{I}$, and similarly for $\vx_{\tilde{I}}$.
This is a standard way to ``debias'' the solution, we thus will call 
this approach in the sequel {\em debiased lasso}.

%%%%%%%%%%%%%%%%%%%%%%%%%%%%%%%%%%%%%%%%%%%%%%%%%%%%%%%%%%%%%%%%%%%%%%

%%%%%%%%%%%%%%%%%%%%%%%%%%%%%%%%%%%%%%%%%%%%%%%%%%%%%%%%%%%%%%%%%%%%%%
\section{Recovery of targets in the Doppler-free case}
%%%%%%%%%%%%%%%%%%%%%%%%%%%%%%%%%%%%%%%%%%%%%%%%%%%%%%%%%%%%%%%%%%%%%%
\label{s:nodoppler}

We assume that $s_i(t)$ is a periodic, continuous-time white Gaussian
noise signal of period-duration $T$ seconds and bandwidth $B$. The transmit
waveforms are normalized so that the total transmit power is fixed,
independent of the number of transmit antennas. Thus, we assume that
the entries of $s_i(t)$ have variance $\frac{1}{N_T}$.
It is convenient to introduce the
finite-length vector $\vs_i$ associated with $s_i$, via
$\vs_i(l) := s_i(l \Delta_t), l = 1,\dots,N_t$, where $\Delta_t = \frac{1}{2B}$
and $N_t = T/\Delta_t$.

%Le $\vD$ be the diagonal matrix defined by 
%\begin{equation}
%\vD_{(\tau,\beta),(\tau,\beta)} = \|\colA\|_2,
%\label{diagonalmatrix1}
%\end{equation}
%where $\colA$ is the $(\tau,\beta)$-th column of $\vA$.

\begin{theorem}
\label{th:lasso}
Consider $\vy = \vA \vx +\vw$, where $\vA$ is as defined in
Subsection~\ref{ss:nodoppler} and $\vw_i \in {\cal CN}(0,\sigma^2)$.
Choose the discretization stepsizes to be $\Delta_\beta = \frac{2}{N_R N_T}$ 
and $\Delta_\tau = \frac{1}{2B}$. Let $d_T = 1/2, d_R = N_T/2$ or 
$d_T = N_R/2, d_R = 1/2$, and suppose that 
\begin{equation}
N_t \ge 128, \qquad N_\tau \ge \sqrt{N_\beta}, \qquad \text{and} \qquad
\big(\log (N_\tau N_\beta) \big)^3 \le N_t.
\label{coherenceproperty2}
\end{equation}
If $\vx$ is drawn from the generic $K$-sparse target model with
\begin{equation}
K \le \Kmax := \frac{c_0 N_\tau N_R}{3N_T \log (N_\tau N_\beta)}
\label{lassosparsity1}
\end{equation}
for some constant $c_0>0$, and if
\begin{equation}
\underset{k \in I}{\min}\, |\vx_k| > 
                   \frac{10\sigma}{\sqrt{N_R N_t}} \sqrt{2 \log N_\tau N_\beta},
\label{amplitudeproperty2}
\end{equation}
then the solution $\tilde{\vx}$ of the debiased lasso computed with
$\lambda = 2 \sigma \sqrt{2 \log(N_\tau N_\beta)}$
%\begin{equation}
%\underset{\vx}{\min}\, \frac{1}{2}\|\vA\vx - \vy\|_2^2 + \lambda
%\|\vD\vx\|_1, \qquad \text{with}\,\,\lambda = 2 \sqrt{2 \log(N_\tau N_\beta)},
%\label{lassoD}
%\end{equation}
obeys
\begin{equation}
\label{support2}
\supp (\tilde{\vx}) = \supp (\vx),
\end{equation}
with probability at least 
\begin{equation}
%\label{probbound3}
\notag
(1 - p_1)(1 - p_2)(1-p_3)(1-p_4),
\end{equation}
and
\begin{equation}
\frac{\|\tilde{\vx} - \vx \|_2}{\|\vx\|_2} 
    \le \frac{\sigma \sqrt{12 N_t N_R}}{\|\vy\|_2}
\label{error2}
\end{equation}
with probability at least 
\begin{equation}
%\label{probbound3a}
\notag
(1 - p_1)(1 - p_2)(1-p_3)(1-p_4)(1 - p_5),
\end{equation}
where
$$p_1 = e^{-\frac{(1-\sqrt{1/3})^2 N_t}{2}} + N_t^{1 - C N_T},$$
$$p_2=2e^{-\frac{N_t(\sqrt{2}-1)^2}{4}}+2 (N_R N_T)^{-1}-6(N_t N_\beta)^{-1},$$
$$p_3 = e^{-\frac{(1-\sqrt{1/3})^2 N_t}{2}}, \quad
p_4 = N_R N_T e^{-\frac{N_R N_t}{25}},$$
and
$$
p_5 = 2(N_\tau N_\beta)^{-1}(2\pi \log (N_\tau N_\beta) + K(N_\tau N_\beta)^{-1})
+ {\cal O}((N_\tau N_\beta)^{-2 \log 2}).$$
\end{theorem}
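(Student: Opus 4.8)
The plan is to treat Theorem~\ref{th:lasso} as an application of the near-ideal model-selection theorem of Candès and Plan for the lasso to the specific random matrix $\vA$ arising from the MIMO model, and then to append the elementary least-squares error analysis for the debiasing step. That abstract theorem applies to a matrix with (nearly) equal column norms provided (i) the coherence satisfies $\mu(\vA) \le A_0/\log(N_\tau N_\beta)$, and (ii) the operator norm is controlled, so that the random signed/support model with $K \le c_0 (N_\tau N_\beta)/(\|\vA\|_{\op}^2 \log(N_\tau N_\beta))$ is admissible; then, with $\lambda = 2\sigma\sqrt{2\log(N_\tau N_\beta)}$ and a minimum-amplitude condition of the form \eqref{amplitudeproperty2}, it returns the correct signed support with high probability. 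So the real work is the probabilistic verification, over the randomness of the white-noise waveforms $s_i$, that $\vA$ satisfies (i)--(ii); these verifications together with the model-selection and noise-concentration steps produce the failure probabilities $p_1,\dots,p_5$, combined as in the statement.

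First I would analyze a single inner product between columns. Writing each column as $\mathrm{vec}(\aR\,\aTt\,\St^T)$, a direct computation factors the inner product of the columns indexed by $(\beta,\tau)$ and $(\beta',\tau')$ into the receive-array term $\langle \aR, \aRp\rangle$ times the waveform term $\aTt\,\St^{*}\Stp\,\conj{\aTp}$. I would then split into two regimes. For equal delays $\tau=\tau'$, concentration of $\St^{*}\St$ around $\tfrac{N_t}{N_T}\vI$ reduces the waveform term to a scalar multiple of $\langle \aT,\aTp\rangle$, so the whole inner product becomes, up to scaling, the virtual-array inner product $\langle \aR,\aRp\rangle\langle \aT,\aTp\rangle$; the spacing choice $d_T=1/2,\,d_R=N_T/2$ together with $\Delta_\beta = 2/(N_R N_T)$ makes this a Dirichlet kernel that vanishes at distinct grid points, killing the same-delay off-diagonal. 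For unequal delays the term $\St^{*}\Stp$ is a cross-correlation of independent delayed Gaussian signals, whose entries concentrate at scale $\sqrt{\log(N_\tau N_\beta)/N_t}$. A union bound over the $\ord((N_\tau N_\beta)^2)$ column pairs then yields $\mu(\vA)\lesssim \sqrt{\log(N_\tau N_\beta)/N_t}$, and the hypothesis $(\log(N_\tau N_\beta))^3\le N_t$ in \eqref{coherenceproperty2} is exactly what converts this into the required $\mu(\vA)\le A_0/\log(N_\tau N_\beta)$. This coherence estimate is the main obstacle: it is where the array geometry, the waveform statistics, and the sampling conditions must all be pushed through simultaneously, with sharp sub-Gaussian/sub-exponential concentration carrying the correct powers of the logarithm.

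The remaining matrix property is the operator-norm bound. Using that the Gram matrix is close to $N_R N_t\,\vI$ on the diagonal with the off-diagonals already controlled, together with a random-matrix concentration bound for the largest singular value (the source of a term such as $N_t^{1-CN_T}$ in $p_1$), I would bound $\|\vA\|_{\op}$ after column normalization and control the common column norm via $\chi^2$-type concentration (giving $p_3$, $p_4$). Substituting the operator-norm estimate into the admissible-sparsity threshold $K \le c_0 (N_\tau N_\beta)/(\|\vA\|_{\op}^2\log(N_\tau N_\beta))$ and using the full azimuth grid $N_\beta = N_R N_T$ reproduces the stated $\Kmax$ in \eqref{lassosparsity1}, including the $N_R/N_T$ dependence.

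Finally I would assemble the two conclusions. Given the verified hypotheses, the model-selection theorem delivers correct signed-support recovery for the first (lasso) stage: the amplitude condition \eqref{amplitudeproperty2} is precisely $\lambda$ divided by the common column norm $\sqrt{N_R N_t}$ (up to the absolute constant), which guarantees that every true nonzero survives and hence that $\supp(\tilde\vx)=\supp(\vx)$ as in \eqref{support2}. For the debiasing stage, on the correctly identified support $\tilde I=I$ one has $\tilde\vx-\vx=\vA_{\tilde I}^{\dagger}\vw$, so $\|\tilde\vx-\vx\|_2$ is controlled by the smallest singular value of the well-conditioned subdictionary $\vA_{\tilde I}$ (of order $\sqrt{N_R N_t}$, since $K$ near-orthogonal columns are selected) together with Gaussian concentration of $\|\vA_{\tilde I}^{*}\vw\|_2$; dividing by $\|\vy\|_2$ yields the relative bound \eqref{error2}. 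Collecting the failure events from the coherence, column-norm, operator-norm, and noise-concentration steps then gives the stated products of $(1-p_i)$, with the extra factor $1-p_5$ entering only for the sharper $\ell_2$-error estimate.
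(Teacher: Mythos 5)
Your overall architecture matches the paper's: verify a coherence bound and an operator-norm bound for $\vA$, feed them into the complex Cand\`es--Plan theorem (Theorem~\ref{th:CP}) applied to the column-normalized matrix $\vA\vD^{-1}$, and get the debiased error bound from the conditioning of the selected subdictionary. However, both of your matrix-property verifications contain genuine gaps, and the more serious one is the operator-norm step. Controlling $\|\vA\|_{\op}$ from ``diagonal $\approx N_RN_t$ plus already-controlled off-diagonals'' of the column Gram matrix cannot work here: $\vA^{\ast}\vA$ has $N_\tau N_\beta = N_\tau N_R N_T$ columns whose pairwise inner products are only bounded by $\ord\big(N_R\sqrt{N_t\log(N_\tau N_\beta)}\big)$, so any Gershgorin/coherence-type estimate yields $\|\vA\|^2_{\op}\lesssim N_tN_R + N_\tau N_\beta\, N_R\sqrt{N_t\log(N_\tau N_\beta)}$, which overshoots the needed bound $N_tN_RN_T(1+\log N_t)$ of Lemma~\ref{th:normbound} by roughly a factor $N_\tau N_R/\sqrt{N_t}$ (up to logarithms); and generic largest-singular-value concentration results presuppose independence across entries, rows, or columns, which $\vA$ badly violates (every column is built from circular shifts and phase combinations of the same $N_T$ waveforms). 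The paper's proof rests on an exact structural identity that your proposal never isolates: precisely because $\Delta_\beta = 2/(N_RN_T)$, $\Delta_\tau=\Delta_t$, and $d_T=1/2,\ d_R=N_T/2$, the row Gram matrix $\vA\vA^{\ast}$ is block diagonal with $N_R$ identical circulant blocks, whence $\|\vA\|^2_{\op}=\sqrt{N_t}\,\|\hat{\vb}\|_\infty$ for the first block column $\vb$, and this maximum of the Fourier transform of a Gaussian quadratic form is bounded by a Hanson--Wright tail estimate (Lemma~\ref{le:quadform}). That identification is the entire content of Lemma~\ref{th:normbound} and the source of the $N_t^{1-CN_T}$ term in $p_1$.

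The coherence step has two further gaps. First, for equal delays and distinct azimuth grid points the off-diagonals are \emph{not} ``killed'' by the virtual-array orthogonality: the Dirichlet-kernel argument only shows that the \emph{expectation} vanishes, since $\langle\aT,\aTp\rangle=0$ on the grid. The fluctuation of $\langle\vS^{\ast}\vS\,\aT,\aTp\rangle$ around this zero mean is of order $\sqrt{N_t}$ --- the same size as every other case --- and must be bounded explicitly; this is Case~(i) of Lemma~\ref{th:coherencebound}, which consumes Lemma~\ref{le:concentration3} and contributes to $p_2$. Second, you treat unequal-delay pairs as cross-correlations of \emph{independent} Gaussian signals. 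That is correct when $\beta\neq\beta'$ on the grid (orthogonality of $\aT$ and $\aTp$ makes $\vS\aT$ and $\vS\aTp$ independent Gaussian vectors, Case~(ii)), but false when $\beta=\beta'$ and $\tau\neq\tau'$: there one must bound $\sum_l \bar g_{l-\tau}\, g_l$ with both factors drawn from the \emph{same} Gaussian vector $g=\vS\aT$, the summands are not jointly independent, and Bernstein's inequality cannot be applied directly. The paper handles this (Case~(iii)) with the index-splitting/pigeonhole device of \cite{PRT07}: partition $\{0,\dots,N_t-1\}$ into two halves on each of which the products are jointly independent, apply Lemma~\ref{le:gaussianinner} to each half, and recombine. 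Without these repairs the coherence hypothesis of Theorem~\ref{th:CP} is not established, so the final assembly --- which is otherwise the paper's own --- does not yet go through.
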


\medskip
\noindent
{\bf Remark:}
\begin{itemize}
\item[(i)] 
While the expressions for the probability of success in the
above theorem are admittedly somewhat unpleasant, we point out that
the individual terms are fairly small. Moreover, the probabilities
can easily be made smaller by slightly increasing the constants
in the assumptions on $N_t,N_R,N_T$.

\item[(ii)] The assumptions in~\eqref{coherenceproperty2}
are fairly mild and easy to satisfy in practice.

\item[(iii)] We emphasize that there is no constraint on the dynamic
range of the target amplitudes. The lasso estimate will recover
all target locations correctly as long as they exceed the noise
level~\eqref{amplitudeproperty2}, regardless of the dynamical range 
between the targets. 

\item[(iv)] We note that $|x_k|^2 / \sigma^2$ is the signal-to-noise
ratio for the $k$-th scatterer at the receiver array input. The
measurement vector $\vy$ provides $N_R N_t$ measurements of $x_k$.
Therefore it is useful to define the signal-to-noise ratio  associated
with the $k$-th scatterer as  $\text{SNR}_k = N_R N_t |x_k|^2 /
\sigma^2$. This is often referred to as the output SNR because it is the
effective SNR at the output of a matched-filter receiver.
Equation~\eqref{amplitudeproperty2} can thus be written as
$\text{SNR}_k > 200 \log N_\tau N_\beta$,
However, the factor 200 is definitely
way too conservative. As is evident from the comments following Theorem 1.3 
in~\cite{CP08}, one can replace the factor 10 in~\eqref{amplitudeproperty2} 
by a factor $(1+\eps)$ for some $\eps>0$, at the cost of a somewhat reduced 
probability of success and some slightly stronger conditions on the coherence 
and sparsity. This indicates that the SNR condition for which perfect
target detection can be achieved is
\begin{equation}
\label{SNRbound}
\text{SNR} \ge \SNRmin := C \log N_\tau N_\beta ,
\end{equation}
where $C$ is a constant of size ${\cal O}(1)$.

\item[(v)] The condition that the target locations are assumed to be random
can likely be removed by using a different proof technique that relies
on a dual certificate approach (e.g. see~\cite{candes_plan2011}) and 
tools developed in~\cite{Rau10}. We do not pursue this direction in this paper.
\end{itemize}

The proof of Theorem~\ref{th:lasso} is carried out in several steps. We
need two key estimates, one concerns a bound for the operator norm of $\vA$, 
the other one concerns a bound for the coherence of $\vA$. 
We start with deriving a bound for $\|\vA\|_{\op}$.

\begin{lemma}
\label{th:normbound}
Let $\vA$ be as defined in Theorem~\ref{th:lasso}. Then
\begin{equation}
\label{A_estimate}
\Prob \Big( \|\vA\|^2_{\op} \ge N_t N_R N_T(1+\log N_t)\Big)
\le N_t^{1 - C N_T},
\end{equation}
where $C>0$ is some numerical constant.
\end{lemma}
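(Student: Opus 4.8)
The plan is to control $\|\vA\|_{\op}^2=\|\vA\vA^*\|_{\op}$ by exploiting the Kronecker structure of the columns. Using $\mathrm{vec}(\vu\vv^T)=\vv\otimes\vu$, the column of $\vA$ indexed by $(\beta_i,\tau_j)$ equals $(\vT_{\tau_j}\vc(\beta_i))\otimes\va_R(\beta_i)$, where $\vc(\beta):=\vS\aT$ is the transmit-beamformed waveform and, because $\Delta_\tau$ equals the sampling interval, $\vT_{\tau_j}$ is the cyclic shift by $j$ samples. Hence
\[
\vA\vA^*=\sum_i\sum_j\big[(\vT_{\tau_j}\vc(\beta_i))(\vT_{\tau_j}\vc(\beta_i))^*\big]\otimes\big[\va_R(\beta_i)\va_R(\beta_i)^*\big].
\]
Only $N_\tau\le N_t$ shifts occur, but each summand is positive semidefinite, so I would first enlarge the inner sum to run over all $N_t$ cyclic shifts. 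This only increases the matrix in the semidefinite order (and therefore its operator norm), while turning the inner sum into $\vC_{\beta_i}\vC_{\beta_i}^*$, where $\vC_\beta$ is the full circulant generated by $\vc(\beta)$.

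Next I would diagonalize. Every circulant is diagonalized by the DFT, so $\vC_\beta\vC_\beta^*=\vU^*\vD_\beta\vU$ for the unitary $\vU$ induced by the DFT $\vF$, with $\vD_\beta=\diag{g_k(\beta)}$ and $g_k(\beta)=|(\vF\vc(\beta))_k|^2$. Conjugating the relaxed matrix by the unitary $\vU\otimes\vI$ leaves a block-diagonal matrix whose $k$-th block is $\sum_i g_k(\beta_i)\,\va_R(\beta_i)\va_R(\beta_i)^*$, whence
\[
\|\vA\|_{\op}^2\le\max_k\Big\|\sum_i g_k(\beta_i)\,\va_R(\beta_i)\va_R(\beta_i)^*\Big\|_{\op}.
\]
Here the prescribed geometry enters. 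For $d_R=N_T/2$ and $\Delta_\beta=2/(N_R N_T)$ the receive steering vectors satisfy $\va_R(\beta_{i+N_R})=\va_R(\beta_i)$ and coincide with columns of the $N_R$-point DFT, so grouping the $N_\beta=N_R N_T$ indices into the $N_R$ residue classes modulo $N_R$ expresses the $k$-th block as a sum of mutually orthogonal rank-one matrices. Its operator norm is therefore exactly $N_R\max_r\sum_{i\equiv r\,(N_R)}g_k(\beta_i)$.

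I would then collapse the inner sum using the transmit geometry. Writing $\vG:=\vF\vS$, one checks that $\vG$ has i.i.d.\ entries of variance $N_t/N_T$ and that $g_k(\beta_i)=|(\vG\va_T(\beta_i))_k|^2$, where the relevant object is the $k$-th row $\vg_k$ of $\vG$. For $d_T=1/2$ the family $\{\va_T(\beta_i):i\equiv r\ (N_R)\}$ is a diagonal-unitary times the $N_T$-point DFT, so Parseval gives $\sum_{i\equiv r}g_k(\beta_i)=N_T\|\vg_k\|_2^2$, independent of $r$ (and $\le N_T\|\vg_k\|_2^2$ for a partial field of view, by Bessel). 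Collecting constants yields $\|\vA\|_{\op}^2\le N_R N_t\max_k Y_k$, where $Y_k=\tfrac{N_T}{N_t}\|\vg_k\|_2^2$ are i.i.d.\ $\mathrm{Gamma}(N_T,1)$ random variables, independent across $k$ because the rows of $\vG$ are independent.

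It then remains to control $\max_{1\le k\le N_t}Y_k$. I would apply the Chernoff bound $\Prob(Y_k\ge N_T\alpha)\le(\alpha e^{1-\alpha})^{N_T}$ with $\alpha=1+\log N_t$, followed by a union bound over the $N_t$ frequencies; this choice of $\alpha$ is precisely what makes the threshold $N_t N_R N_T(1+\log N_t)$ appear, and the hypothesis $N_t\ge128$ guarantees that the residual factor $(1+\log N_t)^{N_T}$ is absorbed into $N_t^{(1-C)N_T}$ for a fixed $C\in(0,1)$, giving the stated probability $N_t^{1-CN_T}$. The hard part will be the geometric bookkeeping of the second and third paragraphs: verifying that the prescribed spacings really do turn the steering-vector families into (scaled) orthogonal DFT systems, so that the semidefinite relaxation to the full circulant loses nothing at the level of the final $N_R N_T$ scaling. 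The alternative spacing $d_T=N_R/2,\ d_R=1/2$ is handled by the same argument with the roles of the transmit and receive arrays interchanged.
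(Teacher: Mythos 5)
Your proof is correct, but it takes a route that differs from the paper's in two substantive ways, even though both arguments rest on the same structural facts (the prescribed spacings make the receive steering vectors an orthogonal $N_R$-point DFT family, and the transmit steering vectors within each residue class a unimodular-twisted $N_T$-point DFT family) and both finish with a union bound over the $N_t$ frequency bins. First, the decomposition: the paper never relaxes the delay sum; instead it shows that $\vA\vA^{\ast}$, viewed as an $N_R\times N_R$ array of $N_t\times N_t$ blocks indexed by receive antennas, is \emph{exactly} block-diagonal (the off-diagonal blocks vanish upon summing the exponentials over the azimuth grid), with all diagonal blocks equal to a single circulant $\vB_{1,1}$, so that $\|\vA\|_{\op}^2=\|\vB_{1,1}\|_{\op}=\sqrt{N_t}\,\|\hat{\vb}\|_{\infty}$. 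You instead perform a positive-semidefinite relaxation over delays (enlarging the $N_\tau$ shifts to all $N_t$ cyclic shifts) and then block-diagonalize in the \emph{frequency} domain, obtaining $N_t$ blocks of size $N_R\times N_R$; this is cleaner in that it handles $N_\tau<N_t$ explicitly, where the paper implicitly works with the full delay set. Second, the probabilistic tool: the paper writes each Fourier coefficient $\sqrt{N_t}\hat{\vb}(k)$ as a Gaussian quadratic form $\langle\vz,\vVk\vz\rangle$ and invokes a Hanson--Wright-type tail bound (Lemma~\ref{le:quadform}), whereas you observe that the per-frequency quantity is exactly a $\mathrm{Gamma}(N_T,1)$ variable (a sum of $N_T$ independent squared Gaussian moduli, via Parseval) and apply an elementary Chernoff bound plus the hypothesis $N_t\ge 128$ to absorb the $(1+\log N_t)^{N_T}$ factor. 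The two proofs ultimately control the same underlying random variables---the per-frequency transmit energies $\sum_{i=1}^{N_T}|\hat{\vs}_i(k)|^2$---but your identification of them as chi-square-type variables replaces the heavier Hanson--Wright machinery with a textbook moment-generating-function estimate, at the small cost of the semidefinite relaxation and the geometric bookkeeping needed to verify the two DFT-structure claims, which the paper carries out by equivalent explicit exponential-sum computations.
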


\begin{proof}
There holds $\|\vA\|_\op^2 = \|\vA \vA^{\ast}\|_\op$.
It is convenient to consider $\vA \vA^{\ast}$ as block matrix
$$
\begin{bmatrix}
\vB_{1,1}            & \vB_{1,2}   & \dots      & \vB_{1,N_R} \\
\vdots             & \ddots    &            & \vdots    \\
\vB_{N_R,1}^{\ast}   &           &            & \vB_{N_R,N_R}
\end{bmatrix},
$$
where the blocks $\{\vB_{i,i'}\}_{i,i'=1}^{N_R}$ are matrices of size 
$N_t \times N_t$. We claim that $\vA \vA^{\ast}$ is a block-Toeplitz matrix 
(i.e., $\vB_{i,i'} = \vB_{i+1,i'+1}, i=1,\dots, N_R-1$) and the individual 
blocks $\vB_{i,i'}$ are circulant matrices. To see this, recall the structure 
of $\vA$ and consider the entry $\vB_{[i,l;i',l']}$, $i,i'=1,\dots,N_R;
l,l'=1,\dots,N_t$:
\begin{gather}
\vB_{[i,l;i',l']}  =  (\vA \vA^{\ast})_{[i,l;i',l']} = 
\sum_{\beta} \sum_{\tau} \vA_{[i,l;\beta,\tau]} 
\vA_{[i',l';\beta,\tau]} \notag \\
 = \sum_{\beta} \sum_{n=1}^{N_\tau} \aR_i 
\sum_{k=1}^{N_T} \aT_k s_k(l \Delta_t - n \Delta_\tau) \conj{\aR}_{i'} 
\overline{\sum_{k'=1}^{N_T} \aT_{k'} s_{k'}(l' \Delta_t - n \Delta_\tau)}
\notag \\
 =  \sum_{\beta}  \aR_i \conj{\aR}_{i'} 
\sum_{k=1}^{N_T} \sum_{k'=1}^{N_T} \aT_k \conj{\aT}_{k'} 
\sum_{n=1}^{N_\tau} s_k(l \Delta_t  - n \Delta_\tau) 
\overline{s_{k'}(l' \Delta_t  - n \Delta_\tau )} \notag \\
 =  \sum_{\beta} e^{j2\pi d_R (i-i') \beta}
\sum_{k=1}^{N_T} \sum_{k'=1}^{N_T} e^{j2\pi d_T (k-k')\beta}
\sum_{n=1}^{N_\tau} s_k(l\Delta_t-n \Delta_\tau) 
\overline{s_{k'}(l' \Delta_t-n \Delta_\tau)},
\label{matrixstructure}
\end{gather}
where we used the delay discretization $\tau = n\Dtau, n= 1,\dots,N_\tau$.
The block-Toeplitz structure, $\vB_{i,i'} = \vB_{i+1,i'+1}$, follows
from observing that the expression~\eqref{matrixstructure} depends
on the difference $i-i'$, but not on the individual values of $i,i'$.
The circulant structure of an individual block $\vB_{i,i'}$ ($i,i'$
are now fixed) follows readily from noting that 
$$\sum_{n=1}^{N_\tau} s_k(l \Delta_t- n \Delta_\tau)
\overline{s_{k'}(l'\Delta_t- n \Delta_\tau)}
=\sum_{n=1}^{N_\tau} s_k((l+1)\Delta_t-n \Delta_\tau) 
  \overline{s_{k'}((l'+1) \Delta_t - n \Delta_\tau)},$$
since we have chosen $\Delta_t = \Delta_\tau$ 
and since the shifts are circulant in this case.

We will now show that the blocks $B_{i,i'}$ are actually
zero-matrices for $i \neq i'$. For convenience we introduce the notation
\begin{equation}
%\label{s2G}
\notag
G_{k,k'}(l,l'):=
\sum_{n=1}^{N_\tau}s_k(l\Delta_t -n \Dtau)\overline{s_{k'}(l' \Delta_t-n\Dtau)},
\qquad l,l' = 1,\dots,N_t; k,k' = 1,\dots,N_T,
\end{equation}
Substituting $d_T = 1/2, d_R = N_T/2$ (the very similar calculation for
$d_R = 1/2, d_T = N_R/2$ is left to the reader) and the discretization 
$\beta = n\Delta_\beta, n=1,\dots,N_\beta,$ 
with $\Delta_\beta=\frac{2}{N_R N_T}$ in~\eqref{matrixstructure} we can write
\begin{gather}
\vB_{[i,l;i',l']} = 
%\label{}
\sum_{n=\nfrom}^{\nto} e^{j2\pi \frac{N_T}{2} (i-i') \frac{2n}{N_R N_T}}
\sum_{k=1}^{N_T} \sum_{k'=1}^{N_T} 
e^{j2\pi \frac{1}{2}(k-k')\frac{2n}{N_R N_T}} G_{k,k'}(l,l') \notag\\
= \sum_{k=1}^{N_T} \sum_{k'=1}^{N_T} G_{k,k'}(l,l')
\sum_{n=0}^{N_R N_T-1} e^{j2\pi N_T (i-i') \frac{n}{N_R N_T}}
e^{j2\pi(k-k')\frac{n}{N_R N_T}}. 
\label{Bsum}
\end{gather}
We analyze the inner summation in~\eqref{Bsum} separately. 
\allowdisplaybreaks
\begin{gather}
\sum_{n=0}^{N_R N_T-1} e^{j2\pi N_T (i-i') \frac{n}{N_R N_T}}
e^{j2\pi(k-k')\frac{n}{N_R N_T}} =  
\sum_{n_1=0}^{N_T-1} \sum_{n_2=0}^{N_R-1}
e^{j2\pi(k-k')\frac{n_1 N_R + n_2}{N_R N_T}} 
 e^{j2\pi N_T (i-i') \frac{n_1 N_R + n_2}{N_R N_T}}\notag \\
= \sum_{n_2=0}^{N_R-1} e^{j2\pi(k-k')\frac{n_2}{N_R N_T}} 
 e^{j2\pi (i-i') \frac{n_2 N_T}{N_R N_T}} 
\sum_{n_1=0}^{N_T-1} e^{j2\pi (k-k') \frac{n_1 N_R}{N_R N_T}} 
 e^{j2\pi (i-i') \frac{n_1 N_R N_T}{N_R N_T}} \notag\\
= \sum_{n_2=0}^{N_R-1} e^{j2\pi(k-k')\frac{n_2}{N_R N_T}} 
 e^{j2\pi (i-i') \frac{n_2}{N_R}} 
\sum_{n_1=0}^{N_T-1} e^{j2\pi (k-k') \frac{n_1}{N_T}} 
 \underbrace{e^{j2\pi (i-i') n_1}}_{\text{$=1$ for all $i,i'$}} \notag\\
= \sum_{n_2=0}^{N_R-1} e^{j2\pi(k-k')\frac{n_2}{N_R N_T}} 
 e^{j2\pi (i-i') \frac{n_2}{N_R}} 
\sum_{n_1=0}^{N_T-1} e^{j2\pi (k-k') \frac{n_1}{N_T}} \notag\\
= \sum_{n_2=0}^{N_R-1} e^{j2\pi(k-k')\frac{n_2}{N_R N_T}}
 e^{j2\pi (i-i') \frac{n_2}{N_R}} N_T \delta_{k-k'}.\notag
\end{gather}
Hence
\begin{gather*}
\vB_{[i,l;i',l']} 
= N_T \sum_{k=1}^{N_T} \sum_{k'=1}^{N_T} \delta_{k-k'} G_{k,k'}(l,l')
\sum_{n_2=0}^{N_R-1} 
\underbrace{e^{j2\pi(k-k')\frac{n_2}{N_R N_T}}}_{\text{$=1$ for $k=k'$}}
 e^{j2\pi (i-i') \frac{n_2}{N_R}}  \\
= N_T \sum_{k=1}^{N_T} G_{k,k}(l,l')
\sum_{n_2=0}^{N_R-1} e^{j2\pi (i-i') \frac{n_2}{N_R}}  
= N_T N_R \sum_{k=1}^{N_T} G_{k,k}(l,l') \delta_{i-i'}.
\end{gather*}
Thus, $\vB_{i,i'} = {\bf 0}$ for $i\neq i'$, and $\vA^{\ast}\vA$ is indeed a
block-diagonal matrix, which in turn implies
$\|\vA\|_\op^2 = \max_{i} \|\vB_{i,i}\|_\op$. But due
to the block-Toeplitz structure of $\vA^{\ast}\vA$ we have
$\vB_{1,1}=\vB_{2,2}=\dots = \vB_{N_R,N_R}$. Therefore 
\begin{equation}
\|\vA\|_\op^2 = \|\vB_{1,1}\|_\op.
\label{AB_estimate}
\end{equation}

To bound $\|\vB_{1,1}\|_\op$ we utilize its circulant structure
as well as tail bounds of quadratic forms.
Let $\vb$ be the first column of $\vB_{1,1}$, then 
$\|\vB_{1,1}\|_\op =\sqrt{N_t} \|\hat{\vb}\|_{\infty}$ where $\hat{\vb}$
is the Fourier transform of $\vb$. From our previous computations we
have (after a change of variables)
\begin{equation}
\vb(l) = N_T N_R \sum_{k=1}^{N_T} G_{k,k}(l,0)
= N_T N_R \sum_{k=1}^{N_T} \sum_{n=1}^{N_\tau} s_k(n \Dtau-l
\Delta_t)\overline{s_k(n \Dtau)},
\qquad l=0,\dots,N_t-1.
%\label{G2b}
\notag
\end{equation}
We will rewrite this expression so that we can apply
Lemma~\ref{le:quadform} to bound $\|\hat{\vb}\|_{\infty}$. Let $\vTs$ 
denote the translation operator
on $\CC^{N_t}$ as introduced in~\eqref{translation} and define the
$N_t N_T \times N_t N_T$ block-diagonal matrix $\vUl=\{u^{(l)}_{ii'}\}$ by 
\begin{equation}
\label{eq:quadmatrix}
\vUl: = N_R N_T \sqrt{N_t} \vIT \otimes \vTs^l, 
    \qquad \text{for $l=0,\dots,N_t-1.$}
\end{equation}
Furthermore, let $\vz = [ \vs_1^T,\vs_2^T,\dots,\vs_{N_T}^T]^T$, then
\begin{equation}
\notag
\sqrt{N_t} \vb(l) =\sqrt{N_t} N_T N_R \sum_{k=1}^{N_T} \langle \vs_k, \vTs^l \vs_k \rangle
       = \langle \vz ,\vUl \vz \rangle,
%       = N_T N_R \sum_{i=1}^{N_t N_T} \bar{\vz}_i (\vU \vz)_i 
       =  \sum_{i,i'=1}^{N_t N_T} u^{(l)}_{i i'} \bar{\vz}_i \vz_{i'}.
\notag
\end{equation}
and therefore
\begin{equation*}
\sqrt{N_t}\hat{\vb}(k) = \frac{1}{\sqrt{N_t}}
   \sum_{l=0}^{N_t-1}\sum_{i,i'=1}^{N_t N_T} u^{(l)}_{ii'}
   \bar{\vz}_i\vz_{i'} e^{j2\pi kl/N_t} =
   \sum_{i,i'=1}^{N_t N_T} \bar{\vz}_i\vz_{i'} 
   \frac{1}{\sqrt{N_t}}\sum_{l=0}^{N_t-1} u^{(l)}_{ii'} e^{j2\pi kl/N_t}
=  \sum_{i,i'=1}^{N_t N_T} \bar{\vz}_i\vz_{i'} v^{(k)}_{ii'},
\end{equation*}
where we have denoted $v^{(k)}_{ii'}:= 
\frac{1}{\sqrt{N_t}}\sum_{l=0}^{N_t-1} u^{(l)}_{ii'} e^{j2\pi kl/N_t}$ 
for $i,i' = 0,\dots,N_t N_T-1$ and $k=0,\dots,N_t-1$. 
It follows from~\eqref{eq:quadmatrix} and standard properties of
the Fourier transform that the matrix 
$\vVk:=\{v^{(k)}_{ii'}\}$ is a block-diagonal matrix with $N_T$ blocks
of size $N_t \times N_t$, where each non-zero entry of such a block
has absolute value $N_R N_T$. Furthermore, a little algebra shows that
$\|\vVk\|_{F} = \sqrt{N_t^2 N_R^2 N_T^3}$,
$\|\vVk\|_{\text{op}} = N_t N_R N_T$, $\trace(\vVk) = N_t N_R N_T^2$, and
$$\Exp \big(\sum_{i,i'=1}^{N_t N_T} \bar{\vz}_i\vz_{i'} v^{(k)}_{ii'}\big)
 = \frac{1}{N_T}\trace(\vVk) = N_t N_R N_T.$$
We can now apply Lemma~\ref{le:quadform} (keeping in mind that
$x_i \sim {\cal CN}(0,\frac{1}{N_T})$) and obtain 
\begin{equation}
\Prob \big(|\sqrt{N_t} \hat{\vb}(l)| \ge N_t N_R N_T + t \big) \le \exp 
\Big(- C \min\Big\{\frac{t N_T}{N_t N_R N_T},\frac{t^2 N_T^2}{N_t^2 N_R^2 N_T^3}\Big\} 
\Big),
%\label{probb0}
\notag
\end{equation}
where $C>0$ is some numerical constant.

Choosing $t =   N_t N_R N_T \log N_t$ gives
\begin{equation}
\Prob \big(|\sqrt{N_t} \hat{\vb}(l)| \ge  N_t N_R N_T(1+\log N_t) \big) \le 
\exp  (- C N_T\log N_t),
\notag
%\label{probb1}
\end{equation}
for $l=0,\dots,N_t -1$. Forming the union bound over the $N_t$
possibilities for $l$ gives
\begin{equation}
\Prob \big(\max_l \{|\sqrt{N_t}\hat{\vb}(l)|\}\ge N_t N_R N_T(1+\log N_t)\big)
\le \sum_{l=0}^{N_t-1} \exp  (- C \sqrt{N_T} \log N_t) = N_t^{1 - C N_T}.
\label{probb2}
\end{equation}

We recall that $\|\vB_{1,1}\|_{\text{op}} = \max_l |\sqrt{N_t}\hat{\vb}(l)|$,
and substitute~\eqref{probb2} into~\eqref{AB_estimate} to
complete the proof.
\end{proof}

Next we estimate the coherence of $\vA$. Since the columns of $\vA$
do not all have the same norm, we will proceed in two steps.
First we bound the modulus of the inner product of any two columns
of $\vA$ and then use this result to bound the coherence of a properly 
normalized version of $\vA$.
Since the columns of $\vA$ depend on azimuth and delay, we index them
via the double-index $(\tau,\beta)$. Thus the $(\tau,\beta)$-th column of
$\vA$ is $\vA_{\tau,\beta}$.

\begin{lemma}
\label{th:coherencebound}
Let $\vA$ be as defined in Theorem~\ref{th:lasso}. Assume that 
\begin{equation}
\label{eq:coherencecondition}
N_\tau  \ge \sqrt{N_\beta} \qquad \text{and} \qquad
\log(N_\tau N_\beta) \le \frac{N_t}{30},
\end{equation}
then
\begin{equation}
\underset{(\tau,\beta)\neq (\tau',\beta')}{\max} 
\big|\langle  \vA_{\tau,\beta},\vA_{\tau',\beta'} \rangle \big|
  \le 3 N_R \sqrt{N_t \log (N_\tau N_\beta) }
\label{eq:coherencebound}
\end{equation}
with probability at least $1-2 (N_R N_T)^{-1} - 6(N_\tau N_R N_T)^{-1}$.
\end{lemma}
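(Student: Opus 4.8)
The plan is to reduce the bound on the column inner products to a concentration inequality for a single Gaussian quadratic form, after first peeling off a deterministic array factor. Writing $\vA_{\tau,\beta}=\mathrm{vec}\{\va_R(\beta)\va_T^T(\beta)\vS_\tau^T\}$ and using $\langle\mathrm{vec}(X),\mathrm{vec}(Y)\rangle=\trace(Y^\ast X)$, the inner product factorizes as
\begin{equation}
\langle\vA_{\tau,\beta},\vA_{\tau',\beta'}\rangle=\langle\va_R(\beta),\va_R(\beta')\rangle\cdot\langle\vS_\tau\va_T(\beta),\vS_{\tau'}\va_T(\beta')\rangle.
\notag
\end{equation}
The first factor is a receive-array Dirichlet kernel; substituting $d_R=N_T/2$, $\Delta_\beta=2/(N_RN_T)$ and writing $\beta=n\Delta_\beta$, $\beta'=n'\Delta_\beta$, it equals $N_R$ when $n\equiv n'\pmod{N_R}$ and $0$ otherwise. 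Hence $|\langle\va_R(\beta),\va_R(\beta')\rangle|\le N_R$ always, and the whole inner product vanishes unless $n\equiv n'\pmod{N_R}$. This reduces the claim to showing that the waveform-correlation factor $W:=\langle\vS_\tau\va_T(\beta),\vS_{\tau'}\va_T(\beta')\rangle$ obeys $|W|\le 3\sqrt{N_t\log(N_\tau N_\beta)}$ uniformly over the relevant off-diagonal pairs, with the stated probability.

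Next I would recognize $W$ as a quadratic form in the stacked Gaussian waveform vector $\vz=[\vs_1^T,\dots,\vs_{N_T}^T]^T$. Since $\vS_\tau\va_T(\beta)=\sum_k\va_T(\beta)_k\,\vTs^{\,n}\vs_k$ (with the integer circular shift induced by $\tau=n\Dtau$), one may write $\vS_\tau\va_T(\beta)=\vQ\vz$ and $\vS_{\tau'}\va_T(\beta')=\vQ'\vz$ for deterministic matrices assembled from weighted shift operators, so that $W=\vz^\ast\vM\vz$ with $\vM=\vQ'^\ast\vQ$. Using $|\va_T|\equiv1$ and the unitarity of the circular shifts, $\vQ\vQ^\ast=N_T\vI_{N_t}$, which gives $\|\vM\|_\op\le N_T$ and $\|\vM\|_F=N_T\sqrt{N_t}$. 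Because $\vz_i\sim{\cal CN}(0,1/N_T)$, the mean is $\Exp W=\frac{1}{N_T}\trace(\vM)=\frac{N_t}{N_T}\,\delta_{\tau,\tau'}\langle\va_T(\beta),\va_T(\beta')\rangle$. The crucial cancellation is that in every off-diagonal case where the receive kernel is nonzero this mean vanishes: for $\tau\ne\tau'$ it is zero outright, while for $\tau=\tau'$ the condition $n\equiv n'\pmod{N_R}$ together with $n\ne n'$ forces the transmit Dirichlet kernel $\langle\va_T(\beta),\va_T(\beta')\rangle$ to vanish. Equivalently, $\langle\va_R(\beta),\va_R(\beta')\rangle\langle\va_T(\beta),\va_T(\beta')\rangle=\langle\va(\beta),\va(\beta')\rangle$ is the inner product of the length-$N_RN_T$ virtual-array manifold, which is orthogonal on distinct grid points. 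Thus $W$ is a centered Gaussian quadratic form off the diagonal.

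With these estimates I would invoke the Hanson--Wright-type tail bound of Lemma~\ref{le:quadform}. Since $\vz$ has per-entry variance $1/N_T$, the effective parameters are $\frac{1}{N_T^2}\|\vM\|_F^2=N_t$ and $\frac{1}{N_T}\|\vM\|_\op=1$, so that $\Prob(|W|\ge t)\le 2\exp(-c\min\{t^2/N_t,\,t\})$. Choosing $t=3\sqrt{N_t\log(N_\tau N_\beta)}$ and invoking $\log(N_\tau N_\beta)\le N_t/30$ ensures the sub-Gaussian term $t^2/N_t=9\log(N_\tau N_\beta)$ is the smaller of the two, giving a per-pair failure probability of order $(N_\tau N_\beta)^{-9c}$. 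A union bound over the at most $(N_\tau N_\beta)^2$ pairs with $n\equiv n'\pmod{N_R}$, together with the factor $N_R$ from the receive kernel, then yields~\eqref{eq:coherencebound}; the hypothesis $N_\tau\ge\sqrt{N_\beta}$ (and the relation $N_\beta\le N_RN_T$ forced by $\Delta_\beta=2/(N_RN_T)$ on $\beta\in[-1,1]$) is what lets the number of pairs be absorbed and the failure probability be rewritten in the stated form $1-2(N_RN_T)^{-1}-6(N_\tau N_RN_T)^{-1}$. It is likely cleanest to split $\vM$ into its same-antenna ($k=k'$) and cross-antenna ($k\ne k'$) parts and apply the tail bound to each, which would account for the two separate probability terms.

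The main obstacle is twofold. The first delicate point is the mean cancellation: it is not enough that individual cross-correlations are centered, one must verify that precisely on the set of azimuth pairs where the receive kernel fails to vanish, the transmit (equivalently virtual-array) kernel does vanish, so that no uncentered term of size $O(N_tN_R)$ survives to destroy the $\sqrt{N_t}$ scaling. The second is keeping the concentration in the sub-Gaussian regime: this needs the operator-norm bound $\|\vM\|_\op\le N_T$ to be genuinely smaller (after the $1/N_T$ scaling) than the Frobenius contribution at the scale $t\asymp\sqrt{N_t\log(N_\tau N_\beta)}$, which is exactly what $\log(N_\tau N_\beta)\le N_t/30$ secures, followed by careful constant bookkeeping in the union bound to reach the precise probability claimed.
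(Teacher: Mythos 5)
Your algebraic skeleton is correct and coincides with the paper's own reductions: the factorization $\langle\colA,\colAp\rangle=\langle\aR,\aRp\rangle\langle\St\aT,\Stp\aTp\rangle$ (the paper's \eqref{innprod}), the Dirichlet-kernel orthogonality relations \eqref{eq:orth1}--\eqref{eq:orth2}, and the check that the waveform factor has zero mean on every off-diagonal pair where the receive kernel survives. From there your route genuinely differs: you compress all cases into a single Gaussian quadratic form $W=\vz^{\ast}\vM\vz$, $\vM=\vQ'^{\ast}\vQ$, and want one application of the Hanson--Wright bound (Lemma~\ref{le:quadform}), whereas the paper splits into three cases and uses explicit-constant tools for each: Lemma~\ref{le:concentration3} when $\tau=\tau'$, Lemma~\ref{le:gaussianinner} when $\beta\neq\beta'$ and $\tau\neq\tau'$ (exploiting that orthogonality of $\aT,\aTp$ makes $\St\aT$ and $\vS\aTp$ independent), and the index-splitting device of~\cite{PRT07} for the dependent autocorrelation case $\beta=\beta'$. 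Your unification is attractive --- notably it absorbs the dependent case with no extra work --- but as written it has two genuine gaps.

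First, Lemma~\ref{le:quadform} is stated for \emph{normal} matrices, and your $\vM$ fails to be normal precisely when $\beta\neq\beta'$: since $\vQ\vQ^{\ast}=N_T\vI$ and $\vQ'\vQ'^{\ast}=N_T\vI$, one computes $\vM\vM^{\ast}=N_T\,\vQ'^{\ast}\vQ'$ and $\vM^{\ast}\vM=N_T\,\vQ^{\ast}\vQ$, whose $(k,k')$ blocks are $N_T\,\overline{[\aTp]_k}\,[\aTp]_{k'}\,\vI$ and $N_T\,\overline{[\aT]_k}\,[\aT]_{k'}\,\vI$ respectively; these coincide only if $d_T(\beta-\beta')\in\ZZ$, i.e.\ only if $\beta=\beta'$. (Ironically, the one case where your $\vM$ \emph{is} normal is exactly the case where the paper must work hardest.) This is repairable --- apply the lemma to the Hermitian forms $\Real W$ and $\Imag W$, built from $(\vM+\vM^{\ast})/2$ and $(\vM-\vM^{\ast})/(2j)$, which obey the same norm bounds, at the cost of a factor $2$ in probability and $\sqrt{2}$ in the threshold --- but the step is unjustified as you state it.

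Second, and fatal for the lemma \emph{as claimed}: Lemma~\ref{le:quadform} carries an unspecified numerical constant $C$, so your per-pair failure bound is $2(N_\tau N_\beta)^{-9C}$. If $C$ is small (say $C<1/3$), the union bound over the roughly $N_\tau N_\beta N_T$ relevant pairs does not close at all; and even when it closes, no bookkeeping can recover the specific threshold constant $3$ and the specific probability $1-2(N_RN_T)^{-1}-6(N_\tau N_RN_T)^{-1}$ --- the threshold would have to degrade to order $\sqrt{1/C}\,\sqrt{N_t\log(N_\tau N_\beta)}$. This is exactly why the paper avoids Hanson--Wright here and instead uses inequalities with fully explicit constants: in its Case (i), the choice $t=3\sqrt{\log(N_\tau N_RN_T)/N_t}$ together with hypothesis \eqref{eq:coherencecondition} is tuned, via \eqref{eq:bound10}, so that the exponent in \eqref{innprod4} is at least $2\log(N_RN_T)$, yielding the clean per-case probabilities that sum to the stated bound. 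To rescue your unified argument you would need a Hanson--Wright inequality with explicit constants (and valid for non-normal matrices); otherwise you must fall back on the paper's case-by-case analysis.
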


\begin{proof}
We assume $d_T = \frac{1}{2}, d_R = \frac{N_T}{2}$ and leave the case
$d_T = \frac{N_R}{2}, d_R = \frac{1}{2}$ to the reader.
We need to find an upper bound for
\begin{equation}
%\label{maxcoherence}
\notag
\max{|\langle  \vA_{\tau,\beta},\vA_{\tau',\beta'} \rangle |} \qquad
\text{for $(\tau,\beta) \neq (\tau',\beta')$}.
\end{equation}
It follows from the definition of $z(t;\beta,r)$ via a simple calculation
that 
\begin{equation}
%\label{Acols}
\notag
\vA_{\tau,\beta} = \aR \otimes (\St \aT),
\end{equation}
from which we readily compute
\begin{equation}
\langle \colA,\colAp \rangle = \langle \aR, \aRp \rangle
\langle \St \aT , \Stp \aTp \rangle.
\label{innprod}
\end{equation}
We use the discretization $\beta = n \Delta_\beta$, 
$\beta' = n' \Delta_\beta$, where $\Delta_\beta = \frac{2}{N_R N_T}$,
$n,n' = 1,\dots,N_\beta$, with $N_\beta = N_R N_T$, and obtain after 
a standard calculation
\begin{equation}
\label{eq:orth1}
\langle \aR, \aRp \rangle = 
\begin{cases}
N_R & \text{if $n - n' = k N_R$ for $k =0,\dots,N_T-1$,} \\
0 & \text{if $n - n' \neq k N_R$,} \\
\end{cases}
\end{equation}
and
\begin{equation}
\label{eq:orth2}
\langle \aT, \aTp \rangle = 
\begin{cases}
0 & \text{if $n - n' = k N_R$ for $k =1,\dots,N_T-1$,} \\
\langle \aT, \aT \rangle & \text{if $n - n' =0$.}
\end{cases}
\end{equation}
As a consequence of~\eqref{eq:orth1}, concerning $\beta,\beta'$
we only need to focus on the case $n - n' = k N_R$ for $k=1,\dots,N_T-1$.
Moreover, since 
\begin{equation}
%\label{eq:shifttau}
\notag
\langle \St \aT , \Stp \aTp \rangle =
\langle \vS_{\tau - \tau'} \aT , \vS \aTp \rangle,\quad
\text{for $\tau,\tau'=0,\dots,N_\tau-1,$}
\end{equation}
and $|\langle \vS_\tau \aT , \aTp \rangle| = 
|\langle \vS_{N_t - \tau} \aT , \aTp \rangle|$, 
we can confine the range of values for $\tau,\tau'$ to $\tau'=0, \tau=0,\dots,N_t/2$.

We split our analysis into three cases, (i) $\beta \neq \beta', \tau =0$, 
(ii) $\beta \neq \beta', \tau \neq 0$, and 
(iii) $\beta = \beta',\tau  \neq 0$.

\noindent
{\bf Case (i) $\beta \neq \beta', \tau =0$:}
We will first find a bound for
$|\langle \aR, \aRp \rangle \langle \aT , \aTp \rangle|$
and then invoke Lemma~\ref{le:concentration3} to obtain a bound for
$|\langle \aR, \aRp \rangle \langle \vS \aT , \vS \aTp \rangle|$.

\if 0
Using the definitions of $\aR$ and $\aT$ and $d_R = \frac{1}{2},
d_T = \frac{N_R}{2}$, (or $d_T = \frac{1}{2}, d_R = \frac{N_T}{2}$),
and substituting the discretization $\beta = n \Delta_\beta $ with 
$\Delta_\beta = \frac{2}{N_R N_T}$ for 
$n = -\frac{N_R N_T}{2}, \dots, \frac{N_R N_T-1}{2}$, we obtain
\begin{gather}
|\langle \aR, \aRp \rangle \langle \aT , \aTp \rangle |  = 
|\sum_{k=0}^{N_R-1} e^{j2\pi k d_R (\beta - \beta')}|
|\sum_{l=0}^{N_T-1} e^{j2\pi l d_T (\beta - \beta')}| \\
= |\sum_{k=0}^{N_R-1} e^{j2\pi k (n - n')/(N_R N_T)}|
|\sum_{l=0}^{N_T-1} e^{j2\pi l N_R (n - n')/(N_R N_T)}|.
\label{expsum2}
\end{gather}
If $n-n' = rN_T, r=1,\dots,N_R-1$, then 
$\sum_{k=0}^{N_R-1} e^{j2\pi k (n - n')/(N_R N_T)}=0$, whence
$\langle \aR, \aRp \rangle = 0$.
Assume now that $n-n' \neq rN_T, r=1,\dots,N_R-1$, using~\eqref{expsum2} we get
\begin{gather}
|\langle \aR, \aRp \rangle \langle \aT , \aTp \rangle | 
 =  \frac{|e^{j2\pi N_R (n - n')/(N_R N_T)}-1|}
           {|e^{j2\pi (n-n')/(N_R N_T)}-1|} \cdot
      \frac{|e^{j2\pi N_R N_T(n-n')/(N_R N_T)}-1|}
           {|e^{j2\pi N_R (n-n')/(N_R N_T)}-1|} \label{case2a} \\
 =  \frac{|e^{j2\pi (n-n')}-1|}
           {|e^{j\pi (n-n')/(N_R N_T)}-1|}
           = 0 \qquad \text{for $n \neq n'$ (hence for $\beta \neq \beta'$).}
\label{case2b}
\end{gather}
\fi
Based on~\eqref{eq:orth1} and~\eqref{eq:orth2}, to bound
$|\langle \aR, \aRp \rangle \langle \vS \aT , \vS \aTp \rangle|$
we only need to consider those $n,n'$ for which $n-n'$ is not a 
multiple of $N_R$, in which case $\aT$ and $\aTp$ are orthogonal.
We have
\begin{equation}
\label{innprod3}
|\langle \aR, \aRp \rangle \langle \vS \aT , \vS \aTp \rangle| \le
N_R\, | \langle \vS^{\ast}\vS \aT ,\aTp \rangle|.
\end{equation}
By Lemma~\ref{le:concentration3} there holds
\begin{equation}
\label{innprod4}
\Prob\Big( |\langle \vS^{\ast}\vS \aT ,\aTp \rangle| \ge t N_t \Big)
\le 2 \exp \Big(-N_t \frac{t^2}{C_1 + C_2 t})\Big)
\end{equation}
for all $0 < t < 1$, where $C_1 = \frac{4e}{\sqrt{6\pi}}$ and
$C_2 = \sqrt{8} e$.
We choose $t= 3\sqrt{\frac{1}{N_t} \log (N_\tau N_R N_T)}$ 
in~\eqref{innprod4} and get
\begin{equation}
\label{innprod4a}
\Prob\Big( |\langle \vS^{\ast}\vS \aT ,\aTp \rangle| 
  \ge 3 \sqrt{N_t \log(N_\tau N_R N_T)} \Big)
\le 2 \exp \Big(- \frac{9\log(N_\tau N_R N_T)}{C_1 + \frac{3C_2}{\sqrt{N_t}} 
\sqrt{\log(N_\tau N_R N_T)}}\Big).
\end{equation}
We claim that
\begin{equation}
\frac{9\log(N_\tau N_R N_T)}{C_1 + \frac{3C_2}{\sqrt{N_t}} 
\sqrt{\log(N_\tau N_R N_T)}} \ge 2 \log(N_R N_T).
\label{eq:bound10}
\end{equation}
To verify this claim we first note that~\eqref{eq:bound10} is equivalent to
\begin{equation}
9\log N_\tau  \ge \log(N_R N_T) (2C_1 +
\frac{6C_2}{\sqrt{N_t}}\sqrt{\log(N_\tau N_\beta)} - 9).
%\label{eq:bound11}
\notag
\end{equation}
Using both assumptions in~\eqref{eq:coherencecondition}
and the fact that $2C_1 + \frac{6C_2}{\sqrt{30}} - 9 \le \frac{9}{2}$
we obtain
\begin{equation}
9\log N_\tau \ge \log N_\beta (2C_1 + \frac{6C_2}{\sqrt{30}} - 9)
\ge \log N_\beta (2C_1 + \frac{6C_2}{\sqrt{N_t}}\sqrt{\log(N_t N_\beta)} - 9),
%\label{eq:claim}
\notag
\end{equation}
which establishes~\eqref{eq:bound10}.
Substituting now~\eqref{eq:bound10} into~\eqref{innprod4a} gives
\begin{equation}
\label{innprod4b}
\Prob\Big( |\langle \vS^{\ast}\vS \aT ,\aTp \rangle| 
\ge 3 \sqrt{N_t \log(N_\tau N_R N_T)} \Big)
\le 2 \exp \big(- 2\log(N_R N_T)\big).
\end{equation}
To bound $\max |\langle \vA_{\tau,\beta}, \vA_{\tau,\beta'} \rangle|$ 
we only have to take the union bound over
$N_R N_T$ different possibilities associated with $\beta,\beta'$,
as $\tau=\tau'=0$. Forming now the union bound, and using~\eqref{innprod3}, 
yields
\begin{equation}
\label{eq:case1}
\Prob\Big( |\langle \vA_{\tau,\beta}, \vA_{\tau,\beta'} \rangle| \le 
3N_R \sqrt{N_t\log (N_\tau N_R N_T)} \Big) \ge 1- 2 (N_R N_T)^{-1}.
\end{equation}

\if 0
Case (ii): $\tau = \tau', \beta \neq \beta'$;

We have that $\langle \Stp^{\ast}\St \aT,\aTp \rangle
\approx \langle \aT,\aTp \rangle$ (this needs to be made more precise).
Thus, instead of 
$\langle \aR, \aRp \rangle \langle \St \aT , \Stp \aTp \rangle$ we
consider for now
$\langle \aR, \aRp \rangle \langle \aT , \aTp \rangle$.
Using the definitions of $\aR$ and $\aT$ and $d_R = \frac{1}{2},
d_T = \frac{N_R}{2}$, (or $d_T = \frac{1}{2}, d_R = \frac{N_T}{2}$),
we calculate
\begin{gather}
|\langle \aR, \aRp \rangle \langle \aT , \aTp \rangle |  = 
|\sum_{k=0}^{N_R-1} e^{j2\pi k d_R (\sin \Dt n - \sin \Dt n')}|
|\sum_{l=0}^{N_T-1} e^{j2\pi l d_T (\sin \Dt n - \sin \Dt n')}| \\
 =  \frac{|e^{j2\pi d_R N_R (\sin \beta - \sin \beta')}-1|}
           {|e^{j2\pi d_R (\sin \beta - \sin \beta')}-1|} \cdot
      \frac{|e^{j2\pi d_T N_T (\sin \beta - \sin \beta')}-1|}
           {|e^{j2\pi d_T (\sin \beta - \sin \beta')}-1|} \\
 =  \frac{|e^{j\pi N_R (\sin \beta - \sin \beta')}-1|}
           {|e^{j\pi (\sin \beta - \sin \beta')}-1|} \cdot
      \frac{|e^{j\pi N_R N_T (\sin \beta - \sin \beta')}-1|}
           {|e^{j\pi N_R (\sin \beta - \sin \beta')}-1|} \\
 =  \frac{|e^{j\pi N_R N_T (\sin \beta - \sin \beta')}-1|}
           {|e^{j\pi (\sin \beta - \sin \beta')}-1|}.
% =  \frac{e^{j2\pi d_R N_R (\sin \Dt n - \sin \Dt n')}-1}
%           {e^{j2\pi d_R (\sin \Dt n - \sin \Dt n')}-1} \cdot
%      \frac{e^{j2\pi d_T N_T (\sin \Dt n - \sin \Dt n')}-1}
%           {e^{j2\pi d_T (\sin \Dt n - \sin \Dt n')}-1}.
\label{expterms}
\end{gather}
\fi

\noindent 
{\bf Case (ii) $\beta \neq \beta', \tau \neq 0$:}
We need to consider the case 
$|\langle \St \aT , \vS \aTp \rangle|$ where $\beta=n\Delta_\beta$,
$\beta'=n' \Delta_{\beta}$, with
$n - n' = k N_R$ for $k = 1,\dots, N_T-1$.
Since the entries of $\vS$ are i.i.d.\ Gaussian random variables, it
follows that the entries of $\St \aT$ are i.i.d.\ 
${\cal CN}(0,1)$-distributed, and similar for $\vS \aTp$. Moreover,
the fact that $\langle \aT,\aTp \rangle = 0$ implies that 
$\St \aT$ and $\vS \aTp$ are independent. Consequently, the entries of
 $\sum_{l=0}^{N_t-1} \overline{(\St \aT)_l} (\vS \aTp)_l$ are
jointly independent. Therefore, we can apply Lemma~\ref{le:gaussianinner}
with $t= 3 \sqrt{N_t \log(N_\tau N_R N_T)}$, form the union bound 
over the $N_\tau N_R N_T$ possibilities associated with $\tau$ (we do not 
take advantage of the fact we actually have only $N_\tau-1$ and not $N_\tau$ 
possibilities for $\tau$) and $\beta,\beta'$ (here, we take again into 
account property~\eqref{eq:orth1}), and eventually obtain
\begin{equation}
\Prob\Big( |\langle \colA,\colAp \rangle| \le 
3 N_R \sqrt{N_t \log(N_\tau N_R N_T)} \Big) \ge 1 - 2 (N_\tau N_R N_T)^{-1}. 
\label{eq:case2}
\end{equation}

\noindent
{\bf Case (iii) $\beta = \beta', \tau \neq 0$:} 
We need to find an upper bound for $|\langle \St \aT , \vS \aT \rangle|$
where $\tau = 1,\dots,N_t-1$. Since
Since each of the entries of
$\St \aT$ and of $\vS \aT$ is a sum of $N_T$ i.i.d.\ Gaussian random
variables of variance $1/N_T$, we can write
\begin{equation}
\label{eq:S2g}
|\langle \St \aT , \vS \aT \rangle| = 
|\sum_{l=0}^{N_t-1} \bar{g}_{l-\tau} g_l|, 
\end{equation}
where $g_l \sim {\cal N}(0,1)$. Note that the terms $\bar{g}_{l-\tau} g_l$
in this sum are no longer all jointly independent. But
similar to the proof of Theorem 5.1 in~\cite{PRT07} we observe that
for any $\tau \neq 0$ we can split the index set
${0,\dots,N_t-1}$ into two subsets $\Lambda_\tau^1,\Lambda_\tau^2\subset
\{0,\dots,N_t-1\}$, each of size $N_t/2$, such that the $N_t/2$ variables
$\bar{g}(l-\tau) g(l)$ are jointly independent for $l\in \Lambda^1_\tau$,
and analogous for $\Lambda^2_\tau$. (For convenience we
assume here that $N_t$ is even, but with a negligible modification
the argument also applies for odd $N_t$.)
In other words, each of the sums
$\sum_{l\in \Lambda^r_\tau} \bar{g}(l-\tau) g(l), r=1,2$,
contains only jointly independent terms.
Hence we can apply Lemma~\ref{le:gaussianinner} and obtain
\begin{equation}
\Prob\Big(\big|\sum_{l\in \Lambda^r_\tau}
 \bar{g}(l-\tau) g(l) \big| > t \Big) \le 
2 \exp \Big(-\frac{t^2}{N_t/2 + 2t)}\Big)
%\label{eq:sum4}
\notag
\end{equation}
for all $t>0$. Choosing $t = \frac{3}{2} \sqrt{N_t\log(N_t N_R N_T)}$ gives
\begin{align}
\Prob\Big(\big|\sum_{l\in \Lambda^r_\tau} \bar{g}(l-\tau) g(l) \big| 
  > \frac{3}{2} \sqrt{N_t \log(N_t N_R N_T)} \Big) 
& \le 2\exp \Big(-\frac{\frac{9}{4} N_t \log(N_t N_R N_T)}
{\frac{N_t}{2}+3\sqrt{N_t \log(N_t N_R N_T)}}\Big)\notag \\
& \le 2\exp \Big(-\frac{9 \log(N_t N_R N_T)}
{2+12\sqrt{\frac{\log(N_t N_R N_T)}{N_t}}}\Big).\label{eq:sum5}
%& \le 2 \exp \Big(-\frac{9 \log(N_t N_R N_T)}{4 + \frac{1}{2}}\Big) \notag\\
%& \le 2 (N_t N_R N_T)^{-2},
\end{align}
Condition~\eqref{eq:coherencecondition} implies that
$12\sqrt{\frac{\log(N_t N_R N_T)}{N_t}} \le \frac{5}{2}$, hence
the estimate in~\eqref{eq:sum5} becomes
\begin{align}
\Prob\Big(\big|\sum_{l\in \Lambda^r_\tau} \bar{g}(l-\tau) g(l) \big| 
  > \frac{3}{2}\sqrt{\log(N_t N_R N_T)} \sqrt{N_t}\Big) 
& \le 2\exp \Big(-\frac{9 \log(N_t N_R N_T)}
{2+\frac{5}{2}} \Big)\notag \\
& = 2\exp \big(-2 \log(N_t N_R N_T) \big) \notag \\
& = 2(N_t N_R N_T)^{-2}.
\label{eq:sum6}
\end{align}
Using equation~\eqref{eq:S2g}, inequality~\eqref{eq:sum6}, and the pigeonhole 
principle, we obtain
\begin{align}
\Prob\Big(|\langle \St \aT , \vS \aT \rangle| > 
3 \sqrt{N_t \log(N_t N_R N_T)} \Big) 
%\Prob\Big(|\sum_{l=0}^{N_t-1} \bar{g}_{l-\tau} g_l| > 2t\Big) 
%& \le \Prob\Big(|\sum_{l\in\Lambda^1_\tau} \bar{g}_{l-\tau} g_l| > t\Big) +
%\Prob\Big(|\sum_{l\in\Lambda^2_\tau} \bar{g}_{l-\tau} g_l| > t\Big) \\
& \le 4 (N_t N_R N_T)^{-2}, \notag
\end{align}
Combining this estimate with~\eqref{innprod} yields
\begin{equation}
\Prob\Big( |\langle \vA_{\tau,\beta}, \vA_{\tau',\beta} \rangle| \ge
3 N_R \sqrt{N_t \log(N_\tau N_R N_T)} \Big) \le 4 (N_t N_R N_T)^{-2},
%\label{innprod6}
\notag
\end{equation}
We apply the union bound over the $\frac{N_t}{2} N_T N_R$ different 
possibilities and arrive at
\begin{equation}
\label{eq:case3}
\Prob\Big(\max |\langle \vA_{\tau,\beta}, \vA_{\tau',\beta} \rangle| \le 
3 N_R \sqrt{N_t \log(N_\tau N_R N_T)} \Big) \ge 
1-4(N_t N_R N_T)^{-1},
\end{equation}
where the maximum is taken over all $\tau,\tau',\beta,\beta'$ with
$\tau\neq \tau'$.

An inspection of the bounds~\eqref{eq:case1}, \eqref{eq:case2}, and~\eqref{eq:case3} 
establishes~\eqref{eq:coherencebound}, which is what we wanted to prove.
\end{proof}

The key to proving Theorem~\ref{th:lasso} is to 
combine Lemma~\ref{th:normbound} and Lemma~\ref{th:coherencebound} with
Theorem~\ref{th:CP}.
The latter theorem requires the matrix to have columns of
unit-norm, whereas the columns of our matrix $\vA$ have all different norms
(although the norms concentrate nicely around $\sqrt{N_t N_R N_T}$). 
Thus instead of $\vA \vx = \vy$ we now consider
\begin{equation}
\label{Atilde}
\tilde{\vA} \vxt = \vy, \qquad 
\text{where $\tilde{\vA} := \vA \vD^{-1}$ and $\vxt := \vD \vx$.} 
\end{equation}
Here $\vD$ is the $N_\tau N_\beta \times N_\tau N_\beta$ diagonal matrix 
defined by 
\begin{equation}
\vD_{(\tau,\beta),(\tau,\beta)} = \|\colA\|_2.
\label{diagonalmatrix}
\end{equation}
In the noise-free case we can easily recover $\vx$ from $\vxt$ via 
$\vx = \vD^{-1} \vxt$. In the noisy case we will utilize the fact that for 
proper choices of $\lambda$ the associated lasso solutions of ~\eqref{lassoD} 
and~\eqref{lassoDI}, respectively, have the same support, see also the proof of
Theorem~\ref{th:lasso}.

The following lemma gives a bound for $\mu(\tilde{\vA})$ and
$\|\tilde{\vA}\|_\op$ in terms of the corresponding bounds for $\vA$.

\begin{lemma}
\label{le:tildebounds}
Let $\tilde{\vA} = \vA \vD^{-1}$, where the $\vD$ the diagonal matrix is
defined by~\eqref{diagonalmatrix}. Under the conditions of
Theorem~\ref{th:lasso}, there holds
\begin{equation}
\label{Atilde_estimate}
\Prob \Big( \|\vAt\|_{\op}^2 < 3 (1+\log N_t) \Big) 
\ge 1 - p_1,
\end{equation}
where $p_1 =e^{-N_t\frac{(\sqrt{1/3} - 1)^2}{2}} - N_t^{1 - C \sqrt{N_T}}$,
and
\begin{equation}
\Prob \Big(\mu\big(\tilde{\vA}\big) \le 
  6 \sqrt{\frac{1}{N_t} \log (N_\tau N_R N_T)} \Big) 
\ge 1-p_2,
\label{tildecoherencebound}
\end{equation}
where $p_2=2e^{-\frac{N_t(\sqrt{2}-1)^2}{4}}-2 (N_R N_T)^{-1}-6(N_t N_R N_T)^{-1}$.
\end{lemma}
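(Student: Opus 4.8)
The plan is to use that $\vAt = \vA\vD^{-1}$, with $\vD$ as in \eqref{diagonalmatrix}, is precisely the column-normalized version of $\vA$: the $(\tau,\beta)$-th column of $\vAt$ equals $\colA/\|\colA\|_2$ and therefore has unit norm. Both assertions then reduce to the two preceding lemmas plus a single new ingredient, a lower bound on the smallest column norm $\min_{(\tau,\beta)}\|\colA\|_2$. For the operator-norm bound I would use submultiplicativity, $\|\vAt\|_\op \le \|\vA\|_\op\,\|\vD^{-1}\|_\op$, together with $\|\vD^{-1}\|_\op = 1/\min_{(\tau,\beta)}\|\colA\|_2$ (valid since $\vD$ is diagonal with positive entries), and bound $\|\vA\|_\op$ by Lemma~\ref{th:normbound}. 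For the coherence, unit-norm columns give $\mu(\vAt) = \max_{(\tau,\beta)\neq(\tau',\beta')} |\langle \colA,\colAp\rangle| / (\|\colA\|_2\|\colAp\|_2)$, so I would bound the numerator by Lemma~\ref{th:coherencebound} and the denominator from below by $(\min_{(\tau,\beta)}\|\colA\|_2)^2$.

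The technical heart is thus the concentration of the column norms. Recall from the computation in Lemma~\ref{th:coherencebound} that $\colA = \aR \otimes (\St\aT)$, so $\|\colA\|_2^2 = \|\aR\|_2^2\,\|\St\aT\|_2^2 = N_R\,\|\St\aT\|_2^2$, and that the entries of $\St\aT$ are i.i.d.\ complex Gaussian. Hence $\|\St\aT\|_2^2$ is a sum of $N_t$ independent squared-magnitudes and concentrates sharply about its mean (equivalently, $\|\St\aT\|_2$ is a Lipschitz function of a Gaussian vector); the column norms cluster around the typical value noted just before the lemma. Applying a standard chi-square / Gaussian-norm lower-tail estimate to a single column I would obtain a bound of the form $\Prob\big(\|\colA\|_2 \le \sqrt{1/3}\,\Exp\|\colA\|_2\big) \le e^{-N_t(1-\sqrt{1/3})^2/2}$ for the operator-norm step, and the analogous bound $2e^{-N_t(\sqrt 2-1)^2/4}$ at the level $\sqrt{1/2}$ for the coherence step. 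The fractions $\tfrac13$ and $\tfrac12$ are exactly what produce the constants $3$ and $6$ in the two conclusions once the shared dimensional factors cancel against the matching factors in \eqref{A_estimate} and in \eqref{eq:coherencebound}, respectively.

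To pass from a single column to $\min_{(\tau,\beta)}\|\colA\|_2$ I would take a union bound over all $N_\tau N_\beta$ columns. The crucial observation is that this does not spoil the exponent: under the hypothesis $(\log(N_\tau N_\beta))^3 \le N_t$ from \eqref{coherenceproperty2} one has $N_\tau N_\beta \le e^{N_t^{1/3}}$, which is negligible against $e^{-cN_t}$, so $N_\tau N_\beta\,e^{-cN_t}$ is reabsorbed into a single exponential term at the cost of an arbitrarily small change in the constant (harmless since $N_t \ge 128$). I would then assemble the failure probabilities by a union bound over the independent events: $p_1$ collects the column-norm failure $e^{-N_t(1-\sqrt{1/3})^2/2}$ and the operator-norm failure $N_t^{1-CN_T}$ of Lemma~\ref{th:normbound}, while $p_2$ collects the column-norm failure $2e^{-N_t(\sqrt 2-1)^2/4}$ and the coherence failure $2(N_R N_T)^{-1} + 6(N_t N_R N_T)^{-1}$ of Lemma~\ref{th:coherencebound} (using $N_\beta = N_R N_T$ and $N_\tau = N_t$).

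The step I expect to require the most care is the probability bookkeeping rather than any individual estimate: one must check that the column-norm lower tail decays fast enough that the union bound over all $N_\tau N_\beta$ columns is swallowed by the exponent under \eqref{coherenceproperty2}, and that the three events (operator norm, coherence, and minimum column norm) are combined so that the advertised bounds $3(1+\log N_t)$ and $6\sqrt{(\log(N_\tau N_R N_T))/N_t}$ emerge cleanly after the cancellation of the common factors. Everything beyond that is a routine Gaussian-norm tail computation.
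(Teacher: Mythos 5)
Your proposal is correct and follows essentially the same route as the paper's proof: both bound $\|\vAt\|_{\op}$ by $\|\vA\|_{\op}/\min_{(\tau,\beta)}\|\colA\|_2$ and $\mu(\vAt)$ by $\max|\langle\colA,\colAp\rangle|/(\min_{(\tau,\beta)}\|\colA\|_2)^2$, and both control the column norms through the factorization $\|\colA\|_2^2 = N_R\|\St\aT\|_2^2$ with a Gaussian lower-tail estimate at the levels $1/3$ and $1/2$, which is exactly where the constants $3$ and $6$ and the exponents $(1-\sqrt{1/3})^2/2$ and $(\sqrt{2}-1)^2/4$ come from. The one step where you diverge is the union bound over columns: you union over all $N_\tau N_\beta$ columns and absorb the factor $N_\tau N_\beta \le e^{N_t^{1/3}}$ into the exponential, which costs a quantifiable (not arbitrarily small) degradation of the exponent, roughly $N_t^{-2/3}\le 128^{-2/3}$ of the constant; the paper instead observes that circular shifts are isometries, so $\|\St\aT\|_2 = \|\vS\aT\|_2$ for every $\tau$ and the column norm depends only on $\beta$, leaving just $N_\beta = N_R N_T$ distinct norms to union over. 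The paper's trick is cleaner and preserves the exponent, though it then silently drops even the $N_R N_T$ prefactor in its final bound, so both arguments are comparably loose relative to the stated $p_1, p_2$. One further caveat: in the operator-norm half the ``shared dimensional factors'' do not fully cancel---dividing $N_t N_R N_T(1+\log N_t)$ from Lemma~\ref{th:normbound} by $\min\|\colA\|_2^2 \ge N_t N_R/3$ leaves $3N_T(1+\log N_t)$, not $3(1+\log N_t)$. This extra $N_T$ is unavoidable (with unit-norm columns, $\|\vAt\|_{\op}^2 \ge \|\vAt\|_F^2/(N_R N_t) = N_\tau N_T/N_t$), and the paper's own intermediate bound as well as the Doppler analog \eqref{Atilde_estimatedoppler} retain it; the missing $N_T$ in \eqref{Atilde_estimate} is a typo in the lemma's statement rather than a flaw in your argument, but your claim of full cancellation glosses over it.
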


\begin{proof}
We have
\begin{equation}
\|\tilde{\vA}\|_{\op}^2 \le \frac{\|\vA\|_\op^2}{\max_{\tau,\beta} \|\colA\|_2^2}.
\label{Atbound0}
\end{equation}
%Clearly, $\|\vD\|_{\op}^2 = \max_{\tau,\beta}\|\colA\|_2^2$.
Recall that 
\begin{equation}
\colA = \aR \otimes (\St \aT),
\label{Acolumn}
\end{equation}
hence $\|\colA\|_2^2 = \|\aR\|_2^2 \|\St \aT\|_2^2$. Since the entries
$(\St \aT)_k \sim {\cal CN}(0,N_T)$, we have 
$\Exp \|\St \aT\|  = \sqrt{N_t}$, and thus by
Lemma~\ref{le:concentration1}
\begin{equation}
\Prob \Big(\sqrt{N_t} - \|\St \aT\|_2 > t \Big)
\le e^{-\frac{ t^2}{2}},
\label{Atbound1}
\end{equation}
for all $t>0$, hence
\begin{equation}
\Prob \Big(\frac{1}{\|\St \aT\|_2^2} < \frac{1}{(\sqrt{N_t} - t)^2} \Big)
\ge 1 - e^{-\frac{ t^2}{2}},
\label{Atbound2}
\end{equation}
Choosing $t = (1 - \sqrt{1/3})\sqrt{N_t}$ in~\eqref{Atbound2} and forming the 
union bound only over the $N_R N_T$ different possibilities associated with 
$\beta$ (note that $\|\St \aT\|_2 = \|\vS \aT\|_2$ for all $\tau$), gives
\begin{equation}
\Prob \Big( \frac{1}{\underset{\tau,\beta}{\max}\|\colA\|_2^2} < 
\frac{3}{N_t N_R} \Big) \ge 1 - N_R N_T e^{-\frac{N_t(1-\sqrt{1/3})^2}{2}}.
\label{Atbound4}
\end{equation}
The diligent reader may convince herself that the probability
in~\eqref{Atbound4} is indeed close to one under the
condition~\eqref{coherenceproperty2}.
We insert~\eqref{A_estimate} and~\eqref{Atbound4} into~\eqref{Atbound0} and
obtain 
\begin{equation}
\Prob \Big( \|\vAt\|_{\op}^2 < 3N_T (1+\log N_t) \Big) 
\ge 1 - e^{-\frac{N_t(1-\sqrt{1/3})^2}{2}} - N_t^{1 - C \sqrt{N_T}}.
\label{Atbound5}
\end{equation}
which proves~\eqref{Atilde_estimate}.

To establish~\eqref{tildecoherencebound} we first note that
\begin{equation}
\label{coherence3}
\mu(\tilde{\vA}) \le \underset{(\tau,\beta) \neq (\tau',\beta')}{\max} 
\Big\{ \vD^{-1}_{(\tau,\beta),(\tau,\beta)}
|(\vA^{\ast}\vA)_{(\tau,\beta),(\tau',\beta')}| 
\vD^{-1}_{(\tau',\beta'),(\tau',\beta')}\Big\},
\end{equation}
where $\vD^{-1}_{(\tau,\beta),(\tau,\beta)} = \|\colA\|_2^{-1}$.
Using Lemma~\ref{le:concentration1} and~\eqref{Acolumn} we compute
\begin{equation}
\Prob \Big( \|\colA\|_2 > \sqrt{N_t N_R} - \sqrt{N_R} t \Big) 
  \ge 1-e^{-\frac{ t^2}{2}}.
%\label{Atbound6}
\notag
\end{equation}
Therefore
\begin{equation}
\Prob \Big( \frac{1}{\|\colA\|_2} < 
\frac{1}{\sqrt{N_t N_R} - \sqrt{N_R} t} \Big) \ge 1-e^{-\frac{t^2}{2}},
%\label{Atbound7}
\notag
\end{equation}
and thus
\begin{equation}
\Prob \Big( |\tilde{\vA}^{\ast} \tilde{\vA})_{(\tau,\beta),(\tau',\beta')}| 
\le \frac{1}{(\sqrt{N_t N_R} - \sqrt{N_R} t)^2} 
|(\vA^{\ast}\vA)_{(\tau,\beta),(\tau',\beta')}| 
\Big) \ge 1-2e^{-\frac{t^2}{2}},
\label{Atbound8}
\notag
\end{equation}
By choosing $t=(1-1/\sqrt{2})\sqrt{N_t}$, we can write~\eqref{Atbound8} as
\begin{equation}
\Prob \Big( |\tilde{\vA}^{\ast} \tilde{\vA})_{(\tau,\beta),(\tau',\beta')}| 
\le \frac{2}{N_t N_R} 
|(\vA^{\ast}\vA)_{(\tau,\beta),(\tau',\beta')}| 
\Big) \ge 1-2e^{-\frac{N_t(\sqrt{2}-1)^2}{4}}.
\label{Atbound9}
\notag
\end{equation}
Finally, plugging~\eqref{Atbound9} into~\eqref{coherence3} and 
using~\eqref{eq:coherencebound} we arrive at
\begin{equation}
\Prob \Big(\mu(\vAt) \le 6 \sqrt{\frac{1}{N_t} \log (N_\tau N_R N_T)} \Big) 
\ge 1-2e^{-\frac{N_t(\sqrt{2}-1)^2}{4}}-2 (N_R N_T)^{-1}-6(N_t N_R N_T)^{-1}.
%\label{coherence8}
\notag
\end{equation}
\end{proof}

We are now ready to prove Theorem~\ref{th:lasso}. Among others it
hinges on a (complex version of a) theorem by Cand{\`e}s and Plan~\cite{CP08},
which is stated in Appendix B.

\medskip
\noindent
{\bf Proof of Theorem~\ref{th:lasso}:}
We first point out that the assumptions of Theorem~\ref{th:lasso} imply that 
the conditions of Lemma~\ref{th:normbound} and 
Lemma~\ref{th:coherencebound} are fulfilled. For Lemma~\ref{th:normbound} 
this is obvious. Concerning Lemma~\ref{th:coherencebound}, an easy
calculation shows that the conditions $(\log (N_\tau N_R N_T))^3 \le N_t$ 
and $N_t \ge 128$ indeed yield that $\log (N_t N_R N_T) \le \frac{N_t}{23}$.

Note that the solution $\tilde{\vx}$ of~\eqref{lassoD} and the solution
$\tilde{\vxt}$ of the following lasso problem
\begin{equation}
\underset{\vxt}{\min}\, \frac{1}{2}\|\vA \vD^{-1}\vxt - \vy\|_2^2 + \lambda
\|\vxt\|_1, \qquad \text{with}\,\,\lambda = 2 \sigma\sqrt{2 \log(N_\tau N_R N_T)},
\label{lassoDI}
\end{equation}
satisfy $\supp(\tilde{\vx}) = \supp(\vD^{-1}\tilde{\vxt})$.

We will first establish the claims in Theorem~\ref{th:lasso}
for the system $\vAt \vxt = \vy$ in~\eqref{Atilde} where  $\vAt = \vA
\vD^{-1}$, $\vxt = \vD\vx$ and then switch back to $\vA \vx = \vy$.

We verify first condition~\eqref{amplitudeproperty}. 
Property~\eqref{amplitudeproperty2} and the fact that $\vz = \vD \vx$ imply
that
\begin{equation}
|z_k| \ge \frac{10 \|\vA_{\tau,\beta}\|_2}{\sqrt{N_R N_t}}
\sigma\sqrt{2\log (N_\tau N_\beta)}, \qquad \text{for $(\tau,\beta) \in S$.}
\label{zcond1}
\end{equation}
Using Lemma~\ref{le:concentration1} we get that
\begin{equation}
\Prob\Big(\|\vA_{\tau,\beta}\| \ge \sqrt{N_R N_t} - t\Big) 
    \ge 1 - e^{-\frac{t^2}{2}}.
\label{Anormdev}
\end{equation}
Choosing $t= \frac{2}{10} \sqrt{N_R N_t}$ and combining~\eqref{Anormdev} 
with~\eqref{zcond1} gives
\begin{equation}
|z_k| \ge 8 \sigma \sqrt{2\log (N_\tau N_\beta)}, \qquad \text{for $k \in S$,}
%\label{zcond2}
\notag
\end{equation}
with probability at least $1- e^{-\frac{N_R N_t}{25}}$, thus establishing
condition~\eqref{amplitudeproperty}.

Note that $\vAt$ has unit-norm columns as required by Theorem~\ref{th:CP}. 
It remains to verify condition~\eqref{coherenceproperty}.
Using the assumption~\eqref{coherenceproperty2},
%$\big(\log (N_\tau N_R N_T) \big)^3 \le N_t$ 
and the coherence bound~\eqref{tildecoherencebound} we compute
$$\mu^2(\vAt) \le 36 \frac{1}{N_t}\log(N_\tau N_R N_T) \le
36 \frac{\log(N_\tau N_R N_T)}{\log^3(N_\tau N_R N_T)} = 
\frac{36}{\log^2(N_\tau N_R N_T)},$$
which holds with probability as in~\eqref{tildecoherencebound}, and
thus the coherence property~\eqref{coherenceproperty} is fulfilled.

Furthermore, using~\eqref{Atilde_estimate} we see that
condition~\eqref{lassosparsity1} implies
\begin{equation}
K \le \frac{c_0 N_\tau N_R}{3(1+\log N_t) \log (N_\tau N_R N_T)} \le
\frac{c_0 N_\tau N_R}{\|\vAt\|_{\op}^2 \log (N_\tau N_R N_T)} 
%\label{lassosparsity2}
\notag
\end{equation}
with probability as stated in~\eqref{Atilde_estimate}.
Thus assumption~\eqref{lassosparsity} of Theorem~\ref{th:CP} is also
fulfilled (with high probability) and we obtain that
\begin{equation}
\label{support4}
\supp (\tilde{\vxt}) = \supp (\vxt).
\end{equation}
We note that the relation
$\supp(\tilde{\vx}) = \supp(\vx)$ holds with the same probability as
the relation $\supp(\tilde{\vxt}) = \supp(\vxt)$ (see
equation~\eqref{support4}), since $\supp(\vxt) = \supp(\vx)$ and 
multiplication by an invertible diagonal matrix does not change the support 
of a vector. This establishes~\eqref{support2} with the corresponding
probability.

As a consequence of~\eqref{condbound} we have the following error bound 
\begin{equation}
\frac{\|\tilde{\vxt} - \vxt \|_2}{\|\vxt\|_2} 
    \le \frac{3 \sigma \sqrt{N_\tau N_\beta}}{\|\vy\|_2}
\label{error4}
\end{equation}
which holds with probability at least
\begin{equation}
\big(1 - p_1)(1 - p_2\big)(1- e^{-\frac{N_R N_t}{25}})
\big(1 - 2(N_\tau N_\beta)^{-1}(2\pi \log (N_\tau N_\beta) + K(N_\tau N_\beta)^{-1})
- {\cal O}((N_\tau N_\beta)^{-2 \log 2})\big),
\notag
\end{equation}
where the probabilities $p_1, p_2$ are as in~Lemma~\ref{le:tildebounds}.
Using the fact that $\tilde{\vz} = \vD \tilde{\vx}$, we compute
$$
\frac{1}{\kappa(\vD)} \frac{\|\tilde{\vx} - \vx \|_2}{\|\vx\|_2}
\le \frac{\|\vD(\tilde{\vx} - \vx) \|_2}{\|\vD\vx\|_2}  =
\frac{\|\tilde{\vxt} - \vxt \|_2}{\|\vxt\|_2},
$$
or, equivalently,
\begin{equation}
 \frac{\|\tilde{\vx} - \vx \|_2}{\|\vx\|_2} \le
\kappa(\vD) \frac{\|\tilde{\vxt} - \vxt \|_2}{\|\vxt\|_2}.
\label{z2x}
\end{equation}
Proceeding along the lines of~\eqref{Atbound1}-\eqref{Atbound4}, we estimate
\begin{equation}
\Prob \big(\kappa (\vD) \le 2 \big) 
\ge 1 - N_R N_T e^{-\frac{N_t(1-\sqrt{1/3})^2}{2}}.
\label{condD}
\end{equation}
The bound~\eqref{error2} follows now from combining~\eqref{error4}
with~\eqref{z2x} and~\eqref{condD}.
\QED

%\end{proof}

%%%%%%%%%%%%%%%%%%%%%%%%%%%%%%%%%%%%%%%%%%%%%%%%%%%%%%%%%%%%%%%%%%%%%%
\section{Recovery of targets in the Doppler case}
\label{s:doppler}
%%%%%%%%%%%%%%%%%%%%%%%%%%%%%%%%%%%%%%%%%%%%%%%%%%%%%%%%%%%%%%%%%%%%%%

In this section we analyze the case of moving targets/antennas, as 
described in~\ref{ss:doppler}. As in the stationary setting, we
assume that $s_i(t)$ is a periodic, continuous-time white Gaussian
noise signal of period-duration $T$ seconds and bandwidth $B$. The transmit
waveforms are normalized so that the total transmit power is fixed,
independent of the number of transmit antennas. Thus, we assume that
the entries of $s_i(t)$ have variance $\frac{1}{N_T}$.

\begin{theorem}
\label{th:doppler}
Consider $\vy = \vA \vx +\vw$, where $\vA$ is as defined in
Subsection~\ref{ss:doppler} and $\vw_i \in {\cal CN}(0,\sigma^2)$.
Choose the discretization stepsizes to be $\Delta_\beta = \frac{2}{N_R N_T}$, 
$\Delta_\tau = \frac{1}{2B}$ and $\Delta_f = \frac{1}{T}$. 
Let $d_T = 1/2, d_R = N_T/2$ or $d_T = N_R/2, d_R = 1/2$, and suppose that 
\begin{equation*}
N_t \ge 128, \qquad \max\{N_\tau,N_f,\sqrt{N_\tau,N_f}\} \ge \sqrt{N_\beta}, 
\qquad \text{and} \qquad \big(\log (N_\tau N_\beta) \big)^3 \le N_t.
%\label{coherenceproperty3}
\end{equation*}
If $\vx$ is drawn from the generic $K$-sparse target model with
\begin{equation*}
K \le \Kmax := \frac{c_0 N_\tau N_f N_R}{6 \log (N_\tau N_f N_\beta)}
%\label{lassosparsity3}
\end{equation*}
for some constant $c_0>0$, and if
\begin{equation*}
\underset{k\in I}{\min}\, |\vx_k| > \frac{10 \sigma}{\sqrt{N_R N_t}} \sqrt{2 \log N_\tau N_f N_\beta},
%\label{amplitudeproperty3}
\end{equation*}
then the solution $\tilde{\vx}$ of the debiased lasso computed with
$\lambda = 2\sigma \sqrt{2 \log(N_\tau N_f N_\beta)}$
%\begin{equation}
%\underset{\vx}{\min}\, \frac{1}{2}\|\vA\vx - \vy\|_2^2 + \lambda
%\|\vD\vx\|_1, \qquad \text{with}\,\,\lambda = 2 \sqrt{2 \log(N_\tau N_R N_T)},
%\label{lassoD}
%\end{equation}
obeys
\begin{equation*}
%\label{support5}
\supp (\tilde{\vx}) = \supp (\vx),
\end{equation*}
with probability at least 
\begin{equation*}
%\label{probbound5}
\notag
(1 - p_1)(1 - p_2)(1-p_3)(1-p_4),
\end{equation*}
and
\begin{equation*}
\frac{\|\tilde{\vx} - \vx \|_2}{\|\vx\|_2} 
    \le \frac{ \sigma \sqrt{12 N_t N_R}}{\|\vy\|_2}
%\label{error3}
\end{equation*}
with probability at least 
\begin{equation*}
%\label{probbound5a}
\notag
(1 - p_1)(1 - p_2)(1-p_3)(1-p_4)(1 - p_5),
\end{equation*}
where
$$p_1=e^{-\frac{(1-\sqrt{1/3})^2 N_t}{2}}+N_T e^{-(\sqrt{3/2}-\sqrt{2}) N_t},$$
$$ 
p_2= 2 (N_R N_T)^{-1} + 2 (N_\tau N_R N_T)^{-1} + 2 (N_f N_R N_T)^{-1}
+ 6 (N_\tau N_f N_R N_T)^{-1} + 2e^{-\frac{N_t(\sqrt{2}-1)^2}{4}},$$
$$p_3 = N_R N_T e^{-\frac{(1-\sqrt{1/3})^2 N_t}{2}}, \qquad
p_4 = e^{-\frac{N_R N_t}{25}},$$
and
$$
p_5 = 2(N_\tau N_\beta)^{-1}(2\pi \log (N_\tau N_\beta) + S(N_\tau N_\beta)^{-1})
+ {\cal O}((N_\tau N_\beta)^{-2 \log 2}).$$

\end{theorem}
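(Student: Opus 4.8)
The plan is to reproduce the architecture of the proof of Theorem~\ref{th:lasso} almost verbatim, with the two-index label $(\tau,\beta)$ replaced by the three-index label $(\tau,\beta,\dopp)$. As before, everything ultimately rests on feeding two estimates for the enlarged matrix $\vA$---a bound on $\|\vA\|_\op$ and a bound on its coherence---into the Cand\`es--Plan theorem (Theorem~\ref{th:CP}). So I would first prove Doppler analogues of Lemma~\ref{th:normbound} and Lemma~\ref{th:coherencebound}, then an analogue of Lemma~\ref{le:tildebounds} for the column-normalized matrix $\vAt=\vA\vD^{-1}$, and finally assemble them as in the proof of Theorem~\ref{th:lasso}.

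For the operator-norm estimate I would again expand $\vA\vA^\ast$ as in~\eqref{matrixstructure}, now carrying the extra Doppler sum. Because $\Df=1/T$ and $\Delta_t=1/(2B)$, the Doppler phases satisfy $e^{j2\pi \dopp l\Delta_t}=e^{j2\pi kl/N_t}$, i.e.\ they are exact DFT characters; summing the rank-one outer products over the Doppler grid therefore contributes a factor $\sum_{\dopp}e^{j2\pi \dopp(l-l')\Delta_t}$ that is orthogonal in the time indices $l,l'$. Combined with the receive-array orthogonality and transmit cancellation already exploited in Lemma~\ref{th:normbound}, this collapses $\vA\vA^\ast$ to a (block-)diagonal matrix whose diagonal entries are proportional to the signal energies $\sum_k\|\vs_k\|_2^2$. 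Unlike the Doppler-free case, no quadratic-form tail bound is then needed for the diagonalization step: I would control $\|\vA\|_\op^2$, now the largest diagonal entry of $\vA\vA^\ast$, directly through the Gaussian norm-concentration inequality of Lemma~\ref{le:concentration1}, taking a union bound over the $N_T$ transmit antennas. This is what produces the modified tail $N_T e^{-cN_t}$ appearing in $p_1$.

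For the coherence estimate I would start, as in~\eqref{innprod}, from $\langle\colAD,\colADp\rangle=\langle\aR,\aRp\rangle\,\langle\Std\aT,\Stdp\aTp\rangle$, and use $\Std\aT=\vM_{\dopp}\vS_\tau\aT$ to rewrite the second factor as the cross-ambiguity quantity $\langle\vS_\tau\aT,\vM_{\dopp'-\dopp}\vS_{\tau'}\aTp\rangle$. The receive-array relation~\eqref{eq:orth1} again restricts attention to $n-n'=kN_R$, forcing $\aT\perp\aTp$ whenever $\beta\neq\beta'$. I would then split into the cases determined by which of the three labels coincide: (a) $\beta\neq\beta'$, which mirrors Cases~(i)--(ii) of Lemma~\ref{th:coherencebound} via Lemma~\ref{le:concentration3} and Lemma~\ref{le:gaussianinner}, the only new feature being the unit-modulus Doppler phase, which reweights but preserves the independence and mean-zero structure of the summands; (b) $\beta=\beta'$, $\tau\neq\tau'$, handled by the pigeonhole splitting of Case~(iii) into two jointly-independent half-sums of the modulated products $\bar g_{l-\tau}\,g_{l-\tau'}\,e^{j2\pi(\dopp'-\dopp)l\Delta_t}$; and (c) $\beta=\beta'$, $\tau=\tau'$, $\dopp\neq\dopp'$, in which the inner product is the quadratic form $\vg^\ast D\vg$ with $D$ the diagonal Doppler modulation $e^{j2\pi(\dopp'-\dopp)l\Delta_t}$, hence mean-zero by character orthogonality and controlled by the quadratic-form tail bound of Lemma~\ref{le:quadform}. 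Choosing $t$ of order $\sqrt{N_t\log(N_\tau N_f N_R N_T)}$ throughout and taking the union bounds over the azimuth, delay, and Doppler grids yields the four polynomial terms $2(N_R N_T)^{-1}$, $2(N_\tau N_R N_T)^{-1}$, $2(N_f N_R N_T)^{-1}$, $6(N_\tau N_f N_R N_T)^{-1}$ of $p_2$ and the per-pair bound $3N_R\sqrt{N_t\log(N_\tau N_f N_\beta)}$.

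The remaining steps are direct transcriptions. I would pass to $\vAt=\vA\vD^{-1}$ with $\vD_{(\tau,\beta,\dopp)}=\|\colAD\|_2$, transfer the two estimates through the column-norm concentration of Lemma~\ref{le:concentration1} (contributing the $2e^{-N_t(\sqrt2-1)^2/4}$ and $e^{-N_R N_t/25}$ terms), and verify the three hypotheses of Theorem~\ref{th:CP}: the coherence hypothesis from $\mu^2(\vAt)\le 36\log(N_\tau N_f N_\beta)/\log^3(N_\tau N_f N_\beta)=36/\log^2(N_\tau N_f N_\beta)$; the amplitude hypothesis from the SNR assumption exactly as in~\eqref{zcond1}; and the sparsity hypothesis from the enlarged $\Kmax=c_0 N_\tau N_f N_R/(6\log(N_\tau N_f N_\beta))$, where the extra factor $N_f$ traces back to the $N_f$-fold larger column count $n=N_\tau N_\beta N_f$. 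Support recovery and the $\ell_2$ error bound are then carried from $\vxt=\vD\vx$ back to $\vx$ using $\kappa(\vD)\le2$ as in~\eqref{z2x}--\eqref{condD}. I expect the coherence estimate, and specifically the mixed case~(b) where the Doppler modulation destroys the clean circulant-shift structure of Case~(iii), to be the main obstacle: one must check that the pigeonhole decomposition into jointly-independent half-sums survives the extra modulation factor, and that the additional $N_f$ possibilities in the union bound are absorbed by the choice of $t$ under the hypothesis $(\log(N_\tau N_\beta))^3\le N_t$.
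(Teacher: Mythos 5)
Your proposal is correct and follows essentially the same architecture as the paper: the same three-lemma skeleton (operator norm, coherence, normalized-matrix bounds, i.e.\ the paper's Lemmas~\ref{th:normboundoppler}, \ref{th:coherencedoppler}, \ref{le:tildeboundsdoppler}), the same key observation that the Doppler orthogonality $\sum_{m=1}^{N_f} e^{j2\pi(l-l')m\Delta_t\Delta_f}=N_f\delta_{l-l'}$ collapses $\vA\vA^{\ast}$ to a scaled identity so that only Gaussian norm concentration (Lemma~\ref{le:concentration1}) with a union bound over the $N_T$ waveforms is needed, and the same final assembly through Theorem~\ref{th:CP}. The one genuine difference is in how the coherence cases are carved up. The paper keeps $\beta\neq\beta'$ split into four subcases and disposes of the pure-modulation subcase ($\tau=0$, $f\neq0$) by Fourier duality ($(\vT_\tau\vx)^{\wedge}=\vM_{-\tau}\hat{\vx}$ plus invariance of the Gaussian law under the DFT), then lumps all of $\beta=\beta'$ into one case claimed to follow from Case~(iii) of Lemma~\ref{th:coherencebound}. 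You instead merge all of $\beta\neq\beta'$ (where independence of $\vS_\tau\aT$ and $\vS\aTp$ makes the modulation harmless) and split $\beta=\beta'$ according to whether $\tau\neq\tau'$ or $\tau=\tau'$, $f\neq f'$. Your treatment of the latter subcase via Lemma~\ref{le:quadform} is actually more careful than the paper's: when $\tau=\tau'$ the inner product is $\sum_l |g_l|^2 e^{j2\pi(f'-f)l\Delta_t}$, which is \emph{not} of the form $\sum_l \bar{X}_l Y_l$ with independent factors, so the pigeonhole-independence argument of Case~(iii) (and hence Lemma~\ref{le:gaussianinner}) does not literally apply; one needs either your Hanson--Wright route or a centering $\sum_l(|g_l|^2-1)e^{j2\pi(f'-f)l\Delta_t}$ (the mean vanishes by character orthogonality) followed by Bernstein for sub-exponential terms. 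So your version fills a small gap that the paper's sketch glosses over, at the cost of invoking one extra tool.
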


\begin{proof}
The proof is very similar to that of Theorem~\ref{th:lasso}. Below 
we will establish the analogs of the key steps, Lemma~\ref{th:normbound},
Lemma~\ref{th:coherencebound}, and Lemma~\ref{le:tildebounds},
and leave the rest to the reader.
\end{proof}

\begin{lemma}
\label{th:normboundoppler}
Let $\vA$ be as defined in Theorem~\ref{th:doppler}. Then
\begin{equation}
\label{DA_estimate}
\Prob \Big(\|\vA\|^2_{\text{op}} \le 2 N_t N_f N_R N_T \Big) \ge
1 - N_T e^{-N_t(\frac{3}{2}-\sqrt{2})}.
\end{equation}
\end{lemma}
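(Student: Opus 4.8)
The plan is to mirror the structure of the proof of Lemma~\ref{th:normbound} as closely as possible, adapting each step to account for the additional Doppler dimension. Recall that in the Doppler-free case the columns of $\vA$ were $\aR \otimes (\St \aT)$; now each column carries an extra modulation factor, so the relevant quantity is $\aR \otimes (\Std \aT)$, where $\Std$ consists of the delayed \emph{and} Doppler-shifted signals. As before I would begin by writing $\|\vA\|_\op^2 = \|\vA\vA^\ast\|_\op$ and exhibiting $\vA\vA^\ast$ as a block matrix indexed by the receive antennas $i,i'$, with each block of size $N_t \times N_t$. The first task is to recompute the analog of~\eqref{matrixstructure}: summing over the azimuth grid $\beta$ and over both the delay grid $\tau$ and the Doppler grid $f$. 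The azimuth sum collapses exactly as in Lemma~\ref{th:normbound} (the choice $d_T=1/2,d_R=N_T/2$ together with $\Delta_\beta = \frac{2}{N_RN_T}$ produces the same $N_T N_R\,\delta_{k-k'}\delta_{i-i'}$ orthogonality), so $\vA\vA^\ast$ is again block-diagonal with identical diagonal blocks, and $\|\vA\|_\op^2 = \|\vB_{1,1}\|_\op$.

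The crucial difference is that the inner correlation $G_{k,k}$ now includes a sum over the $N_f$ Doppler shifts of the quantities $s_k(\cdot)e^{j2\pi f\cdot}$. The plan is to exploit the circulant-in-delay structure exactly as before, so that $\|\vB_{1,1}\|_\op = \sqrt{N_t}\,\|\hat{\vb}\|_\infty$ with $\vb$ the first column, and then to rewrite $\sqrt{N_t}\,\hat{\vb}(k)$ as a Gaussian quadratic form $\langle \vz, \vVk \vz\rangle$ with $\vz = [\vs_1^T,\dots,\vs_{N_T}^T]^T$. I would define the block-diagonal forms $\vUl$ and their Fourier transforms $\vVk$ analogously to~\eqref{eq:quadmatrix}, the only change being that the summation over $f$ inflates the trace and the relevant norms by a factor of $N_f$. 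Concretely, I expect $\trace(\vVk)$, $\|\vVk\|_\op$, and $\|\vVk\|_F$ to each pick up the appropriate power of $N_f$, so that $\Exp\langle \vz,\vVk\vz\rangle = \frac{1}{N_T}\trace(\vVk) = N_t N_f N_R N_T$, matching the target mean in~\eqref{DA_estimate}.

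With the moments in hand I would invoke Lemma~\ref{le:quadform} (the Hanson--Wright-type tail bound for Gaussian quadratic forms) to control $|\sqrt{N_t}\,\hat{\vb}(l)|$ around its mean $N_t N_f N_R N_T$. The bound~\eqref{DA_estimate} is stated without a logarithmic slack (the target is simply $2N_tN_fN_RN_T$, i.e.\ a deviation of one full mean rather than a $\log N_t$ factor), so I would choose the deviation parameter $t = N_t N_f N_R N_T$ and read off the resulting exponent; the claimed probability $N_T e^{-N_t(\frac{3}{2}-\sqrt{2})}$ comes from the $\min\{t/\|\vVk\|_\op,\,t^2/\|\vVk\|_F^2\}$ in the tail bound, after a union bound over the $N_t$ choices of $l$ (and the extra $N_T$ factor reflecting the $N_T$ diagonal blocks), with the numerical constants tracked carefully enough to produce $\frac{3}{2}-\sqrt{2}$.

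The main obstacle I anticipate is not any single estimate but the bookkeeping in recomputing the Fourier-domain moments $\trace(\vVk)$, $\|\vVk\|_\op$, and $\|\vVk\|_F$ once the Doppler sum is folded in: I must verify that adding $N_f$ modulated copies of each waveform preserves the clean block-diagonal Fourier structure that made the Doppler-free computation tractable, rather than introducing off-diagonal coupling between distinct Doppler bins. The choice $\Delta_f = \frac{1}{T}$ is presumably exactly what orthogonalizes the modulations over the sampling window (so that $\sum_f e^{j2\pi(f-f')\,\cdot}$ yields a Kronecker delta), and confirming this orthogonality is the linchpin of the whole argument. Once that structure is established, the tail-bound step is a routine repetition of the Doppler-free case with the moments scaled by $N_f$.
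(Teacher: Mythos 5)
Your proposal correctly anticipates the linchpin --- that $\Delta_f = \tfrac1T$ orthogonalizes the Doppler modulations --- but it misses where the resulting Kronecker delta lands, and this derails the rest of the plan. In $\vA\vA^\ast$ the Doppler bins are \emph{summed over}, and since $\Delta_t\Delta_f = \tfrac{1}{2BT} = \tfrac{1}{N_f}$, the sum is
\begin{equation}
\sum_{m=1}^{N_f} e^{j2\pi (l-l')\Delta_t\, m \Delta_f} \;=\; N_f\,\delta_{l-l'},
\notag
\end{equation}
i.e.\ a delta in the \emph{time-sample} indices $(l,l')$. Combined with the azimuth orthogonality (delta in $i-i'$), this makes every block $\vB_{i,i'}$ not merely circulant but diagonal, and in fact the paper shows
$\vA\vA^\ast = \big(N_T N_R N_f \sum_{k=1}^{N_T}\|\vs_k\|_2^2\big)\,\vI$ exactly: a scaled identity whose scalar is a random (chi-squared--type) variable. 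There is therefore nothing left for the Fourier-transform step or for Lemma~\ref{le:quadform} to do; your picture of ``the Doppler-free argument with all moments inflated by $N_f$'' does not describe the actual structure, since $\vb(l)$ is supported only at $l=0$ and all Fourier coefficients $\hat{\vb}(k)$ coincide.

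This structural misreading matters because your tail-bound plan cannot produce the stated estimate. First, Lemma~\ref{le:quadform} carries an \emph{unspecified} numerical constant $C$ in the exponent, so no amount of careful bookkeeping can extract the explicit constant $\tfrac32-\sqrt{2}$ from it; the paper gets that constant by applying the explicit Gaussian-norm concentration of Lemma~\ref{le:concentration1} to each $\|\vs_k\|_2$ (which concentrates around $\sqrt{N_t/N_T}$), with the deviation chosen so that $\sum_k \|\vs_k\|_2^2 \ge 2N_t$ fails with probability at most $e^{-N_t(3/2-\sqrt2)}$ per waveform. Second, your union-bound accounting (over the $N_t$ Fourier bins $l$, plus an ``extra $N_T$ factor from the diagonal blocks'') does not match the claimed probability: the prefactor $N_T$ in \eqref{DA_estimate} comes from a union bound over the $N_T$ transmit waveforms $\vs_1,\dots,\vs_{N_T}$, and there is no union over $N_t$ anywhere, precisely because $\vA\vA^\ast$ is a multiple of the identity. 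The correct proof is thus substantially \emph{simpler} than the Doppler-free one, not a repetition of it.
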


\begin{proof}
We proceed as in the proof of Lemma~\ref{th:normbound}.
There holds $\|\vA\|_\op^2 = \|\vA \vA^{\ast}\|_\op$.
It is convenient to consider $\vA \vA^{\ast}$ as block matrix
$$
\begin{bmatrix}
\vB_{1,1}            & \vB_{1,2}   & \dots      & \vB_{1,N_R} \\
\vdots             & \ddots    &            & \vdots    \\
\vB_{N_R,1}^{\ast}   &           &            & \vB_{N_R,N_R}
\end{bmatrix},
$$
where the blocks $\{\vB_{i,i'}\}_{i,i'=1}^{N_R}$ are matrices of size 
$N_t \times N_t$. We claim that $\vA \vA^{\ast}$ is a block-Toeplitz matrix 
(i.e., $\vB_{i,i'} = \vB_{i+1,i'+1}, i=1,\dots, N_R-1$) and the individual 
blocks $\vB_{i,i'}$ are circulant matrices. To see this, recall the structure 
of $\vA$ and consider the entry $\vB_{[i,l;i',l']}$, $i,i'=1,\dots,N_R;
l,l'=1,\dots,N_t$:
\begin{gather}
\vB_{[i,l;i',l']}  =  (\vA \vA^{\ast})_{[i,l;i',l']} = 
\sum_{\beta} \sum_{\tau} \sum_{f}^{}\vA_{[i,l;\tau,f,\beta]} 
\vA_{[i',l';\tau,f,\beta]}\notag \\
 =  \sum_{\beta} e^{j2\pi d_R (i-i') \beta}
\sum_{k=1}^{N_T} \sum_{k'=1}^{N_T} e^{j2\pi d_T (k-k')\beta}
G_{k,k'}(l,l') \sum_{m=1}^{N_f} e^{j2\pi (l-l')\Delta_t m\Delta_f } \notag\\
 =  \sum_{n=0}^{N_R N_T-1} e^{j2\pi (i-i')\frac{nN_T}{N_R N_T}}
\sum_{k=1}^{N_T} \sum_{k'=1}^{N_T} e^{j2\pi (k-k')\frac{n}{N_R N_T}}
G_{k,k'}(l,l') N_f \delta_{l-l'} \label{eq:expsum}\\
= N_T N_R N_f \sum_{k=1}^{N_T} \|\vs_{k}\|^2 \delta_{i-i'}\delta_{l-l'} 
\label{matrixstructuredoppler}
\end{gather}
where we have used in~\eqref{eq:expsum} that $N_f = \frac{2B}{\Delta_f} =
2BT$, whence $\sum_{m=1}^{N_f} e^{j2\pi (l-l')m\Delta_t \Delta_f} = N_f \delta_{l-l'}$.
Thus
\begin{equation}
\label{eq:A2I}
\vA\vA^{\ast} =  (N_T N_R N_f \sum_{k=1}^{N_T} \|\vs_{k}\|^2) \, \vI,
\end{equation}
i.e., $\vA \vA^{\ast}$ is just a scaled identity matrix.
Since $\vs_k$ is a Gaussian random vector with $\vs_k(j) \sim {\cal CN}(0,1)$,
Lemma~\ref{le:concentration1} yields
\begin{equation}
\Prob \Big( \|\vs_k\|_2^2 - (\Exp \|\vs_k\|_2)^2  \ge t(t+2\Exp \|\vs_k\|_2)
\Big) \le e^{- t^2/2},
\label{gaussiannorms}
\end{equation}
where we note that $\Exp \|\vs_k\|_2 = \sqrt{\frac{N_t}{N_T}}$.
We choose $t = (\sqrt{2}-1)\sqrt{N_t}$, and obtain, after forming the
union bound over $k=1,\dots,N_t-1$, 
\begin{equation}
\Prob \Big( \sum_{k=1}^{N_T}\|\vs_k\|_2^2  )^2  \ge 2N_t \Big) 
\le N_T e^{-N_t(\frac{3}{2}-\sqrt{2})}.
\label{gaussiannorms1}
\end{equation}
The bound~\eqref{DA_estimate} now follows from~\eqref{eq:A2I}.
\end{proof}

Next we establish a coherence bound for $\vA$.
\begin{lemma}
\label{th:coherencedoppler}
Let $\vA$ be as defined in the Doppler case. Assume that 
\begin{equation}
N \ge \sqrt{N_{\beta}}
\label{eq:coherencecondition1}
\log(N N_\beta) < \frac{N_t}{30},
\end{equation}
where $N: = \max\{N_\tau,N_f,\sqrt{N_\tau N_f}\}$. Then
\begin{equation}
\underset{(\tau,\dopp,\beta)\neq (\tau',\dopp',\beta')}{\max} 
\big|\langle  \vA_{\tau,\dopp,\beta},\vA_{\tau',\dopp',\beta'} \rangle \big|
\le 3 N_R \sqrt{N_t \log(N_\tau N_f N_\beta)} 
%\label{eq:coherencebound1}
\notag
\end{equation}
with probability at least 
$1 - 2 (N_R N_T)^{-1} - 2 (N_\tau N_R N_T)^{-1} - 2 (N_f N_R N_T)^{-1}
- 6 (N_\tau N_f N_R N_T)^{-1}$.
\end{lemma}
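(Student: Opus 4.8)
The plan is to mirror the architecture of the proof of Lemma~\ref{th:coherencebound}, isolating the single genuinely new ingredient caused by the Doppler modulation. As in the Doppler-free case, a direct computation gives $\vA_{\tau,\dopp,\beta} = \aR \otimes (\Std \aT)$, and hence the factorization
\begin{equation}
\langle \vA_{\tau,\dopp,\beta}, \vA_{\tau',\dopp',\beta'}\rangle
 = \langle \aR,\aRp\rangle \, \langle \Std\aT, \Stdp\aTp\rangle .
\notag
\end{equation}
Because the array manifolds depend only on $\beta$, the azimuth orthogonality relations~\eqref{eq:orth1} and~\eqref{eq:orth2} carry over verbatim: $\langle\aR,\aRp\rangle$ vanishes unless $n-n'=kN_R$, has modulus $N_R$ otherwise, and for $k\neq 0$ the transmit manifolds $\aT,\aTp$ are orthogonal. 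This pulls the factor $N_R$ out of the bound and leaves the second factor. The only new feature is that the Doppler shift acts as a diagonal unitary modulation by the phases $e^{j2\pi \dopp l\Delta_t}$, so that $\langle\Std\aT,\Stdp\aTp\rangle$ depends on the Doppler parameters only through the difference $\dopp-\dopp'$.

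First I would dispose of all cases with $\aT\perp\aTp$, i.e.\ $\beta\neq\beta'$ with $n-n'=kN_R$, $k\neq 0$. Then $\St\aT$ and $\Stp\aTp$ are independent ${\cal CN}(0,1)$ vectors, and the modulation merely multiplies each summand of $\langle\Std\aT,\Stdp\aTp\rangle=\sum_l (\St\aT)_l\,\overline{(\Stp\aTp)_l}\,e^{j2\pi(\dopp-\dopp')l\Delta_t}$ by a deterministic unit-modulus phase; the summands remain jointly independent, so Lemma~\ref{le:gaussianinner} applies precisely as in Case~(ii) of Lemma~\ref{th:coherencebound}, with $t=3\sqrt{N_t\log(N_\tau N_f N_\beta)}$. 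If in addition $\tau=\tau'$ and $\dopp=\dopp'$, the shift and modulation cancel, the expression collapses to $\langle\vS^{\ast}\vS\aT,\aTp\rangle$, and I would instead invoke Lemma~\ref{le:concentration3}, reproducing Case~(i). The union-bound count then depends on how many of $\tau,\dopp,\beta$ are allowed to vary: the purely azimuthal case contributes the term $2(N_RN_T)^{-1}$, varying $\tau$ alone contributes $2(N_\tau N_RN_T)^{-1}$, varying $\dopp$ alone contributes $2(N_f N_RN_T)^{-1}$, and varying both contributes part of $6(N_\tau N_f N_RN_T)^{-1}$.

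It remains to treat $\beta=\beta'$ (so $\aT=\aTp$), where the two columns are no longer independent. Writing $g_l:=(\vS\aT)_l$, so that $\vg=(g_0,\dots,g_{N_t-1})$ is an i.i.d.\ ${\cal CN}(0,1)$ vector and $(\St\aT)_l=g_{l-\tau}$, the relevant quantity is $\sum_l g_{l-\tau}\,\overline{g_{l-\tau'}}\,e^{j2\pi(\dopp-\dopp')l\Delta_t}$. If $\tau\neq\tau'$, a change of variables rewrites this as $e^{j\theta}\sum_m \overline{g_m}\,g_{m-\delta}\,\omega^m$ with $\delta=\tau-\tau'\neq 0$ and $|\omega|=1$; since $\delta\neq 0$ the index set splits into two halves on which the products $\overline{g_m}\,g_{m-\delta}$ are jointly independent, exactly as in Case~(iii) of Lemma~\ref{th:coherencebound}, and the deterministic phases $\omega^m$ leave the sub-exponential tail intact, so Lemma~\ref{le:gaussianinner} together with the pigeonhole argument goes through unchanged. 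The genuinely new situation is $\tau=\tau'$ with $\dopp\neq\dopp'$, in which the sum becomes the quadratic form $\langle\vg,\vM\vg\rangle=\sum_l |g_l|^2 e^{j2\pi(\dopp-\dopp')l\Delta_t}$, where $\vM:=\vM_{(\dopp-\dopp')\Delta_t}$ is the diagonal modulation operator of~\eqref{modulation}.

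The main obstacle is precisely this last case: the inner product is now a quadratic form in a single Gaussian vector rather than a bilinear form in two independent ones, so neither the independent-products bound nor the delay-splitting trick applies directly. The key observation is that the discretization forces $\Delta_t\Delta_f=1/N_t$, hence $e^{j2\pi(\dopp-\dopp')l\Delta_t}=e^{j2\pi(k-k')l/N_t}$ with $\dopp=k\Delta_f$, $\dopp'=k'\Delta_f$ lies on the DFT grid; as $\dopp\neq\dopp'$ and $|k-k'|<N_t$ this yields $\trace\vM=\sum_l e^{j2\pi(k-k')l/N_t}=0$, so that $\Exp\langle\vg,\vM\vg\rangle=0$ and the mean contributes nothing. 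With $\|\vM\|_F=\sqrt{N_t}$ and $\|\vM\|_{\op}=1$, Lemma~\ref{le:quadform} gives $\Prob\big(|\langle\vg,\vM\vg\rangle|\ge t\big)\le 2\exp\!\big(-C\min\{t^2/N_t,\,t\}\big)$; the choice $t=3\sqrt{N_t\log(N_\tau N_f N_\beta)}$, combined with $\log(NN_\beta)<N_t/30$ from~\eqref{eq:coherencecondition1} to force the exponent below $-2\log(\cdot)$ (as in~\eqref{eq:bound10}), and a union bound over the $N_f$ Doppler pairs and $N_RN_T$ azimuth values, supplies the remaining contribution to $6(N_\tau N_f N_RN_T)^{-1}$. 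Collecting the four cases and taking the worst threshold establishes the claimed coherence bound in complete analogy with Lemma~\ref{th:coherencebound}.
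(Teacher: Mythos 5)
Your proposal is correct, and for most of the case analysis it coincides with the paper's own proof: the factorization $\langle \vA_{\tau,\dopp,\beta}, \vA_{\tau',\dopp',\beta'}\rangle = \langle \aR,\aRp\rangle\langle \Std\aT,\Stdp\aTp\rangle$, the reuse of the azimuth orthogonality relations, Lemma~\ref{le:concentration3} for the purely azimuthal case, Lemma~\ref{le:gaussianinner} for the independent-column cases, and the index-splitting trick for $\beta=\beta'$, $\tau\neq\tau'$ are all exactly what the paper does (its Cases (a), (b), (d), and the shift part of (e)), with the same union-bound bookkeeping producing the four probability terms. Where you genuinely diverge is the subcase $\beta=\beta'$, $\tau=\tau'$, $\dopp\neq\dopp'$, and your treatment is arguably more careful than the paper's. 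The paper disposes of all of $\beta=\beta'$ by asserting it is ``essentially taken care of'' by Case (iii) of Lemma~\ref{th:coherencebound}, since modulation by torus entries preserves Gaussianity; but that reduction only works when the delay difference is nonzero, and it silently skips the diagonal quadratic form $\sum_l |g_l|^2 e^{j2\pi(\dopp-\dopp')l\Delta_t}$ that you correctly single out as the one genuinely new object. You handle it via Hanson--Wright (Lemma~\ref{le:quadform}) together with the nice observation that the DFT-gridded Doppler differences force $\trace \vM = 0$, so the quadratic form is exactly centered; with $\|\vM\|_F=\sqrt{N_t}$, $\|\vM\|_{\op}=1$ this gives the right tail. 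The alternative — and what the paper implicitly intends, given its Case (c) — is to apply the Fourier-duality argument once more: under the DFT, $\vM_{\dopp-\dopp'}$ becomes the translation $\vT_{k-k'}$ with $k\neq k'$, and since an i.i.d.\ complex Gaussian vector is distributionally invariant under unitary maps, the diagonal subcase reduces verbatim to Case (iii)'s splitting argument. That route buys explicit constants throughout (Lemma~\ref{le:gaussianinner} has them, Hanson--Wright does not), which matters slightly because the lemma's stated failure probability carries explicit constants; your Hanson--Wright route establishes the claim only up to the unspecified numerical constant $C$, so if you want the bound exactly as stated you should either verify $C\ge 2/9$ or swap in the Fourier reduction (or Bernstein applied to the centered sub-exponential variables $(|g_l|^2-1)e^{j\theta_l}$) for that one subcase.
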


\begin{proof}

We have that $\vA_{\tau,\dopp,\beta} = \aR \otimes (\Std \aT)$.
A standard calculation shows that
\begin{equation}
\label{eq:shifttau1}
|\langle \Std \aT , \Stdp \aTp \rangle| =
|\langle \vS_{\tau - \tau',f-f'} \aT , \aTp \rangle|
\end{equation}
for $\tau,\tau'=0,\dots,N_\tau-1, f,f'=0,\dots,N_f-1$,
%and $|\langle \vS_\tau \aT , \aTp \rangle| = 
%|\langle \vS_{N_t - \tau} \aT , \aTp \rangle|$, 
thus we only need to consider $|\langle \Std \aT , \vS \aTp \rangle|$.
As in the proof of Lemma~\ref{th:coherencebound} we distinguish
several cases. 

\noindent
{\bf Case (a) $\beta\neq \beta', \tau =0, f=0$:}
In this case we are concerned with $|\langle \vS \aT , \vS \aTp \rangle|$, 
which is the same as Case (i) of Lemma~\ref{th:coherencebound}, except
that in the present case we have a bit more flexibility in choosing $t$
in the analogous version of~\eqref{innprod4}. Here we can choose 
$t= 3\sqrt{\frac{1}{N_t} \log (N N_R N_T)}$, where 
$N = \max\{N_\tau,N_f,\sqrt{N_\tau N_f}\}$. Proceeding then as in 
the proof of Case~(i) of Lemma~\ref{th:coherencebound} we obtain
\begin{equation}
\Prob\Big( |\langle \vA_{\tau,f,\beta},\vA_{\tau,f,\beta'}\rangle| \le 
3 N_R \sqrt{N_t \log(N_\tau N_R N_T)} \Big) \ge 
1 - 2 (N_R N_T)^{-1}. 
\label{eq:casea}
\end{equation}

\medskip
\noindent
{\bf Case (b) $\beta\neq \beta', \tau \neq 0, f = 0$:}
This is exactly the same as Case (ii) of Lemma~\ref{th:coherencebound}.
We obtain
\begin{equation}
\Prob\Big( |\langle \vA_{\tau,f,\beta},\vA_{\tau',f,\beta'}\rangle| \le 
3 N_R \sqrt{N_t \log(N_\tau N_R N_T)} \Big) \ge 
1 - 2 (N_\tau N_R N_T)^{-1}. 
\label{eq:caseb}
\end{equation}

\medskip
\noindent
{\bf Case (c) $\beta\neq \beta', \tau  = 0, f \neq 0$:}
It is well known that $(\vT_\tau \vx)^{\wedge} = \vM_{-\tau}\hat{\vx}$.
Hence, by Parseval's theorem, $\langle \vT_\tau \vx, \vy \rangle=
\langle \vM_{-\tau} \hat{\vx},\hat{\vy} \rangle$.
Since the normal distribution is invariant under Fourier transform, 
this case is therefore already covered by Case (b), and we leave
the details to the reader. We get
\begin{equation}
\Prob\Big( |\langle \vA_{\tau,f,\beta},\vA_{\tau,f',\beta'}\rangle| \le 
3 N_R \sqrt{N_t \log(N_f N_R N_T)} \Big) \ge 
1 - 2 (N_f N_R N_T)^{-1}. 
\label{eq:casec}
\end{equation}

%\label{eq:case3d}

\medskip
\noindent
{\bf Case (d) $\beta\neq \beta', \tau  \neq 0, f \neq 0$:}
This is similar to Case (ii) of Lemma~\ref{th:coherencebound}. The only
difference is that we have $N_t N_f N_R N_T$ different possibilities
to consider when forming the union bound (the additional factor $N_f$
is of course due to frequency shifts associated with the Doppler effect).
Thus in this case the bound reads 
\begin{equation}
\Prob\Big( |\langle \vA_{\tau,f,\beta}, \vA_{\tau',f',\beta'} \rangle| \le 
3 N_R \sqrt{N_t \log(N_\tau N_f N_R N_T)} \Big) \ge 
1 - 2 (N_\tau N_f  N_R N_T)^{-1}. 
\label{eq:case4d}
\end{equation}

\medskip
\noindent
{\bf Case (e) $\beta  = \beta'$:} We need to bound
$|\langle  \vT_\tau \vM_f \vS\aT , \vS \aT \rangle|$, where we recall
that $\vS \aT$ is a Gaussian random vector with variance $N_T$. 
(We note that a related case is covered by Theorem 5.1 in~\cite{PRT07}, 
which considers $\langle T_\tau M_f h,h \rangle$, where $h$ is a 
Steinhaus sequence.) This case is essentially taken care off by Case (iii) 
of Lemma~\ref{th:coherencebound}, by noting that a Gaussian random vector
of variance $\sigma$ remains Gaussian (with the same $\sigma$) when
pointwise multiplied by a fixed vector with entries from the
torus. The only difference is that, as in Case (d) above, 
we have $N_t N_f N_R N_T$ different possibilities to consider when 
forming the union bound. Hence, the bound in this case becomes
\begin{equation}
\label{eq:casee}
\Prob\Big(\max |\langle \vA_{\tau,f,\beta}, \vA_{\tau',f',\beta} \rangle| \le 
3 N_R \sqrt{N_t \log(N_\tau N_f N_R N_T)} \Big) \ge 
1-4(N_t N_f N_R N_T)^{-1}.
\end{equation}
\end{proof}

\begin{lemma}
\label{le:tildeboundsdoppler}
Let $\tilde{\vA} = \vA \vD^{-1}$, where the entries of
the $N_\tau N_f N_\beta \times N_\tau N_f N_\beta$ diagonal matrix are
given by $\vD_{(\tau,f,\beta),(\tau,f,\beta)} = \|\colA\|_2$.
Under the conditions of Theorem~\ref{th:lasso} there holds
\begin{equation}
\label{Atilde_estimatedoppler}
\Prob \Big( \|\vAt\|_{\op}^2 < 6 N_T  \Big) 
\ge 1 - p_1,
\end{equation}
where 
$$p_1=e^{-\frac{(1-\sqrt{1/3})^2 N_t}{2}} + N_T e^{-(\sqrt{3/2}-\sqrt{2})
N_t},$$
and
\begin{equation}
\Prob \Big(\mu\big(\tilde{\vA}\big) \le 
  6 \sqrt{\frac{1}{N_t} \log (N_\tau N_f N_R N_T)} \Big) 
\ge 1-p_2,
\label{tildecoherencebounddoppler}
\end{equation}
where 
$$p_2=
 2 (N_R N_T)^{-1} + 2 (N_\tau N_R N_T)^{-1} + 2 (N_f N_R N_T)^{-1}
+ 6 (N_\tau N_f N_R N_T)^{-1} + 2e^{-\frac{N_t(\sqrt{2}-1)^2}{4}},$$
\end{lemma}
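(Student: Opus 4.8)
The plan is to reproduce the proof of Lemma~\ref{le:tildebounds} almost verbatim, substituting the two Doppler ingredients already at our disposal: Lemma~\ref{th:normboundoppler} for the operator norm of $\vA$ and Lemma~\ref{th:coherencedoppler} for the off-diagonal inner products. Since $\vD$ merely rescales columns, both assertions of the lemma reduce to controlling the column norms $\|\colA\|_2$ from below and then dividing the already-established bounds on $\|\vA\|_\op$ and on $\max|\langle \vA_{\tau,f,\beta},\vA_{\tau',f',\beta'}\rangle|$ by the appropriate powers of these norms. The one genuinely new observation relative to the Doppler-free setting is that the column norms are unchanged by delay and Doppler: writing $\colA = \aR \otimes (\Std\aT)$, the operator $\vM_f$ multiplies each coordinate of $\St\aT$ by a unit-modulus factor, so the entries of $\Std\aT$ are i.i.d.\ ${\cal CN}(0,1)$ for every $(\tau,f)$ exactly as in Lemma~\ref{th:coherencebound}, and hence $\|\colA\|_2 = \sqrt{N_R}\,\|\Std\aT\|_2$ has a distribution independent of $\tau$ and $f$.

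For the operator-norm estimate \eqref{Atilde_estimatedoppler} I would begin from $\|\vAt\|_\op^2 \le \|\vA\|_\op^2 \big/ \min_{\tau,f,\beta}\|\colA\|_2^2$, bounding the numerator by Lemma~\ref{th:normboundoppler}. For the denominator I apply Lemma~\ref{le:concentration1} to $\|\Std\aT\|_2$ with the choice $t=(1-\sqrt{1/3})\sqrt{N_t}$, giving $\|\Std\aT\|_2^2 \ge N_t/3$ and thus $\|\colA\|_2^2 \ge N_t N_R/3$; by the distributional invariance above it suffices to take a union bound over the $N_R N_T$ azimuth values to control the minimum over all columns. Combining the two events yields the failure probability $p_1$ as the sum of the concentration term $e^{-(1-\sqrt{1/3})^2 N_t/2}$ and the norm-bound term from \eqref{DA_estimate}. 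The delicate point here is to track consistently the factor $N_f$ that enters $\|\vA\|_\op^2$ through the Doppler collapse $\sum_m e^{j2\pi(l-l')m\Delta_t\Delta_f}=N_f\delta_{l-l'}$ in \eqref{eq:expsum}--\eqref{eq:A2I} but is absent from the individual column norms, so that the resulting bound, when fed through Theorem~\ref{th:CP}, reproduces the sparsity level $\Kmax$ of Theorem~\ref{th:doppler}.

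For the coherence estimate \eqref{tildecoherencebounddoppler} I would mimic \eqref{coherence3}, bounding $\mu(\vAt)$ by the maximum over distinct index pairs of $\vD^{-1}_{(\tau,f,\beta),(\tau,f,\beta)}\,|(\vA^\ast\vA)_{(\tau,f,\beta),(\tau',f',\beta')}|\,\vD^{-1}_{(\tau',f',\beta'),(\tau',f',\beta')}$. The off-diagonal factor is precisely what Lemma~\ref{th:coherencedoppler} furnishes, namely at most $3N_R\sqrt{N_t\log(N_\tau N_f N_R N_T)}$. The two normalization factors are handled by the same concentration step as above, now with $t=(1-1/\sqrt2)\sqrt{N_t}$ as in \eqref{Atbound1}--\eqref{Atbound9}, which yields $\|\colA\|_2^2 \ge N_t N_R/2$ and hence $(\vA^\ast\vA)$-entries rescaled by at most $2/(N_t N_R)$. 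Multiplying, the $N_R$ cancels and the $N_t$ produces the claimed bound $6\sqrt{N_t^{-1}\log(N_\tau N_f N_R N_T)}$. The probability $p_2$ is then the sum of the concentration term $2e^{-N_t(\sqrt2-1)^2/4}$ and the four terms $2(N_R N_T)^{-1}+2(N_\tau N_R N_T)^{-1}+2(N_f N_R N_T)^{-1}+6(N_\tau N_f N_R N_T)^{-1}$ coming from cases (a)--(e) of Lemma~\ref{th:coherencedoppler}; here the hypotheses of Theorem~\ref{th:doppler} guarantee condition \eqref{eq:coherencecondition1} needed by that lemma.

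Given the two Doppler lemmas, no genuinely hard estimate remains; the real work is probabilistic bookkeeping. I expect the main obstacle to be twofold: first, making the column-norm lower bound uniform while keeping the union bound over azimuth only (which is legitimate precisely because delay and Doppler act as norm-preserving operators on each column, so that $\|\colA\|_2$ is identically distributed across $\tau$ and $f$); and second, placing the several $N_f$-dependent factors consistently, since $N_f$ appears in $\|\vA\|_\op^2$ but not in $\|\colA\|_2^2$, so that the final operator-norm and coherence bounds mesh correctly with Theorem~\ref{th:CP} to deliver the stated $\Kmax$ and error guarantee.
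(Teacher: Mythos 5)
Your strategy coincides with the paper's: the paper's entire ``proof'' of this lemma is one sentence deferring to Lemma~\ref{le:tildebounds}, and your plan---rerun that proof with Lemma~\ref{th:normboundoppler} in place of Lemma~\ref{th:normbound} and Lemma~\ref{th:coherencedoppler} in place of Lemma~\ref{th:coherencebound}---is exactly that reconstruction. Your key structural observation is also the right one: $\vA_{\tau,f,\beta}=\aR\otimes(\Std\aT)$, and $\Std\aT$ is obtained from $\vS\aT$ by a circular shift of coordinates followed by pointwise multiplication by unit-modulus phases, so $\|\vA_{\tau,f,\beta}\|_2=\sqrt{N_R}\,\|\vS\aT\|_2$ holds \emph{identically} (not merely in distribution), which is what licenses a union bound over the $N_R N_T$ azimuths only. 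The coherence half of your argument is correct and complete: the bound $3N_R\sqrt{N_t\log(N_\tau N_f N_R N_T)}$ from Lemma~\ref{th:coherencedoppler}, multiplied by the normalization $2/(N_t N_R)$ obtained with $t=(1-1/\sqrt2)\sqrt{N_t}$, gives \eqref{tildecoherencebounddoppler}, and the failure probabilities add up to $p_2$.

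The gap is in the operator-norm half, precisely at the point you label ``delicate'' and defer. Carrying out the computation you describe: Lemma~\ref{th:normboundoppler} gives $\|\vA\|_\op^2\le 2N_tN_fN_RN_T$, the column bound with $t=(1-\sqrt{1/3})\sqrt{N_t}$ gives $\min\|\vA_{\tau,f,\beta}\|_2^2\ge N_tN_R/3$, and the quotient is $6N_fN_T$---not the $6N_T$ asserted in \eqref{Atilde_estimatedoppler}. No bookkeeping can make the $N_f$ cancel, because the asserted bound is false in the relevant regime: by \eqref{eq:A2I}, $\vA\vA^{\ast}$ is a multiple of the identity with factor concentrating at $N_tN_fN_RN_T$, while the squared column norms concentrate at $N_tN_R$, so $\|\vAt\|_\op^2$ itself concentrates near $N_fN_T$; equivalently, since $\vAt$ has unit-norm columns, $\|\vAt\|_\op^2\ge\|\vAt\|_F^2/\operatorname{rank}(\vAt)\ge N_\tau N_fN_\beta/(N_RN_t)=N_fN_T$ when $N_\tau=N_t$ and $N_\beta=N_RN_T$. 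Hence $\|\vAt\|_\op^2<6N_T$ cannot hold once $N_f>6$. The provable statement is $\|\vAt\|_\op^2<6N_fN_T$, and feeding that into Theorem~\ref{th:CP} with $m=N_\tau N_fN_\beta$ yields $K\le c_0 N_\tau N_R/(6\log(N_\tau N_f N_\beta))$, i.e.\ the factor $N_f$ in the numerator of $\Kmax$ in Theorem~\ref{th:doppler} must be dropped. So your hope that the $N_f$-factors ``mesh correctly to deliver the stated $\Kmax$'' is exactly the step that fails; the discrepancy is an error inherited from the paper's statement (as is, incidentally, the exponent in $p_1$: $\sqrt{3/2}-\sqrt{2}$ is negative, and should read $3/2-\sqrt{2}$ as in \eqref{DA_estimate}), and a correct write-up must state the corrected bounds rather than the printed ones.
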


\begin{proof}
Since the proof of this lemma follows closely that of
Lemma~\ref{le:tildebounds}, we omit it.
\end{proof}

%%%%%%%%%%%%%%%%%%%%%%%%%%%%%%%%%%%%%%%%%%%%%%%%%%%%%%%%%%%%%%%%%%
\section{Numerical Experiments}
\label{s:numerics}
%%%%%%%%%%%%%%%%%%%%%%%%%%%%%%%%%%%%%%%%%%%%%%%%%%%%%%%%%%%%%%%%%%

Next we illustrate the performance of the compressive MIMO radar developed
in previous sections. We consider a Doppler-free scenario.
The following parameters are used in this example: 
$N_T = 8$ transmit antennas, $N_R = 8$ receive antennas, 
$N_t = 64$ samples, $N_\tau = N_t$ range values.

At each experiment $K$ scatterers of unit amplitude are placed randomly
on the range/azimuth grid, i.e the vector $\vx$ has $K$ unit entries at
random locations along the vector. White Gaussian noise is added to the
composite data vector $\vA \vx$ with variance $\sigma^2$ determined to
as to produce the specified output signal-to-noise ratio (see also
item (iv) of the Remark after Theorem~\ref{th:lasso}). The lasso 
solution $\hat{\vx}$ is calculated with $\lambda$ as specified in
Theorem~\ref{th:lasso}. The numerical algorithm to
solve~\eqref{lassoD} was implemented in Matlab using TFOCS~\cite{BCG10}. 
The experiment is repeated $100$ times
using independent noise realizations.

The probabilities of detection $P_d$ and false alarm $P_{fa}$ are
computed as follows. The values of the estimated vector $\hat{\vx}$
corresponding to the true scatterer locations are compared to a
threshold. Detection is declared whenever a value exceeds the threshold.
The probability of detection is defined as the number of detections
divided by the total number of scatterers $K$. Next the values of the
estimated vector $\hat{\vx}$ corresponding to locations not containing
scatterers are compared to a threshold. A false alarm is declared
whenever one of these values exceeds the threshold. The probability of
false alarm is defined as the number of false alarms divided by the
total number of scatterers $K$. The probabilities of detection and false
alarm are averaged over the 100 repetitions of the experiment.

The probabilities are re-computed for a range of values of the threshold
to produce the so-called Receiver Operating Characteristics (ROC)
\cite{helstrom,vantrees,scharf} - the graph of $P_d$ vs. $P_{fa}$.   
As the threshold decreases, the probability of detection increases and
so does the probability of false alarm. In practice the threshold is
usually adjusted to as to achieve a specified probability of false
alarm.

Figures \ref{fig1}, \ref{fig2}, \ref{fig3} and \ref{fig4} depict the ROC
for different values of the output signal to noise ratio. We note that
the probability of detection increases as the SNR increases and
decreases as $K$, the number of scatterers increases. 

% The theoretical SNR for which essentially perfect detection is achieved is given in
% equation~\eqref{SNRbound} and its value in this example is
% $\text{SNR}_{\min} =32$ dB.  Figure~\ref{fig4} validates this result. 

\begin{figure}[ht!]
\centering
\includegraphics[width=5in,height=3.3in,clip,keepaspectratio]{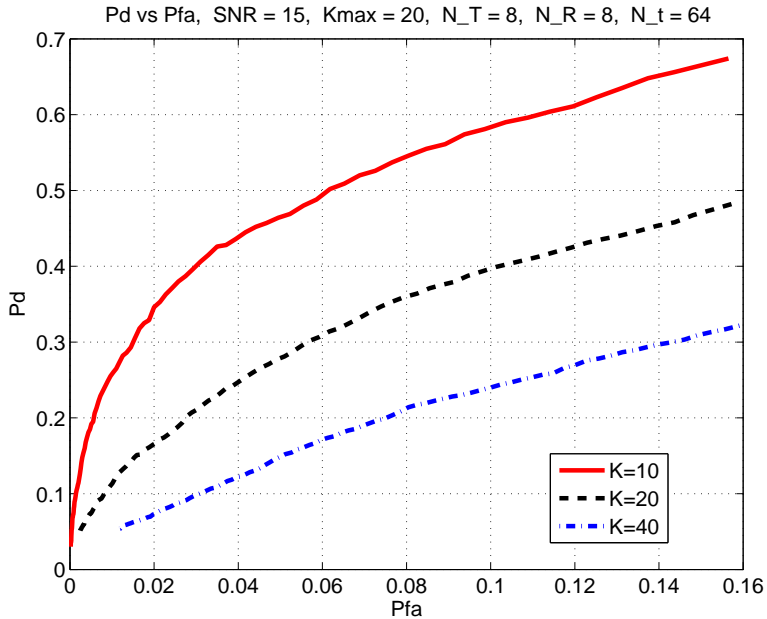}
\caption{Probability of detection vs. probability of false alarm for
SNR = 15 dB, and three values of $K$: $K_{\max}/2, K_{\max}, 2K_{\max}$.}
\label{fig1}
\end{figure}

\begin{figure}[ht!]
\centering
\includegraphics[width=5in,height=3.3in,clip,keepaspectratio]{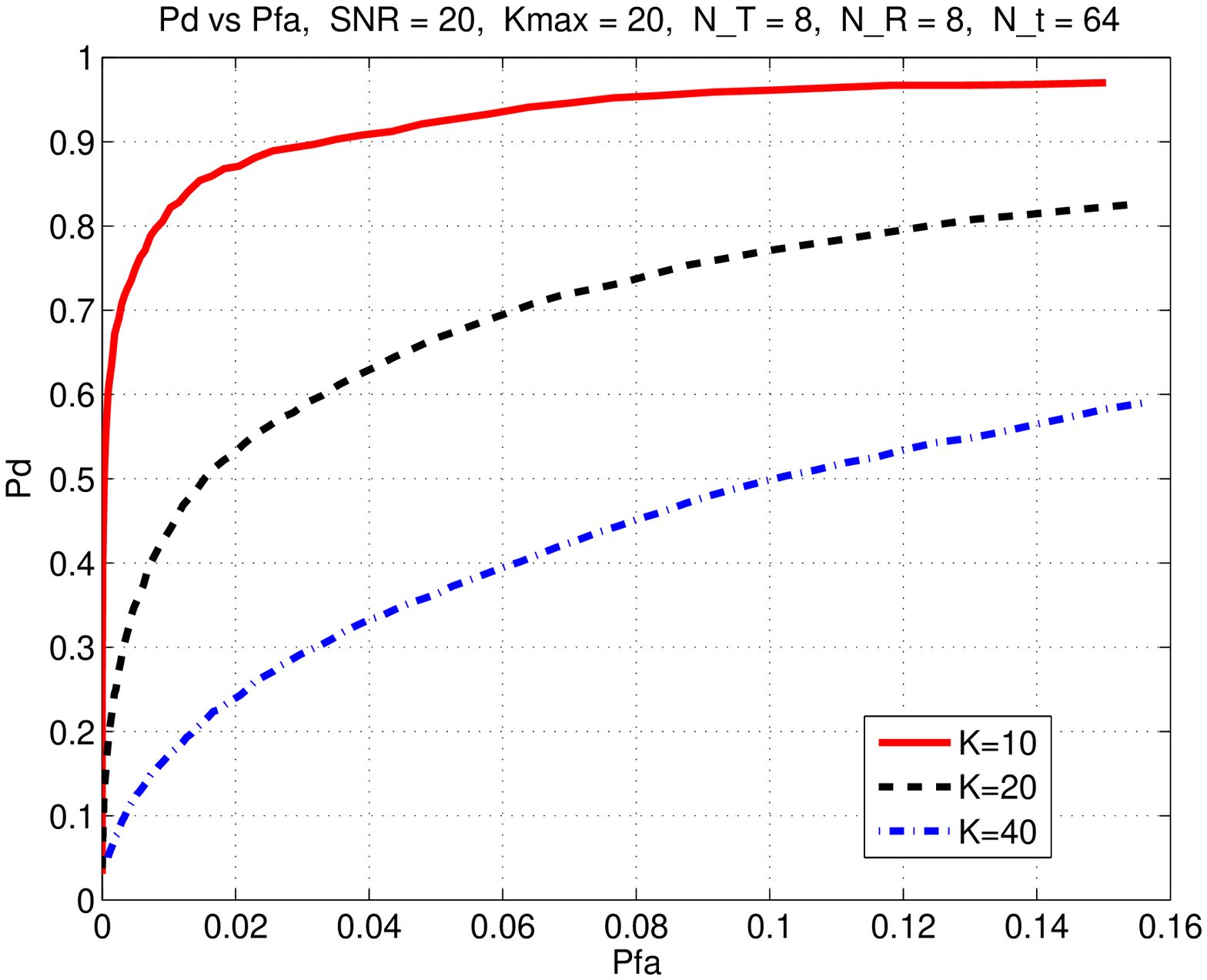}
\caption{Probability of detection vs. probability of false alarm for
SNR = 20 dB, and three values of $K$: $K_{\max}/2, K_{\max}, 2K_{\max}$.}
\label{fig2}
\end{figure}

\begin{figure}[ht!]
\centering
\includegraphics[width=5in,height=3.3in,clip,keepaspectratio]{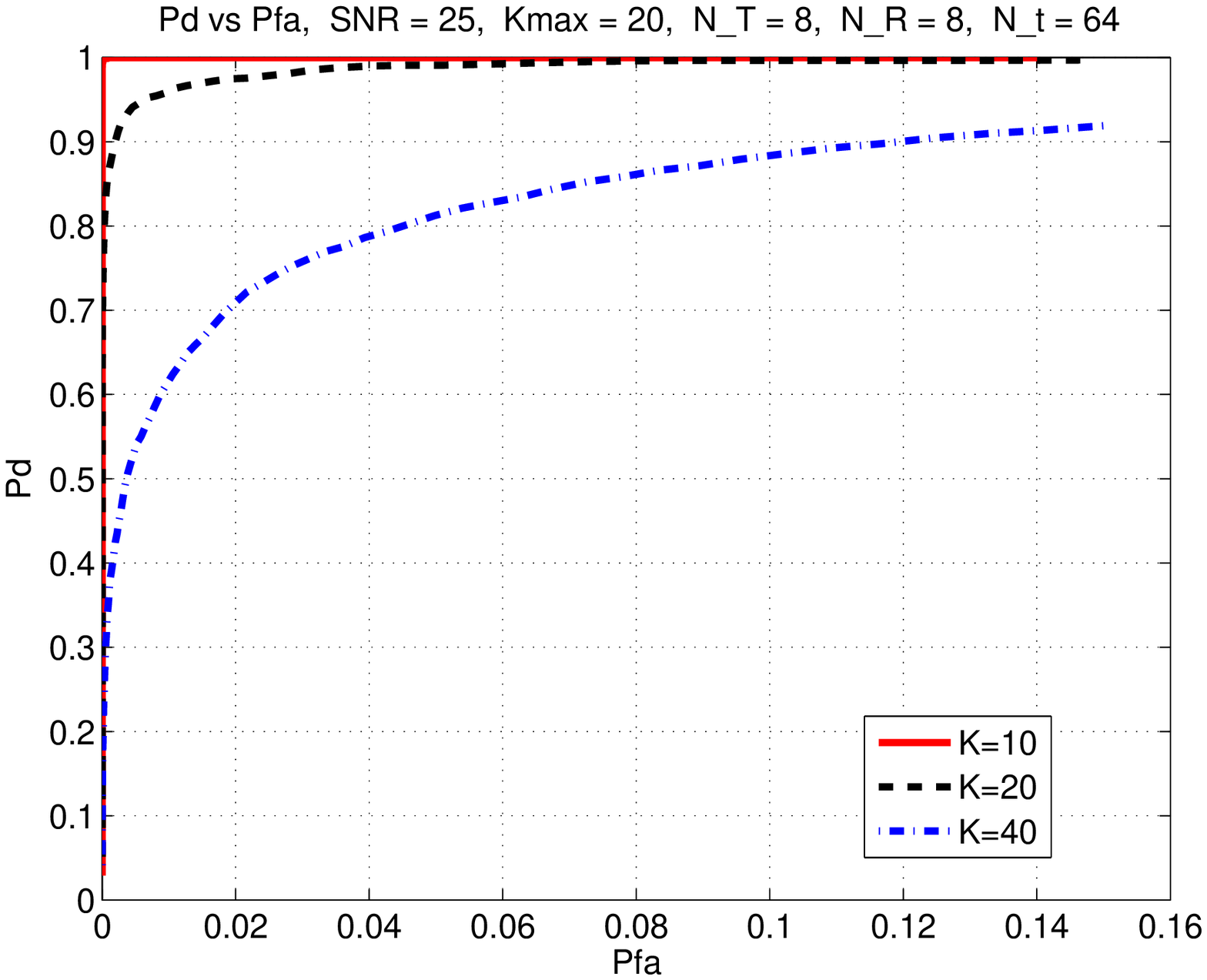}
\caption{Probability of detection vs. probability of false alarm for
SNR = 25 dB, and three values of $K$: $K_{\max}/2, K_{\max}, 2K_{\max}$.}
\label{fig3}
\end{figure}

\begin{figure}[ht!]
\centering
\includegraphics[width=5in,height=3.3in,clip,keepaspectratio]{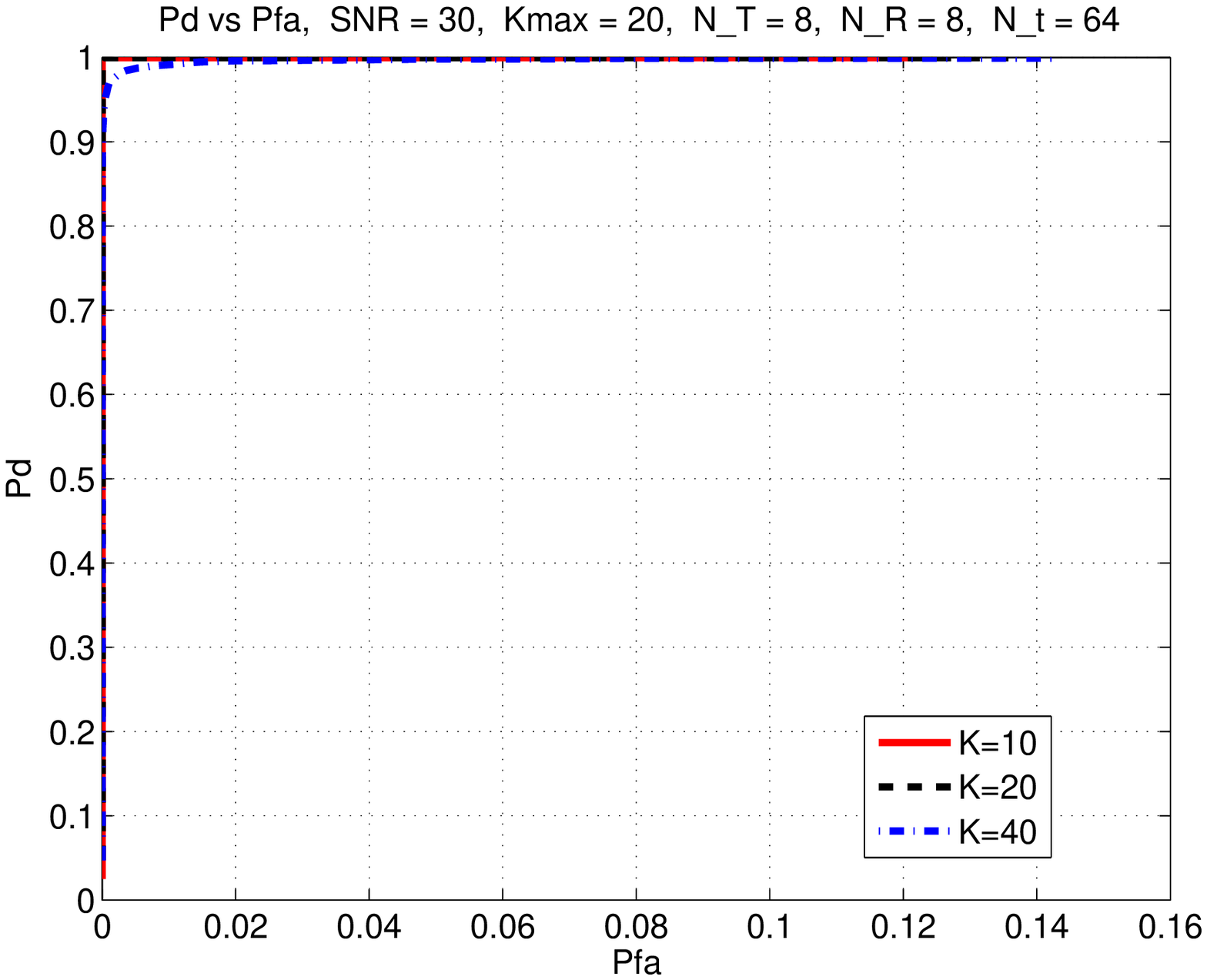}
\caption{Probability of detection vs. probability of false alarm for
SNR = 30 dB, and three values of $K$: $K_{\max}/2, K_{\max}, 2K_{\max}$.}
\label{fig4}
\end{figure}

%%%%%%%%%%%%%%%%%%%%%%%%%%%%%%%%%%%%%%%%%%%%%%%%%%%%%%%%%%%%%%%%%%
\section{Conclusion}
\label{s:conclusion}

%%%%%%%%%%%%%%%%%%%%%%%%%%%%%%%%%%%%%%%%%%%%%%%%%%%%%%%%%%%%%%%%%%

Techniques from compressive sensing and sparse approximation make it 
possible to exploit the sparseness of radar scenes to potentially improve 
system performance of MIMO radar. In this paper we have derived a mathematical
framework that yields explicit conditions
for the radar waveforms and the transmit and receive arrays so that
the radar sensing matrix has small coherence and robust sparse recovery
in the presence of noise becomes possible.
Our approach relies on a deterministic (and very specific) positioning
of transmit and receive antennas and random waveforms. It seems plausible
that results similar to the ones derived in this paper can be established
for the case where the antenna locations are chosen at random and the
transmission signals are deterministic. This would be of interest, since 
one could then potentially take advantage of specific properties of recently 
designed deterministic radar waveforms such as in~\cite{BD10,PCM08}.

%%%%%%%%%%%%%%%%%%%%%%%%%%%%%%%%%%%%%%%%%%%%%%%%%%%%%%%%%%%%%%%%%%%%%
\section*{Appendix A} \label{s:appendixA}
%%%%%%%%%%%%%%%%%%%%%%%%%%%%%%%%%%%%%%%%%%%%%%%%%%%%%%%%%%%%%%%%%%%%%

In this appendix we collect some auxiliary results.
%The following three lemmata are (easy consequences of) well known results.

\begin{lemma}\cite[Proposition 34]{Ver10}
\label{le:concentration1}
Let $\vx \in \CC^n$ be a vector with $x_k \sim {\cal CN}(0,\sigma^2)$,
then for every $t >0$ one has
\begin{equation}
\Prob \Big( \|\vx\|_2 - \Exp \|\vx\|_2 > t \Big)
\le e^{-\frac{t^2}{2\sigma^2}}.
\label{concentration2}
\end{equation}
%\begin{equation}
%\Prob \Big\{ \|\vx\|_2^2 - (\Exp \|\vx\|_2)^2 > t(t+ 2\Exp \|\vx\|) \Big\}
%\le e^{-\frac{t^2}{2\sigma^2}}.
%\label{concentration3}
%\end{equation}
\end{lemma}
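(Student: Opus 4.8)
The plan is to recognize Lemma~\ref{le:concentration1} as a textbook instance of Gaussian concentration for Lipschitz functions (the Borell--Tsirelson--Ibragimov--Sudakov inequality), applied to the Euclidean norm. The crucial structural fact is that $\vx \mapsto \|\vx\|_2$ is $1$-Lipschitz: identifying $\CC^n$ isometrically with $\RR^{2n}$ through real and imaginary parts, the complex $\ell_2$-norm coincides with the real Euclidean norm, which is $1$-Lipschitz by the triangle inequality.

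First I would write $x_k = a_k + i b_k$ and stack the real coordinates into $\vr = (a_1,b_1,\dots,a_n,b_n) \in \RR^{2n}$, so that $\|\vx\|_2 = \|\vr\|_2$. Under the law $x_k \sim {\cal CN}(0,\sigma^2)$ the entries of $\vr$ are i.i.d. centered Gaussians, so I can write $\vr = c\,\vg$ with $\vg \sim {\cal N}(0,\vI_{2n})$ and $c$ a scalar fixed by the variance normalization; then $\|\vx\|_2 = c\,\|\vg\|_2$, a $c$-Lipschitz function of the standard Gaussian vector $\vg$.

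The second step is to invoke the one-sided Gaussian concentration inequality: for $\vg \sim {\cal N}(0,\vI_m)$ and any $L$-Lipschitz $f$, one has $\Prob\big(f(\vg) - \Exp f(\vg) > t\big) \le e^{-t^2/(2L^2)}$. Applying this with $f(\vg) = c\,\|\vg\|_2$ (so $L = c$) immediately gives a sub-Gaussian upper tail for $\|\vx\|_2$ centered at its mean, with exponent governed by $c^2$. With real and imaginary parts each of variance $\sigma^2$ one has $c = \sigma$ and recovers exactly~\eqref{concentration2}; under the alternative normalization $\Exp|x_k|^2 = \sigma^2$ used elsewhere in the paper one has $c = \sigma/\sqrt{2}$, which yields the even stronger exponent $e^{-t^2/\sigma^2}$ and in particular implies~\eqref{concentration2}.

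The only genuinely nontrivial ingredient is the Gaussian concentration inequality itself, which I would cite as a black box rather than reprove; it follows, for instance, from the Gaussian log-Sobolev inequality via the Herbst argument, or from Gaussian isoperimetry. Everything else is the isometric real identification and the bookkeeping of the variance constant; I would also emphasize that using the one-sided (rather than two-sided) form of the concentration inequality is what produces the clean bound in~\eqref{concentration2} with no prefactor of $2$.
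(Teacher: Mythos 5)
Your proof is correct and takes essentially the same route as the paper: the paper offers no argument of its own but simply cites Gaussian concentration for Lipschitz functions (\cite[Proposition 34]{Ver10}), which is exactly the black box you invoke, with the isometric identification of $\CC^n$ with $\RR^{2n}$, the $1$-Lipschitz property of the norm, and the variance bookkeeping all handled correctly (under either normalization of ${\cal CN}(0,\sigma^2)$ your exponent is at least as strong as the one claimed in~\eqref{concentration2}). Nothing further is needed.
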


\if 0
The real-valued version of the following lemma can be found e.g.
in~\cite{DG03}, its extension to the complex case is straightforward.
\begin{lemma}
\label{le:concentration4}
Let $\vA \in \CC^{n \times m}$ a Gaussian random matrix with
$\vA_{i,j} \sim {\cal CN}(0,\frac{1}{m})$. Then for all $\vx \in\CC^m$
and all $0 < t < 1$ we have
\begin{equation}
\Prob \Big\{ \big| \|\vA\vx\|_2^2 - \|\vx\|_2^2 \big| \ge t \|\vx\|_2^2 \Big\}
\le 2 \exp \Big(-\frac{m}{2} (t^2/2 - t^3/3)\Big).
\label{concentration5}
\end{equation}

\end{lemma}
We get
\begin{equation}
\Prob \Big\{ \big| \|\vA\vx\|_2^2 - \|\vx\|_2^2 \big| \ge t \|\vx\|_2^2 \Big\}
\le 2 \exp \Big(-\frac{mt^2}{12}\Big).
\label{concentration6}
\end{equation}

As a consequence we obtain
\begin{lemma}
\label{le:concentration7}
Let $\vA \in \CC^{n \times m}$ be a Gaussian random matrix with
$A_{i,j} \sim {\cal CN}(0,\frac{1}{m})$. Then for all $\vx, \vy \in\CC^m$
with $\langle \vx,\vy \rangle = 0$ and $\|\vx\|_2 = \|\vy\|_2=\sqrt{m}$, 
there holds for $0 < t < 1$:
\begin{equation}
\Prob \Big( |\langle \vA\vx, \vA\vy  \rangle| > tm/2  \Big) \le
4 \exp \Big(-\frac{m}{2} (t^2/2 - t^3/3)\Big).
%\label{concentration8}
\notag
\end{equation}
\end{lemma}

\begin{proof}
We only prove the real-valued case, as the complex version follows
readily. %We note that
%\begin{equation}
%\label{concentration9}
%\Prob \Big\{ |\langle \vA\vx, \vA\vy  \rangle| > t  \Big\}  \le
%\Prob \Big\{ |\Real \langle \vA\vx, \vA\vy  \rangle| 
 %+ |\Imag \langle \vA\vx, \vA\vy  \rangle| > t  \Big\}.
%\end{equation}
The parallelogram identity gives
\begin{equation}
|\langle \vA\vx, \vA\vy \rangle| = 
\frac{1}{4}\Big| \|\vA(\vx+\vy)\|_2^2 - \|\vA(\vx-\vy)\|_2^2 \Big|.
\label{parallelogram}
\end{equation}
Furthermore we have $\|\vx+\vy\|_2^2 = \|\vx-\vy\|_2^2 =2m$, whence
\begin{equation}
\label{concentration10}
\big|\|\vA(\vx+\vy)\|_2^2 - \|\vA(\vx-\vy)\|_2^2 \big| \le
\big|\|\vA(\vx+\vy)\|_2^2 - \|\vx+\vy\|_2^2\big| + \big|\|\vx-\vy\|_2^2 -
\|\vA(\vx-\vy)\|_2^2\big|.
\end{equation}
We apply the concentration inequality~\eqref{concentration5} 
to $\vA(x+y)$ and $\vA(x-y)$ and obtain
\begin{equation}
\Prob \Big( \Big( \Big| \|\vA(\vx+\vy)\|_2^2 - \|\vx+\vy\|_2^2 \Big|  +
 \Big| \|\vA(\vx-\vy)\|_2^2 - \|\vx-\vy\|_2^2 \Big|\Big) \ge 2 t m \Big)
\le 4 \exp \Big(-\frac{m}{2} (t^2/2 - t^3/3)\Big),
\label{concentration11}
\end{equation}
and therefore
\begin{equation}
\Prob \Big( |\langle \vA\vx, \vA\vy \rangle| \ge tm/2 \Big)
\le 4 \exp \Big(-\frac{m}{2} (t^2/2 - t^3/3)\Big).
\label{concentration12}
\end{equation}
%Using $\Imag \langle \vA\vx, \vA\vy \rangle = \Real \langle i\vA\vx,\vA
%\vx \rangle$, we obtain the same bound when we 
%repeat this calculation for $\Imag |\langle \vA\vx, \vA\vy \rangle|$.
By combining these estimates and then substituting 
in~\eqref{parallelogram} we obtain~\eqref{concentration3}.
\end{proof}

In this appendix we collect some auxiliary results.
\fi

The following lemma, which relates moments and tails, can be found 
e.g.\ in~\cite[Proposition 6.5]{Rau10}.
\begin{lemma}
\label{le:moments}
Suppose $Z$ is a random variable satisfying
\begin{equation}
(\Exp |Z|^p)^{1/p} \le \alpha \beta^{1/p} p^{1/\gamma} \qquad
\text{for all $p \ge p_0$}
\label{eq:moments1}
\notag
\end{equation}
for some constants $\alpha, \beta, \gamma, p_0 > 0$. Then
\begin{equation}
\Prob (|Z| \ge e^{1/\gamma} \alpha u ) \le \beta e^{-u^{\gamma}/\gamma}
%\label{eq:moments2}
\notag
\end{equation}
for all $u \ge p_0^{1/\gamma}$.
\end{lemma}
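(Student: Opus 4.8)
The plan is to derive the tail bound from the moment bound by the standard route: apply Markov's inequality to a high power of $|Z|$, and then optimize the free exponent $p$. This is the usual way one converts a family of moment estimates into a subexponential- or Weibull-type tail, and the hypothesized form $(\Exp|Z|^p)^{1/p} \le \alpha \beta^{1/p} p^{1/\gamma}$ is tailored so that the optimization collapses cleanly.

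First I would fix an arbitrary admissible exponent $p \ge p_0$ and apply Markov's inequality to the nonnegative random variable $|Z|^p$: for any threshold $t>0$,
\begin{equation}
\Prob(|Z| \ge t) = \Prob(|Z|^p \ge t^p) \le \frac{\Exp |Z|^p}{t^p}
\le \beta \Big(\frac{\alpha p^{1/\gamma}}{t}\Big)^p ,
\notag
\end{equation}
where the last step simply inserts the moment hypothesis. Next I would specialize to the threshold appearing in the conclusion, $t = e^{1/\gamma}\alpha u$. The factors of $\alpha$ cancel, and after rewriting $p^{1/\gamma}/(e^{1/\gamma}u) = (p/(eu^\gamma))^{1/\gamma}$ the bound becomes
\begin{equation}
\Prob\big(|Z| \ge e^{1/\gamma}\alpha u\big) \le \beta \Big(\frac{p}{e\,u^{\gamma}}\Big)^{p/\gamma},
\notag
\end{equation}
with $p$ still a free parameter at our disposal.

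The remaining step is to choose $p$ to minimize the right-hand side. Differentiating $p\log\!\big(p/(eu^{\gamma})\big)$ in $p$ shows that the bracketed quantity is minimized at $p = u^{\gamma}$; substituting this value makes the base equal to $1/e$ and yields exactly $\beta e^{-u^{\gamma}/\gamma}$, which is the claimed bound. The one point requiring care — and the only place the hypothesis $u \ge p_0^{1/\gamma}$ is used — is that the optimal exponent $p = u^{\gamma}$ must itself be admissible, i.e.\ satisfy $p \ge p_0$; this is precisely guaranteed by $u \ge p_0^{1/\gamma}$. I do not anticipate any genuine obstacle here: the argument is an elementary combination of Markov's inequality and a one-variable optimization, and the constants in the statement are arranged so that the minimizer lands exactly on $1/e$.
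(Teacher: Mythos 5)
Your proof is correct. Note that the paper itself gives no proof of this lemma --- it simply cites \cite[Proposition 6.5]{Rau10} --- and your argument (Markov's inequality applied to $|Z|^p$ at the threshold $t=e^{1/\gamma}\alpha u$, followed by the choice $p=u^{\gamma}$, which is admissible precisely because $u\ge p_0^{1/\gamma}$) is exactly the standard proof given in that reference.
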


\if 0 
We will need the following tail bound estimate for Rademacher chaos of order
two. This bound can be extracted from various theorems or proofs
in the literature. For convenience we present a version that is most
suitable for our purposes.
\begin{lemma}
\label{le:quadform}
Let $\vM = \{m_{ij}\}_{i,j=1}^n$ be a matrix and let $\eps_i,
i=0,\dots,n-1$ be independent Rademacher random variables.
Denote
\begin{equation}
S_n = \sum_{i,j=0}^{n-1} \eps_i \bar{\eps}_j m_{ij}.
%\label{sumquad}
\notag
\end{equation}
Then
\begin{equation}
\Prob\Big(S_n \ge t + \Exp S_n \Big) \le 
  \exp ( - C t/\|\vM\|_F ),
\label{probquad}
\end{equation}
where $C>0$ is a numerical constant independent of $\vM$ and $n$.
\end{lemma}

\begin{proof}
Let 
\begin{equation}
Z: = S_n - \Exp S_n = \sum_{i\neq j} \eps_i \bar{\eps}_j m_{ij} +
 \sum_{j} (|\eps_j|^2 - 1) m_{jj}.
\label{}
\end{equation}
%Corollary XIII.28 in~\cite{Ble01} yields the bound (see also~\cite{LT91}, 
We will use the moment bound for Rademacher chaos (see \cite[page 105]{LT91}) 
\begin{equation}
\Exp |Z|^q \le (q-1)^q (\Exp |Z|^2)^{q/2}, \quad \text{for all $q \ge 2$.}
\label{eq:rademachermoments}
\end{equation}
A straightforward computation gives 
$\Exp|Z|^2 = \|\vM|_F^2 - \sum_{k} |m_{kk}|^2$. We now invoke
Lemma~\ref{le:moments}, using $\gamma=1,\beta=1$ and $\alpha = \|\vM\|_F$
in~\eqref{eq:moments1}, and obtain
\begin{equation}
\Prob \big( |S_n| \ge t + \Exp |S_n| \big) \le e^{-C t/\|\vM\|_F},
\label{}
\end{equation}
where $C$ is some numerical constant independent of $n$ and $\vM$.
\end{proof}
\fi

The following lemma is a rescaled version of Lemma 3.1 in~\cite{RSV08}.
%Its proof is analogous to that of Lemma 3.1, taking into account 
%that we have the same moment bound for Rademacher chaos
%(see~\eqref{eq:rademachermoments}) as we have for Gaussian chaos.
\begin{lemma}
\label{le:concentration3}
Let $\vA \in \CC^{n \times m}$ be a Gaussian random matrix with
$A_{i,j} \sim {\cal CN}(0,\sigma^2)$. Then for all $\vx, \vy \in\CC^m$
with $\|\vx\|_2 = \|\vy\|_2=\sqrt{m}$ and all $t>0$
\begin{equation}
\Prob \Big\{ |\frac{1}{n \sigma^2}\langle \vA\vx, \vA\vy \rangle - \langle \vx,\vy \rangle| > tm \Big\} \le
2 \exp \Big(-n\frac{t^2}{C_1+C_2t}\Big),
%\label{concentration8}
\notag
\end{equation}
with $C_1 = \frac{4e}{\sqrt{6\pi}}$ and $C_2 = \sqrt{8}e$.
\end{lemma}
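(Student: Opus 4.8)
The plan is to recognize $\frac{1}{n\sigma^2}\langle \vA\vx,\vA\vy\rangle$ as a normalized sum of $n$ i.i.d.\ products of jointly complex Gaussian variables and to apply a Bernstein-type concentration inequality; this is exactly the mechanism behind Lemma~3.1 of~\cite{RSV08}, and the only genuinely new bookkeeping is the complex rescaling. First I would rescale $\vA \mapsto \sigma^{-1}\vA$, which reduces the claim to the case $\sigma = 1$, and use that the rows of $\vA$ are independent. Writing $U_i := (\vA\vx)_i$ and $V_i := (\vA\vy)_i$, the pair $(U_i,V_i)$ is jointly complex Gaussian with $U_i, V_i \sim {\cal CN}(0,m)$ (since $\|\vx\|_2^2 = \|\vy\|_2^2 = m$) and $\Exp[U_i \overline{V_i}] = \langle \vx,\vy\rangle$, and the variables $Z_i := U_i\overline{V_i}$ are independent across $i$. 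Thus $\langle \vA\vx,\vA\vy\rangle = \sum_{i=1}^n Z_i$, and the quantity to control is the centered sum $\sum_{i=1}^n (Z_i - \Exp Z_i)$, whose deviation above $tmn$ we must bound.

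The heart of the argument is a moment estimate for the centered product $Z_i - \Exp Z_i$. A product of two (correlated) complex Gaussians is sub-exponential, so I expect moment growth of the form $(\Exp|Z_i - \Exp Z_i|^p)^{1/p} \le c\, m\, p$ for $p \ge 2$, obtained via Cauchy--Schwarz together with the explicit even-moment formula for complex Gaussians; in particular the variance satisfies $\Exp|Z_i - \Exp Z_i|^2 = O(m^2)$. Because the $Z_i$ are i.i.d.\ and mean-zero, the moments of the sum $S_n := \sum_i (Z_i - \Exp Z_i)$ can be controlled either by a Rosenthal/Khintchine-type inequality or by directly estimating $(\Exp |S_n|^p)^{1/p}$ by $\sqrt{n}$ (the $\ell_2$ aggregation of the variances) plus $n^{1/p}$ times the per-variable sub-exponential scale. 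This yields a two-term moment bound, one term scaling like $\sqrt{n}\,m\,\sqrt{p}$ (Gaussian regime) and one like $m\,p$ (exponential regime).

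Finally I would feed this moment bound into Lemma~\ref{le:moments} with $\gamma = 1$ (the sub-exponential exponent), which converts the mixed $\sqrt{p}$/$p$ moment growth into precisely a tail of Bernstein shape $\exp(-n t^2/(C_1 + C_2 t))$: the $\sqrt{p}$ part governs the Gaussian ($t^2$) behaviour for small $t$, and the linear-in-$p$ part the exponential ($t$) behaviour for large $t$. The factor $e^{1/\gamma} = e$ supplied by Lemma~\ref{le:moments} is what injects the $e$ into $C_1 = \tfrac{4e}{\sqrt{6\pi}}$ and $C_2 = \sqrt{8}e$, while the $\sqrt{6\pi}$ traces back to the Gaussian moment constant. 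I expect the main obstacle to be purely quantitative: pinning down the exact constants $C_1, C_2$ requires tracking the complex-Gaussian moment constants through the product, the aggregation over $n$, and the moment-to-tail conversion without slack, and in particular handling the nonzero cross-correlation $\Exp[U_i\overline{V_i}] = \langle\vx,\vy\rangle$ (which couples $U_i$ and $V_i$) so that it does not degrade the sharp constants. The qualitative sub-exponential tail itself, by contrast, is routine.
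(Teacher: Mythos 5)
Your proposal is correct and follows essentially the same route as the paper's source for this result: the paper itself gives no proof, stating only that the lemma is a rescaled (and complexified) version of Lemma~3.1 in~\cite{RSV08}, and your argument --- decomposing $\langle \vA\vx,\vA\vy\rangle$ into $n$ i.i.d.\ products of correlated complex Gaussians, bounding their moments via Cauchy--Schwarz and the explicit Gaussian moment formula, and finishing with a Bernstein-type inequality --- is precisely the mechanism of that cited proof, and is also the same technique the paper uses for the uncorrelated special case in Lemma~\ref{le:gaussianinner} via Theorem~\ref{th:bernstein}. The one caveat, which you flag yourself, is that a crude execution (e.g.\ a $2^p$ centering bound) yields constants of the same Bernstein shape but not the stated $C_1=\tfrac{4e}{\sqrt{6\pi}}$, $C_2=\sqrt{8}e$, and these particular values are not cosmetic: the verification of~\eqref{eq:bound10} in the proof of Lemma~\ref{th:coherencebound} uses them with very little slack, so the Stirling-level bookkeeping of~\cite{RSV08} would have to be reproduced without loss.
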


\if 0
% the next lemmata are all true and maybe useful at some other point
% but not needed any longer
Recall that a random variable $x$ is called {\em subgaussian} if there
are constants $a,b$ such that
\begin{equation}
\Prob(|x| \ge t) \le a e^{-b t^2} \qquad \text{for all $t > 0$.}
\label{subgaussian0}
\end{equation}
If $\Exp x = 0$, then equivalently, $x$ is subgaussian if there is a 
constant $K>0$, called {\em scale}, such
\begin{equation}
\label{subgaussian1}
\Exp \exp(tx)  \le \exp(t^2 K^2) \qquad \text{for all $t \in \RR$}.
\end{equation}
In this case the constants $a,b$ in~\eqref{subgaussian0} are given
by $a=2, b = \frac{1}{4K}$.

The following result must be well-known, but since we could not find
a proper reference we include a proof (due to A.\ Soshnikov)
for completeness.
\begin{lemma}[\cite{Sos11}]
\label{le:sasha}
Let $x \in \RR^n$ be an i.i.d.\ subgaussian random vector 
with scale $K$ and let $\vU=\{u_{ij}\}$ an $n \times n$ unitary matrix. 
Then the entries of the vector $\vU x$ are also independent and subgaussian
with scale $K$.
\end{lemma}

\begin{proof}
We denote $z = \vU x$. We need to show that there is a constant $K>0$ such
that $\Exp \exp(t z_i) \le \exp(t^2 K^2)$ for all $t \in \RR$ and all
$i=0,\dots,n-1$. There holds
\begin{eqnarray*}
\Exp \,\exp\big(\sum_{i=0}^{n-1} t_i z_i\big) & = &
\Exp \,\exp\big(\sum_{i=0}^{n-1} t_i \sum_{j=0}^{n-1} u_{ij} x_j\big) \\
& = &\prod_{j=0}^n \Exp \,\exp\Big(\big( \sum_{i=0}^{n-1} t_i u_{ij}\big)
      x_j\Big) \\
& \le &\prod_{j=0}^n \exp\Big(\big( \sum_{i=0}^{n-1} t_i u_{ij}\big)^2
K^2\Big) \\
& = &\exp\Big( K^2 \sum_{j=0}^{n-1}\big( \sum_{i=0}^{n-1} t_i u_{ij}\big)^2
\Big) \\
& = &\exp\big( K^2 \sum_{i=0}^{n-1} t_i^2 \big),
\end{eqnarray*}
where independence of the $x_i$ was used in the second step and the 
unitarity of $U$ in the last step.
\end{proof}

We leave it to the reader to extend this result to complex subgaussian
random vectors.

The following two lemmata might be of independent interest, they are
extensions of results in~\cite{Mec09} from gaussian to subgaussian 
random variables.
\begin{lemma}
\label{le:circmat1}
Given a sequence $a_0,\dots,a_{n-1}$ of i.i.d.\ complex subgaussian
random variables with scale $K$, let $C$ be the $n \times n$ 
circulant matrix 
\begin{equation}
C = 
\begin{bmatrix}
a_0 & c_1 & \dots &  a_{n-2} & a_{n-1} \\
a_{n-1} & a_0 & a_1 & \dots & a_{n-2} \\
\vdots &  &  & \ddots & \vdots  \\
a_{1} & a_2 & \dots & a_{n-1} & a_0 
\end{bmatrix}.
\notag
\end{equation}
Then the sequence $\lambda_0,\dots,\lambda_{n-1}$ of eigenvalues of $C$
is distributed as $n$ independent complex subgaussian random variables
with scale $K$.
\end{lemma}

\begin{proof}
It is well known that the eigenvalues of $C$ are given by the entries of
$Fc$, where $F$ is the $n \times n$ (unitary) Discrete Fourier Transform matrix
and $c=[c_0 , c_{n-1} , \dots , c_{1}]^{T}$. The results follows now
by applying Lemma~\ref{le:sasha}.
\end{proof}

\begin{lemma}
\label{le:circmat2}
Let $C_H$ be a hermitian $n \times n$ circulant matrix of the form
\begin{equation}
C_H = 
\begin{bmatrix}
c_0 & c_1 & \dots &  c_{n-2} & c_{n-1} \\
\overline{c}_{n-1} & c_0 & c_1 & \dots & c_{n-2} \\
\vdots &  &  & \ddots & \vdots  \\
\overline{c}_{1} & \overline{c}_2 & \dots & \overline{c}_{n-1} & c_0 
\end{bmatrix},
\notag
\end{equation}
where $c_0,\dots,c_{\lfloor n/2 \rfloor}$ are independent, $c_j =  c_{n-j}$
for $j > n/2$, $c_0$ and $c_{n/2}$ (if $n$ is even) are real subgaussian
with scale $K/\sqrt{2}$, $c_1,\dots,c_j, 1 \le j \le n/2$ are complex 
subgaussian with scale $K$. Then the eigenvalues of $C_H$ are independent 
real subgaussian random variables with scale $K$.
\end{lemma}

\begin{proof}
We can write $C_H$ as $C_H = C + C^{\ast}$, where $C$ is as in
Lemma~\ref{le:circmat1}. Since $C$ is normal, the eigenvalues of $C$
are $\frac{1}{\sqrt{2}} (\lambda_j + \overline{\lambda_j}) = 
\sqrt{2} \Real \lambda_j$. 
By Lemma~\ref{le:circmat1}, these are distributed as 
$n$ independent real subgaussian random variables with scale $K$.
\end{proof}
\fi

The next lemma is a slight 
generalization of a result by Hanson and Wright on tail bounds for 
quadratic forms~\cite{HW71}.
\begin{lemma}
\label{le:quadform}
Let $\vM = \{m_{ij}\}_{i,j=1}^n$ be a normal matrix and let $X_i,
i=0,\dots,n-1$ be independent, ${\cal CN}(0,1)$-distributed random
variables. Denote
\begin{equation}
S_n = \sum_{i,j=0}^{n-1} m_{ij} X_i \bar{X}_j .
%\label{sumquad}
\notag
\end{equation}
Then for all $t>0$
\begin{equation}
\Prob\Big(S_n \ge t + \Exp S_n \Big) \le \exp \big( - C 
\min \{\frac{t}{\sigma \|\vM\|_{\op}}, \frac{t^2}{\sigma^2\|\vM\|_F^2}\} \big),
%\label{probquad}
\notag
\end{equation}
where $C$ is a numerical constant independent of $\vM$ and $n$.
\end{lemma}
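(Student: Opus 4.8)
The plan is to exploit the hypothesis that $\vM$ is normal in order to diagonalize the quadratic form, reducing $S_n$ to a weighted sum of independent exponential random variables, after which a standard Chernoff/Bernstein argument produces the claimed subexponential tail. First I would observe that, writing $W$ for the vector with entries $\bar X_i$ (which is again a vector of independent ${\cal CN}(0,1)$ variables), one has $S_n = W^{\ast}\vM W$. Since $\vM$ is normal, the spectral theorem gives $\vM = \vU \Lambda \vU^{\ast}$ with $\vU$ unitary and $\Lambda = \diag{\lambda_1,\dots,\lambda_n}$. By the unitary invariance of the standard complex Gaussian distribution, $Z := \vU^{\ast} W$ is again a vector of independent ${\cal CN}(0,1)$ entries, so that $S_n = Z^{\ast}\Lambda Z = \sum_{k=1}^{n}\lambda_k |Z_k|^2$, where the $|Z_k|^2$ are i.i.d.\ standard exponential (mean $1$). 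In particular $\Exp S_n = \sum_k \lambda_k = \trace(\vM)$, and the spectral identities $\sum_k|\lambda_k|^2 = \|\vM\|_F^2$ and $\max_k|\lambda_k| = \|\vM\|_{\op}$, both valid precisely because $\vM$ is normal, supply the two parameters appearing in the bound.

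Next I would center and apply an exponential moment bound. Setting $V_k := |Z_k|^2 - 1$, the $V_k$ are independent, centered, and satisfy $\Exp e^{s V_k} = e^{-s}/(1-s)$ for $s<1$, whence $\log \Exp e^{s V_k} = -s - \log(1-s) \le C_0 s^2$ for $|s|\le \tfrac12$ and some absolute constant $C_0$. For real weights $a_k$ (take $a_k = \lambda_k$ when $\vM$ is Hermitian, so $S_n$ is real; otherwise run the argument on $\Real S_n$ with $a_k = \Real\lambda_k$, using $\sum_k a_k^2 \le \|\vM\|_F^2$ and $\max_k|a_k|\le\|\vM\|_{\op}$), independence gives, for all $|s| \le \tfrac{1}{2\|\vM\|_{\op}}$,
\begin{equation}
\log \Exp \exp\!\Big(s\sum_k a_k V_k\Big) = \sum_k \big(-s a_k - \log(1 - s a_k)\big) \le C_0\, s^2 \sum_k a_k^2 \le C_0\, s^2\,\|\vM\|_F^2 .
\notag
\end{equation}
Markov's inequality applied to $\exp\!\big(s\sum_k a_k V_k\big)$ then yields $\Prob(S_n - \Exp S_n \ge t) \le \exp\big(-st + C_0 s^2\|\vM\|_F^2\big)$ for every admissible $s$.

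Finally I would optimize $s$ over $0 < s \le \tfrac{1}{2\|\vM\|_{\op}}$. When $t \le C_0\|\vM\|_F^2/\|\vM\|_{\op}$ the unconstrained minimizer $s = t/(2C_0\|\vM\|_F^2)$ is admissible and gives the Gaussian exponent $-t^2/(4C_0\|\vM\|_F^2)$; in the complementary regime the boundary choice $s = 1/(2\|\vM\|_{\op})$ gives, after discarding the lower-order term, the exponential exponent $-t/(4\|\vM\|_{\op})$. Taking the worse of the two produces the $\min$ of the two quantities with a suitable numerical constant $C$, which is the assertion (a general variance $\sigma^2$ enters only through the rescaling $X_i = \sigma X_i'$, absorbed into the normalization). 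I expect the only delicate points to be bookkeeping: checking the quadratic bound on the log-moment-generating function uniformly for $|s a_k|\le \tfrac12$, and handling the non-Hermitian normal case by passing to $\Real S_n$ so that the real weights $a_k$ remain controlled by $\|\vM\|_F$ and $\|\vM\|_{\op}$. Everything else is a direct Chernoff computation once the spectral reduction is in place.
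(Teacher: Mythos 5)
Your proposal is correct, but it takes a genuinely different route from the paper's. The paper gives essentially a proof by citation: it invokes the original argument of Hanson and Wright~\cite{HW71}, notes that passing from real to complex Gaussians is routine, and observes that the single step in~\cite{HW71} that uses Hermiticity (their Lemma~5) remains valid for normal matrices. You instead exploit normality directly: the spectral decomposition $\vM = \vU \Lambda \vU^{\ast}$ combined with the unitary invariance of the standard complex Gaussian vector reduces $S_n$ to $\sum_k \lambda_k |Z_k|^2$ with $|Z_k|^2$ i.i.d.\ mean-one exponentials, after which an explicit Chernoff computation with the exponential moment generating function yields the mixed subexponential/subgaussian tail. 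Your bookkeeping is sound: the admissibility constraint $s \max_k |a_k| \le \tfrac12$ is exactly what produces the two regimes, the crossover at $t \asymp \|\vM\|_F^2/\|\vM\|_{\op}$ is handled correctly, and the identities $\max_k|\lambda_k| = \|\vM\|_{\op}$ and $\sum_k |\lambda_k|^2 = \|\vM\|_F^2$ hold precisely because $\vM$ is normal. What each approach buys: yours is elementary, self-contained, and gives explicit constants, but it leans crucially on Gaussian rotation invariance and on normality, so it would not extend to general subgaussian entries or non-normal $\vM$; the Hanson--Wright machinery the paper points to is what one needs in that generality, at the cost of an unspecified constant and a non-self-contained proof. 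Two small points worth tidying in your write-up: first, when $\vM$ is normal but not Hermitian, $S_n$ is complex-valued, so the one-sided event in the lemma only makes literal sense for $\Real S_n$ (your reading), and the absolute-value version actually used later in the proof of Lemma~\ref{th:normbound} requires a union bound over real and imaginary parts, i.e.\ a prefactor $2$; second, the appearance of $\sigma$ in the lemma's displayed bound is inconsistent with the stated ${\cal CN}(0,1)$ normalization, and your rescaling remark is the right way to reconcile it --- as the bound is applied in the paper, the exponent involves $\sigma^2\|\vM\|_{\op}$ and $\sigma^4\|\vM\|_F^2$.
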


\begin{proof}
%A simple calculation shows that $\Exp  S_n = \trace(\vM)$. 
The proof 
follows essentially the same steps as the proof of the main theorem
in~\cite{HW71}, which considers the case where $\vM$ is hermitian and
the $x_i$ are real-valued. Extending the $x_i$ to the complex case is
trivial, thus the only modification that needs to be addressed is the 
extension of $\vM$ from the hermitian to the normal case. But Lemma~5 
in~\cite{HW71} holds for normal matrices as well, therefore the lemma follows.
\end{proof}

\if 0

The following large deviation inequality for random variables
with ``mixed tails'' is due to Roman Vershynin~\cite{Ver11}.
\begin{lemma}
\label{le:mixedtails}
Let $x_1,\dots,x_n$ be mean-zero i.i.d.\ random variables with
\begin{equation}
\Prob\Big(|x_i| \ge u \big) \le \exp\big(-\min(u,\frac{u^2}{K})\big),
\qquad u \ge 0,
\label{%mixedtail}
\end{equation}
where $K>0$ is some parameter that may depend on the $x_i$ and on $n$.
Then
\begin{equation}
\Prob\Big(\big| \sum_{i=1}^{n} x_i\big| > u \big) \le
\exp\big(-\min(u,\frac{u^2}{nK})\big),
\qquad u \ge 0.
%\label{mixedtail1}
\end{equation}
\end{lemma}

\begin{proof}
There holds
\begin{gather*}
\Exp (x_i^p) \simeq \int \limits_{0}^{+\infty} x^p (e^{-x} + e^{-x^2/K}) \,
dx \simeq  
p^p + (Kp)^{p/2} \simeq
\begin{cases}
p^p        & \text{if $p > K$,} \\
(Kp)^{p/2} & \text{if $p \le K/2$.}
\end{cases}
\end{gather*}
Hence
\begin{gather*}
\Exp( e^{tx_i}) = 1 + t \Exp x_i + \sum_{p=2}^{\infty} 
\frac{t^p \Exp x_i^p}{p!} 
\simeq 1 + \sum_{p \le K} \frac{t^p (Kp)^{p/2}}{p^p} 
+ \sum_{p > K} \frac{t^p p^p}{p^p}  
\lesssim \exp(Kt^2), \qquad |t| \le c,
\end{gather*}
where $c$ is an absolute constant. 
Here we have used that $\Exp x_i = 0$ and the standard ways to estimate
$p!$ {\sc elaborate ...}.

Therefore
\begin{equation*}
\Prob \Big( |\sum_{i=1}^{n} x_i | > u \Big) \le 
e^{-tu} \Exp (e^{t \sum_{i=1}^{n} x_i}) = e^{-tu} \big(\Exp e^{t x_1}\big)^n
\lesssim \exp(-tu + Kt^2 n), \qquad |t| \le c,
\end{equation*}
We optimize in $t$: Let $t:= \frac{u}{2Kn}$ if $|t| \le c$, i.e., if
$u \le 2Knc$. Then
$$\Prob\Big( |\sum_{i=1}^{n} x_i | > u \Big) \le e^{-\frac{u^2}{Kn}}.$$
If $u > 2Knc$ then let $t:=c$, whence
$$\Prob\Big( |\sum_{i=1}^{n} x_i | > u \Big) \le e^{-u},$$
and the proof is complete.
\end{proof}
\fi 

For convenience we state the following version of Bernstein's inequality,
which will be used in the proof of Lemma~\ref{le:gaussianinner}.
\begin{theorem}[See e.g.~\cite{VW96}]
\label{th:bernstein}
Let $X_1,\dots,X_n$ be independent random variables with zero mean such
that
\begin{equation*}
\Exp |X_i|^p \le \frac{1}{2} p! K^{p-2} v_i, 
    \qquad \text{for all $i=1,\dots,n; p\in\NN, p\ge 2$},
%\label{eq:bernstein1}
\end{equation*}
for some constants $K>0$ and $v_i > 0, i=1,\dots,n$. Then, for all $t>0$
\begin{equation}
\Prob\Big( \big|\sum_{i=1}^{n} X_i | \ge t \big) \le 
   2 \exp\Big( - \frac{t^2}{2v + Kt}\Big),
\label{eq:bernstein2}
\end{equation}
where $v:=\sum_{i=1}^{n} v_i$.
\end{theorem}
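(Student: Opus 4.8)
The plan is to use the classical Cram\'er--Chernoff exponential-moment method: control each summand's moment generating function through the factorial moment bound, multiply over the independent summands, and then optimize the Chernoff parameter. Since $\Exp X_i = 0$, expanding the exponential and discarding the vanishing first-order term gives, for every $0 < \lambda < 1/K$,
\begin{equation}
\Exp e^{\lambda X_i} = 1 + \sum_{p=2}^{\infty} \frac{\lambda^p \Exp X_i^p}{p!}
\le 1 + \frac{v_i \lambda^2}{2}\sum_{p=2}^{\infty}(\lambda K)^{p-2}
= 1 + \frac{v_i \lambda^2}{2(1-\lambda K)}.
\notag
\end{equation}
The decisive feature of the hypothesis $\Exp |X_i|^p \le \frac{1}{2} p! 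K^{p-2} v_i$ is that its factorial exactly cancels the $1/p!$ from the exponential series (using $\Exp X_i^p \le \Exp |X_i|^p$ and $\lambda^p>0$), so the remainder collapses to a geometric series that converges precisely when $\lambda K < 1$.

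Next I would invoke $1+x \le e^x$ together with the independence of the $X_i$ to factorize the joint moment generating function,
\begin{equation}
\Exp \exp\Big(\lambda \sum_{i=1}^{n} X_i\Big)
= \prod_{i=1}^{n}\Exp e^{\lambda X_i}
\le \exp\Big(\frac{v \lambda^2}{2(1-\lambda K)}\Big),
\qquad v := \sum_{i=1}^{n} v_i,
\notag
\end{equation}
and then apply the exponential Markov inequality: for each $\lambda \in (0,1/K)$,
\begin{equation}
\Prob\Big(\sum_{i=1}^{n} X_i \ge t\Big)
\le e^{-\lambda t}\, \Exp \exp\Big(\lambda \sum_{i=1}^{n} X_i\Big)
\le \exp\Big(-\lambda t + \frac{v \lambda^2}{2(1-\lambda K)}\Big).
\notag
\end{equation}

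The final step is to optimize the exponent. Choosing $\lambda = t/(v+Kt)$, which indeed lies in $(0,1/K)$, reduces the bracketed expression to $-t^2/\big(2(v+Kt)\big)$ and yields the one-sided tail bound. Running the identical argument for the variables $-X_1,\dots,-X_n$ controls the lower tail, and a union bound over the two excess events produces the factor $2$ and the two-sided estimate~\eqref{eq:bernstein2}. I expect the only real difficulty to be the bookkeeping of the numerical constant in the denominator: this elementary optimization delivers $2(v+Kt)$, and recovering the precise form stated here (as well as the reduction to real-valued summands, should the $X_i$ be complex, by splitting into real and imaginary parts) is exactly what the cited source~\cite{VW96} supplies, the downstream applications being insensitive to the precise value of this constant.
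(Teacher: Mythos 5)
Your proposal is the standard Cram\'er--Chernoff argument and it is correct as far as it goes: the factorial-moment hypothesis gives $\Exp e^{\lambda X_i} \le 1+\frac{v_i\lambda^2}{2(1-\lambda K)}$ for $0<\lambda<1/K$, independence and $1+x\le e^x$ give $\Exp\exp\bigl(\lambda\sum_i X_i\bigr)\le \exp\bigl(\frac{v\lambda^2}{2(1-\lambda K)}\bigr)$, and the choice $\lambda=t/(v+Kt)$ yields the one-sided tail $\exp\bigl(-\frac{t^2}{2(v+Kt)}\bigr)$, after which reflection and a union bound double the constant. The paper does not prove this theorem at all --- it only cites~\cite{VW96} --- and the proof of the corresponding result there (Lemma~2.2.11) is exactly the argument you wrote, so your route coincides with the intended one.

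The one genuine problem is your closing step, where you trust that the cited source will upgrade your denominator $2(v+Kt)$ to the printed $2v+Kt$. It will not: the bound in~\cite{VW96} is $2\exp\bigl(-\frac{t^2}{2(v+Kt)}\bigr)$, i.e.\ exactly what you derived, and no proof can do better, because~\eqref{eq:bernstein2} as printed is false. Indeed, take $X_i=Y_i-1$ with $Y_i$ i.i.d.\ standard exponential; then $\Exp|X_i|^p\le e^{-1}p!+\frac{1}{p+1}\le\frac12\,p!$ for all $p\ge2$, so the hypothesis holds with $K=1$, $v_i=1$, while Cram\'er's theorem gives $\Prob\bigl(\sum_{i=1}^n X_i\ge xn\bigr)=\exp\bigl(-n(x-\log(1+x))+o(n)\bigr)$. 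Since $x-\log(1+x)<\frac{x^2}{2+x}$ for every $x>0$ (the difference vanishes at $x=0$ and has derivative $-x^2/\bigl((1+x)(x+2)^2\bigr)<0$), this probability exceeds the printed bound $2\exp\bigl(-n\frac{x^2}{2+x}\bigr)$ once $n$ is large. So the printed denominator is a typo for $2(v+Kt)$. This is harmless for the paper and in fact confirms your version: in the only place the theorem is used, Lemma~\ref{le:gaussianinner}, the choice $K=\sigma^2$, $v_i=\sigma^4/2$ produces the stated tail $2\exp\bigl(-\frac{t^2}{\sigma^2(n\sigma^2+2t)}\bigr)$, and $\sigma^2(n\sigma^2+2t)=2(v+Kt)$ is precisely your denominator, not $2v+Kt=\sigma^2(n\sigma^2+t)$. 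You should therefore state and prove the theorem with $2(v+Kt)$ and drop the appeal to the reference for the sharper constant; your worry about complex summands is also well placed, since Lemma~\ref{le:gaussianinner} applies the bound to complex variables, where a reduction to real sums is indeed needed.
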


We also need the following deviation inequality for unbounded
random variables. It is a complex-valued and slightly sharpened 
version of Lemma~6 in \cite{HBR10}, the better constant will be useful
when we apply Lemma~\ref{le:gaussianinner} in the proof of
Lemma~\ref{th:coherencebound}.
\begin{lemma}
\label{le:gaussianinner}
Let $X_i$ and $Y_i$, $i=1,\dots,n$, be sequences of i.i.d.\ complex
Gaussian random variables with variance $\sigma$. Then,
\begin{equation}
\Prob\Big( \big|\sum_{i=1}^{n} \bar{X}_i Y_i \big| > t \Big) \le
2 \exp \big(-\frac{t^2}{\sigma^2 (n \sigma^2 + 2t)}\big).
\label{eq:gaussianinner}
\end{equation}
\end{lemma}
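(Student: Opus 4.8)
The plan is to apply the Bernstein inequality of Theorem~\ref{th:bernstein} to the summands $Z_i := \bar X_i Y_i$. First I would record the two structural facts that make this possible: since $X_i$ and $Y_i$ are independent and mean-zero, each $Z_i$ is centered ($\Exp Z_i = \Exp\bar X_i\,\Exp Y_i = 0$), and the $Z_i$ are mutually independent across $i$. Thus $\sum_i Z_i$ is a centered sum of independent (sub-exponential) terms, which is exactly the setting of Theorem~\ref{th:bernstein}. The whole argument then reduces to (a) verifying the moment hypothesis $\Exp|Z_i|^p \le \frac12 p!\,K^{p-2}v_i$ for suitable $K$ and $v_i$, and (b) translating the resulting real-line tail bound into a bound on the complex modulus $\big|\sum_i Z_i\big|$.

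For step (a) I would use independence to factor $\Exp|Z_i|^p = (\Exp|X_i|^p)(\Exp|Y_i|^p) = (\Exp|X_i|^p)^2$, and then insert the closed form for the absolute moments of a complex Gaussian (equivalently, moments of an exponential variable, giving $\Exp|X_i|^p = \Gamma(p/2+1)\sigma^p$, so that $\Exp|Z_i|^p = \Gamma(p/2+1)^2\sigma^{2p}$). The task is to dominate this by $\frac12 p!\,K^{p-2}v_i$ with $v_i$ the second moment $\Exp|Z_i|^2 = \sigma^4$. This is where the factorial-versus-Gamma comparison enters: log-convexity of $\Gamma$ gives $\Gamma(p/2+1)^2 \le \Gamma(p+1)=p!$, and checking the remaining inequality term by term (it is tight at $p=2$) fixes an admissible scale $K \asymp \sigma^2$; this confirms that each product is sub-exponential in the Bernstein sense.

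The sharpened constant — the factor $n\sigma^2$ rather than $2n\sigma^2$ inside the exponent, which is the improvement over the real-valued Lemma~6 of~\cite{HBR10} and the reason the better constant is needed in Lemma~\ref{th:coherencebound} — comes from decomposing each complex product into its real and imaginary parts before applying Bernstein. Writing $X_i = X_i^{(1)} + \imath X_i^{(2)}$ and similarly for $Y_i$, the real part $\Real \sum_i \bar X_i Y_i$ is a sum of $2n$ independent products $U_j V_j$ of real Gaussians, each factor of variance $\sigma^2/2$, so each product has second moment $\sigma^4/4$ and the aggregate variance proxy is $v = \sum_{j=1}^{2n} v_j = n\sigma^4/2$; feeding $2v = n\sigma^4$ and $K = 2\sigma^2$ into Theorem~\ref{th:bernstein} produces precisely the denominator $\sigma^2(n\sigma^2 + 2t)$.

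I expect the main obstacle to be step (b) together with the bookkeeping of constants: one must pass from the real (and, symmetrically, imaginary) component bounds back to the modulus $\big|\sum_i \bar X_i Y_i\big|$ without degrading either the prefactor or the Gaussian coefficient $n\sigma^4$, and simultaneously confirm that the chosen $K$ and $v_j$ satisfy the moment hypothesis for every $p \ge 2$ and not merely asymptotically. These two pieces — the uniform-in-$p$ moment estimate and the clean complex-modulus reduction that preserves the sharp variance constant — are the delicate points; the remainder is a direct substitution into Theorem~\ref{th:bernstein}.
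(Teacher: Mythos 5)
There is a genuine gap, and it is exactly where you flagged it: step (b) cannot be completed in the form you propose. Your decomposition gives the sharp denominator $n\sigma^4+2\sigma^2 t$ only for the one-dimensional projections $\Real \sum_i \bar X_i Y_i$ and $\Imag \sum_i \bar X_i Y_i$. To pass to the modulus you must use something like $\{|S|>t\}\subset\{|\Real S|>t/\sqrt2\}\cup\{|\Imag S|>t/\sqrt2\}$, and the union bound then yields prefactor $4$ and numerator $t^2/2$ in the exponent --- i.e., you give back precisely the factor of two that the decomposition gained (a finite net of directions, or thresholding at $t/2$, loses constants in the same way). So your argument proves a correct and useful inequality, but not \eqref{eq:gaussianinner} with its stated constants. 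Your step (a), by contrast, is fine and is actually done more carefully than in the paper: you treat all orders $p\ge 2$ (not only even ones), and you respect the constraint, forced by the $p=2$ case of the hypothesis of Theorem~\ref{th:bernstein}, that the variance proxy must satisfy $v_i \ge \Exp|\bar X_i Y_i|^2=\sigma^4$; as you note, this route only gives denominator $2n\sigma^4 + Kt$.

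The paper's own proof takes the route you rejected: it applies Theorem~\ref{th:bernstein} directly to the complex summands $\bar X_i Y_i$ (reading $|\cdot|$ in \eqref{eq:bernstein2} as the complex modulus), with $K=\sigma^2$ and $v_i=\sigma^4/2$; that factor $\tfrac12$ in $v_i$ is the sole source of the claimed sharp constant. Your misgiving about this is well founded: the Bernstein hypothesis at $p=2$ forces $v_i\ge \sigma^4$, and correspondingly the paper's displayed moment chain, in essence $(p!)^2\sigma^{4p}\le \tfrac14 (2p)!\,\sigma^{4p}$, requires $\binom{2p}{p}\ge 4$ and therefore fails at the lowest order $2p=2$ (it also verifies only even orders, and applying the real-variable Bernstein inequality to complex summands is itself unjustified). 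So neither your proposal nor the paper's argument, as written, actually reaches the constant in \eqref{eq:gaussianinner}; what both establish honestly is the bound with $2n\sigma^4$ (equivalently, an extra factor $2$ multiplying $n\sigma^2$) in the denominator, which would still suffice, with adjusted constants, for its use in Lemma~\ref{th:coherencebound}. A proof of the constant as stated needs a different device, e.g.\ conditioning on $(X_i)$: given the $X_i$, the sum $S=\sum_i \bar X_i Y_i$ is exactly complex Gaussian with $\Exp\big(|S|^2 \mid X\big)=\sigma^2\sum_i|X_i|^2$, hence $\Prob\big(|S|>t \mid X\big)=\exp\big(-t^2/(\sigma^2\sum_i|X_i|^2)\big)$, and one concludes by combining this exact tail with a concentration estimate for $\sum_i |X_i|^2$ around $n\sigma^2$.
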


\begin{proof}
In order to apply Bernstein's inequality, we need
to compute the moments $\Exp |X_i Y_i|^p$.
Since $X_i$ and $Y_i$ are independent, there holds
\begin{equation*}
\Exp (|X_i Y_i|^p ) = \Exp (|X_i|^p ) \Exp (|Y_i|^p ) =(\Exp (|X_i|^p ))^2.
\end{equation*}
The moments of $X_i$ are well-known:
\begin{equation*}
\Exp |X_i|^{2p} = p! \, \sigma^{2p},
\end{equation*}
hence
\begin{equation*}
(\Exp |X_i|^{2p})^2 = (2p!)^2 (\sigma^{2p})^2 \le 
\frac{1}{4} (2p)!  (\sigma^{2})^{2p} \le
\frac{1}{2} (2p)!  (\sigma^{2})^{2p-2} \frac{(\sigma^{2})^2}{2}.
\end{equation*}
We apply Bernstein's inequality~\eqref{eq:bernstein2} with $K= \sigma^2$ and 
$v_i = \frac{(\sigma^{2})^2}{2}, i=1,\dots,n$ and
obtain~\eqref{eq:gaussianinner}.
\end{proof}

\section*{Appendix B} \label{s:appendixB}

We consider a general linear system of equations $\vPsi \vx = \vy$,
where $\vPsi \in \CC^{n \times m}$, $\vx \in \CC^m$
and $n \le m$. We introduce the following generic $K$-sparse model:
\begin{itemize}
\item The support $I \subset \{1,\dots,m\}$ of the $K$ nonzero
coefficients of $\vx$ is selected uniformly at random.
\item The non-zero entries of $\sgn(\vx)$ form a Steinhaus sequence, i.e.,
$\sgn(\vx_k):=\vx_k/|\vx_k|, k\in I,$ is a complex random variable that is 
uniformly distributed on the unit circle.
\end{itemize}

The following theorem is a slightly extended version of 
Theorem~1.3 in~\cite{CP08}. 

\begin{theorem}
\label{th:CP}
Given $\vy = \vPsi \vx + \vw$, where $\vmat$ has all unit-$\ell_2$-norm 
columns, $\vx$ is drawn from the generic $K$-sparse model and 
$\vw_i \sim {\cal CN}(0,\sigma^2)$. Assume that
\begin{equation}
\mu(\vPsi) \le \frac{C_0}{\log m},
\label{coherenceproperty}
\end{equation}
where $C_0 >0$ is a constant independent of $n,m$.
Furthermore, suppose 
\begin{equation}
K \le \frac{c_0 m}{\|\vPsi \|_{\op}^2 \log m}
\label{lassosparsity}
\end{equation}
for some constant $c_0 > 0$ and that
\begin{equation}
\underset{k\in I}{\min}\, |\vx_k| > 8 \sigma \sqrt{2 \log m}.
\label{amplitudeproperty}
\end{equation}
Then the solution $\hat{\vx}$ to the debiased lasso computed with
$\lambda = 2 \sigma \sqrt{2 \log m}$
%\begin{equation}
%\underset{\vx}{\min}\,\, \frac{1}{2}\|\vy - \vPsi \vx\|_2^2
%\label{lasso1}
%\end{equation}
obeys
\begin{equation}
\label{supportbound}
\supp (\hat{\vx}) = \supp (\vx),
\end{equation}
and
\begin{equation}
\frac{\|\hat{\vx} - \vx \|_2}{\|\vx\|_2} 
    \le \frac{\sigma \sqrt{3 n}}{\|\vy\|_2}
\label{condbound}
\end{equation}
with probability at least
\begin{equation}
%\label{probbound1}
1 - 2m^{-1}(2\pi \log m + Km^{-1}) - {\cal O}(m^{-2 \log 2}).
\end{equation}
\end{theorem}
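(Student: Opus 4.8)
The plan is to read this as the complex-scalar, debiased version of Theorem~1.3 in~\cite{CP08} and to follow that proof, isolating the few spots where the passage from real signs to Steinhaus signs and from real to complex Gaussian noise forces a different concentration inequality. The engine is a dual-certificate (KKT) analysis of the lasso~\eqref{lassoD}. A vector $\hat{\vx}$ is a minimizer precisely when there is a subgradient $\vg$ of $\|\cdot\|_1$ with $\vPsi^{\ast}(\vPsi\hat{\vx}-\vy)+\lambda\vg=\mathbf{0}$, where $\vg_k=\sgn(\hat{\vx}_k)$ on $\supp(\hat{\vx})$ and $|\vg_k|\le 1$ elsewhere; in the complex setting $\sgn(z)=z/|z|$ is exactly the subgradient of $|z|$ at $z\ne 0$, so the bookkeeping is unchanged. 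First I would solve the lasso restricted to the columns $\vPsi_I$ indexed by the true support $I$, obtaining a candidate supported on $I$, and then certify that this candidate is the unique global minimizer by exhibiting a dual vector $\vg$ that is strictly feasible off $I$. This reduces the whole argument to two events: good conditioning of $\vPsi_I$, and strict dual feasibility $\max_{j\notin I}|\vg_j|<1$.

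The two probabilistic pillars are exactly those of~\cite{CP08}. First, under the uniformly random support model, the coherence bound~\eqref{coherenceproperty}, and the sparsity bound~\eqref{lassosparsity}, the Gram matrix $\vPsi_I^{\ast}\vPsi_I$ has all its eigenvalues in $[\tfrac12,\tfrac32]$ except on an event of probability $O(m^{-2\log 2})$; this is the random-subdictionary conditioning estimate, whose proof (a moment bound for the off-diagonal part of $\vPsi_I^{\ast}\vPsi_I$ via decoupling) is insensitive to working over $\CC$. Second, I would bound the off-support dual coordinates, which split into a Gaussian-noise part $\vPsi_{I^c}^{\ast}\vw$ and a sign-driven part built from $\sgn(\vx_I)$. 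For the noise part, $\max_{j\notin I}|\langle\vPsi_j,\vw\rangle|$ is the maximum of $m$ complex Gaussians of variance $\sigma^2$, so the choice $\lambda=2\sigma\sqrt{2\log m}$ keeps it below $\lambda/2$ with high probability, and the sharp Gaussian-maximum tail is the origin of the $2\pi\log m$ factor in the stated probability. For the sign part, the Steinhaus randomness of $\sgn(\vx_I)$ replaces the Rademacher/Hoeffding step of~\cite{CP08}, and I would control it with the complex tail bounds of Appendix~A (e.g.\ Lemma~\ref{le:gaussianinner}), again landing below $\lambda/2$. Together these give $\max_{j\notin I}|\vg_j|<1$, hence $\supp(\hat{\vx})\subseteq I$; the amplitude condition~\eqref{amplitudeproperty} then ensures that the perturbation of each on-support coordinate (a noise term of size $\lesssim\sigma\sqrt{\log m}$ plus a shrinkage bias of size $\lesssim\lambda$) never reaches $|\vx_k|$, so no true entry is lost and $\supp(\hat{\vx})=\supp(\vx)$.

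With the support correctly identified, the error bound~\eqref{condbound} comes from the least-squares debiasing step. On the event $\supp(\hat{\vx})=I$ the debiased estimate satisfies $\hat{\vx}_I-\vx_I=(\vPsi_I^{\ast}\vPsi_I)^{-1}\vPsi_I^{\ast}\vw$, so $\|\hat{\vx}-\vx\|_2^2=\vw^{\ast}\vPsi_I(\vPsi_I^{\ast}\vPsi_I)^{-2}\vPsi_I^{\ast}\vw\le\lambda_{\max}\big((\vPsi_I^{\ast}\vPsi_I)^{-1}\big)\,\|\vw\|_2^2$. On the same conditioning event $\lambda_{\max}((\vPsi_I^{\ast}\vPsi_I)^{-1})\le 2$, while Lemma~\ref{le:concentration1} gives $\|\vw\|_2\le\sqrt{3/2}\,\sigma\sqrt{n}$ with high probability; multiplying yields $\|\hat{\vx}-\vx\|_2\le\sigma\sqrt{3n}$, from which~\eqref{condbound} follows after normalizing.

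The genuinely delicate part, and the step I expect to be the main obstacle, is the pair of estimates in the second paragraph: the conditioning of the random submatrix $\vPsi_I$ and, above all, strict dual feasibility, which requires sharp control of $\vPsi_{I^c}^{\ast}\vPsi_I(\vPsi_I^{\ast}\vPsi_I)^{-1}$ under the joint randomness of the support and the signs. These are the technical heart of~\cite{CP08}; everything else is routine. The only real departure from~\cite{CP08} is replacing the real sign and moment estimates by complex analogues, but since a Steinhaus sequence is still mean-zero, unit-modulus, and independent across coordinates, the same moment computations go through with at most a change of numerical constants, so no new ideas beyond those of~\cite{CP08} and Appendix~A are required.
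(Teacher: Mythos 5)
Your proposal is correct and follows essentially the same route as the paper's own proof, which likewise treats the theorem as a complex adaptation of Theorem~1.3 of \cite{CP08} (replacing the real sign function by $z/|z|$ and inner products by their real parts in the dual-certificate argument) and obtains \eqref{condbound} from the same least-squares debiasing step, using the side result of the proof of Theorem~3.2 in \cite{CP08} that the eigenvalues of $\vPsi_I^{\ast}\vPsi_I$ lie in $[1/2,3/2]$. The one cosmetic difference is the final step: the paper substitutes $\kappa(\vPsi_I)\le\sqrt{3}$ into the standard relative least-squares perturbation bound (eq.~(5.8.11) of \cite{HJ90}), which produces the $\|\vy\|_2$-denominator form of \eqref{condbound} directly, whereas your absolute bound $\|\hat{\vx}-\vx\|_2\le\sigma\sqrt{3n}$ ``normalizes'' to a bound with $\|\vx\|_2$ rather than $\|\vy\|_2$ in the denominator, so that last conversion would need the extra (and not generally valid) inequality $\|\vy\|_2\le\|\vx\|_2$.
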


\begin{proof}
The paper~\cite{CP08} treats only the real-values case. However it 
is not difficult to see that the results by Cand{\`e}s and Plan can be 
extended to the complex setting if their definition of the sign-function
is replaced by~\eqref{sgn} and consequently their generic sparse model
is replaced by the generic sparsity model introduced in the beginning of
this appendix. The proofs of the theorems in~\cite{CP08} can
then be easily adapted to the complex case via some straightforward
modifications, such as replacing in many steps $\langle \cdot,\cdot
\rangle$ by its real part, $\Real \langle \cdot,\cdot \rangle$ and
replacing certain scalar quantities by its conjugate analogs.
To give a concrete example of such a modification, consider
(in the notation of~\cite{CP08}) the inequality right before 
eq.(3.10) in~\cite{CP08}, 
$$|\hat{\beta}_i| = |\beta_i + h_i| \ge |\beta_i| + \sgn(\beta_i)h_i.$$
This inequality needs to be replaced by its complex counterpart
$$|\hat{\beta}_i| = |\beta_i + h_i| \ge |\beta_i| +
\Real(\sgn(\beta_i)\overline{h_i}).$$ 
By carrying out these easy modifications (the details of which are
left to the reader) we can readily establish~\eqref{supportbound}
analogous to (1.11) of Theorem~1.3 in~\cite{CP08}.

Once we have recovered the support of $\vx$, call it $I$, we can solve 
for the coefficients of $\vx$ by solving the standard least
squares problem $\min \|\vA_I \vx_I - \vy\|_2$, where $\vA_I$ is tbe
submatrix of $\vA$ whose columns correspond to the support set $I$,
and similarly for $\vx_I$.
Statement~\eqref{condbound} follows by noting that the proof
of Theorem~3.2 in~\cite{CP08} yields as side result that 
with high probability the eigenvalues of any submatrix $\vA_I^{\ast} \vA_I$ 
with $|I| \le K$ are contained in the interval $[1/2, 3/2]$, 
which of course implies that $\kappa(\vA_I) \le \sqrt{3}$. The statement 
follows now by substituting this bound into the standard error 
bound, eq.~(5.8.11) in~\cite{HJ90}.
\end{proof}

\section*{Acknowledgements} \label{s:ack}
T.S.\ wants to thank Sasha Soshnikov for helpful discussions on random matrix
theory and Haichao Wang for a careful reading of the manuscript.

%\bibliographystyle{plain}
%\bibliography{radar,sparse,mathbook,tf-zentral,csrefs}

%%%%%%%%%%%%%%%%%%%%%%%%%%%%%%%%%%%%%%%%%%%%%%%%%%%%%%%%%%%%%%%%%%%%%%
\end{document}